\documentclass[twoside,11pt]{article}

\usepackage{blindtext}

%

%
%
%

\usepackage[utf8]{inputenc}
\usepackage[T1]{fontenc}
\usepackage{lmodern}
\usepackage{hyperref}
\usepackage{url}            
\usepackage{booktabs}       
\usepackage{amsfonts}       
\usepackage[numbers]{natbib}
\usepackage{nicefrac}       
\usepackage{algorithm,algpseudocode}
\usepackage{amsfonts}
\usepackage{multicol}
\usepackage{amsmath}
\usepackage{amsthm}
\usepackage{amssymb}
\usepackage[title]{appendix}
\usepackage{bm}
\usepackage{courier}
\usepackage[usenames,dvipsnames]{color}
\usepackage{enumitem}
\usepackage{graphicx}
\usepackage[margin=0.9in]{geometry}
\usepackage{url}
\usepackage{subfigure}
\usepackage{multirow}
\usepackage[dvipsnames]{xcolor}
\usepackage{epsfig}

\hypersetup{
	colorlinks,
	linkcolor={red!80!black},
	citecolor={blue!50!black},
	urlcolor={blue!80!black}
}




\title{
On the Convergence of Inexact Predictor-Corrector Methods for Linear Programming
}

\input{./Definitions}

\graphicspath{ {./figures/} }
\DeclareGraphicsExtensions{.pdf, .jpeg, .png}

\allowdisplaybreaks

\begin{document}

\title{Faster Randomized Interior Point Methods for Tall/Wide Linear Programs
}
\date{}
\author{Agniva Chowdhury\footnote{\scriptsize Computer Science and Mathematics Division, Oak Ridge National Laboratory, TN, USA,\,\url{chowdhurya@ornl.gov}.}
      \and
Gregory Dexter\footnote{\scriptsize Department of Computer Science, Purdue University,
West Lafayette, IN, USA, \url{{gdexter, pdrineas}@purdue.edu}.}
      \and
Palma London\footnote{\scriptsize Operations Research and Information Engineering, Cornell University, Ithaca, NY, USA, \url{plondon@cornell.edu}.}
  \and
  Haim Avron\footnote{\scriptsize School of Mathematical Sciences, Tel Aviv University, Tel Aviv, Israel, \url{haimav@tauex.tau.ac.il}.}
\and
  Petros Drineas\footnotemark[2]}

\maketitle






\vspace{-5mm}
\begin{abstract}
	 Linear programming (LP) is an extremely useful tool which has been successfully applied to solve various problems in a wide range of areas, including operations research, engineering, economics, or even more abstract mathematical areas such as combinatorics. It is also used in many machine learning applications, such as $\ell_1$-regularized SVMs, basis pursuit, nonnegative matrix factorization, etc.  Interior Point Methods (IPMs) are one of the most popular methods to solve LPs both in theory and in practice. Their underlying complexity is dominated by the cost of solving a system of linear equations at each iteration.  In this paper, we consider both \emph{feasible} and \emph{infeasible} IPMs for the special case where the number of variables is much larger than the number of constraints. Using tools from Randomized Linear Algebra, we present a preconditioning technique that, when combined with the iterative solvers such as Conjugate Gradient or Chebyshev Iteration, provably guarantees that IPM algorithms (suitably modified to account for the error incurred by the approximate solver), converge to a feasible, approximately optimal solution, without increasing their iteration complexity. Our empirical evaluations verify our theoretical results on both real-world and synthetic data.

\end{abstract}

\section{Introduction}\label{sec:intro}

Linear programming (LP) is one of the most useful tools available to theoreticians and practitioners throughout science and engineering. It has been extensively used to solve various problems in a wide range of areas, including operations research, engineering, economics, or even in more abstract mathematical areas such as combinatorics. In machine learning and numerical optimization, LP appears in numerous settings, including $\ell_1$-regularized SVMs~\citep{zhu20041}, basis pursuit (BP)~\citep{yang2011alternating}, sparse inverse covariance matrix estimation (SICE)~\citep{yuan2010high}, the nonnegative matrix factorization (NMF)~\citep{recht2012factoring}, MAP inference~\citep{meshi2011alternating},  adversarial deep learning~\citep{weng2018towards,wong2018provable} etc. Not surprisingly, designing and analyzing LP algorithms is a topic of paramount importance in computer science and applied mathematics.

\textcolor{black}{The first algorithm for general-purpose LPs was the famous \textit{simplex algorithm}, proposed by~\citep{dantzig1951maximization}. It worked well in practice, but was shown to have exponential worst-case running times~\citep{klee1972good}. The first polynomial time algorithm for general LPs was the \emph{ellipsoid method}~\citep{khachiyan1979polynomial}, which is rather slow in practice compared to the simplex algorithm. This motivated further research on LP algorithms which are efficient in both theory and practice.} One of the most successful paradigms for solving LPs is the family of Interior Point Methods (IPMs), pioneered by Karmarkar in the mid 1980s~\citep{karmarkar84}. Path-following IPMs (also called central-path algorithms) and, in particular, long-step path following IPMs, are among the most practical approaches for solving linear programs. \textcolor{black}{See Section~\ref{sxn:comparison} for a detailed overview of recent work on path-following IPMs.}

Consider the standard form of the primal LP problem:
\begin{flalign}
	\min\,\cbb^\ts\xb\,,\text{ subject to }\Ab\xb=\bb\,,\xb\ge \zero\,,\label{eq:primal}
\end{flalign}
where $\Ab\in\RR{\dimone}{\dimtwo}$, $\bb\in\R{\dimone}$, and $\cbb\in\R{\dimtwo}$ are the inputs, and $\xb\in\R{\dimtwo}$ is the vector of the primal variables. The associated dual problem is
\begin{flalign}
	\max\,\bb^\ts\yb\,,\text{ subject to }\Ab^\ts\yb+\sbb=\cbb\,,\sbb\ge\zero\,,\label{eq:dual}
\end{flalign}
where $\yb\in\R{\dimone}$ and $\sbb\in\R{\dimtwo}$ are the vectors of the dual and slack variables respectively.
Triplets $(\xb, \yb, \sbb)$ that uphold both eqns. \eqref{eq:primal} and \eqref{eq:dual} are called \emph{primal-dual solutions}.  Path-following IPMs typically
converge towards a primal-dual solution by operating as follows: given the current iterate $(\xb^{k},\yb^{k},\sbb^{k})$, they compute the Newton search direction $(\Delta\xb,\Delta\yb,\Delta\sbb)$ and update the current iterate by making a step towards the search direction. 

To compute the search direction, one standard approach~\citep{NW06} involves solving the
\emph{normal equations}\footnote{Another widely used approach is to solve the augmented system~\citep{NW06}. This approach is less relevant for this paper.}:
\begin{flalign}
	\Ab\Db^2\Ab^\ts\Delta\yb=~&\pb.\label{eq:normal}
\end{flalign}
Here, $\Db = \Xb^{1/2}\Sb^{-1/2}$ is a diagonal matrix, $\Xb,\Sb\in\RR{\dimtwo}{\dimtwo}$ are diagonal matrices whose $i$-th diagonal entries are equal to $\xb_i$ and $\sbb_i$, respectively, and $\pb \in \R{\dimone}$ is a vector whose exact definition is given in eqn.~(\ref{eqn:pdef})\footnote{The superscript $k$ in eqn.~(\ref{eqn:pdef}) simply indicates iteration count and is omitted here for notational simplicity.}. Given $\Delta\yb$, computing $\Delta \sbb$ and $\Delta \xb$ only involves matrix-vector products.

The core computational bottleneck in IPMs is the need to solve the linear system of eqn.~(\ref{eq:normal}) at each iteration. This leads to two key challenges: first, for high-dimensional matrices $\Ab$, solving the linear system is computationally prohibitive. Most implementations of IPMs use a \emph{direct solver}; see Chapter 6 of~\citep{NW06}. However, if $\Ab\Db^2\Ab^\ts$ is large and dense, direct solvers are computationally impractical. If $\Ab\Db^2\Ab^\ts$ is sparse, specialized direct solvers have been developed, but these do not apply to many LP problems, 
especially those arising in machine learning applications,
due to irregular sparsity patterns. 
%
%
Second, an alternative to direct solvers is the use of iterative solvers, but the situation is further complicated since $\Ab\Db^2\Ab^\ts$ is typically ill-conditioned. Indeed, as IPM algorithms approach the optimal primal-dual solution, the diagonal matrix $\Db$ becomes ill-conditioned, which also results in the matrix $\Ab\Db^2\Ab^\ts$ becoming ill-conditioned. Additionally, using approximate solutions for the linear system of eqn.~(\ref{eq:normal}) causes certain invariants, which are crucial for guaranteeing the convergence of IPMs, to be violated; see Section~\ref{sxn:contrib} for details.

In this paper, we address the aforementioned challenges, for the special case where $m \ll n$, 
i.e., the number of constraints is much smaller than the number of variables; see Section~\ref{sxn:extensions} for a generalization. This is a common setting in 
many applications 
of LP solvers. For example, in machine learning, 
$\ell_1$-SVMs and basis pursuit problems often exhibit such structure when the number of available features ($n$)  is larger than the number of objects ($m$). Indeed, this setting has been of interest in recent work on LPs \citep{Donoho05,Bienstock06,LondonAAAI2018}.  

For simplicity of exposition, we also assume that the constraint matrix $\Ab$ has full rank, equal to $m$. First, we propose and analyze two Krylov subspace-based solvers, namely,  preconditioned Conjugate Gradient (CG) and preconditioned Chebyshev iteration for the normal equations of eqn.~(\ref{eq:normal}), using matrix sketching constructions from the Randomized Linear Algebra (RLA) literature. We develop a preconditioner for $\Ab\Db^2\Ab^\ts$
%
%
using matrix sketching which allows us to prove strong convergence guarantees for the \textit{residual}
of both the CG and the Chebyshev iteration. 
%
Second, building upon the work of~\citet{Mon03}, we propose and analyze a provably accurate long-step IPM algorithm. Our framework works for both \emph{feasible} and \emph{infeasible} starting points. The proposed IPM solves the normal equations using iterative solvers. 
We note that a non-trivial concern is that the use of iterative solvers and matrix sketching tools implies that the normal equations at each iteration will be solved only approximately. In our proposed IPM framework, we develop a novel way to \textit{correct} for the error induced by the approximate solution in order to guarantee convergence. Importantly, this correction step is relatively computationally light, \textcolor{black}{unlike a similar step previously proposed by~\citet{Mon03}, which works similarly, but is computationally inefficient}. Third, we empirically show that our algorithm performs well in practice. We consider solving LPs that arise from $\ell_1$-regularized SVMs and test them on a variety of synthetic and real-world data sets. Several extensions of our work are discussed in Section~\ref{sxn:extensions}.

\subsection{Our contributions}\label{sxn:contrib}

Our point of departure in this work is the introduction of preconditioned, iterative solvers for solving eqn.~(\ref{eq:normal}). Preconditioning is used to address the ill-conditioning of the matrix $\Ab\Db^2\Ab^\ts$. Iterative solvers allow the computation of approximate solutions using only matrix-vector products while avoiding matrix inversion, Cholesky or LU factorizations, etc. A preconditioned formulation of eqn.~(\ref{eq:normal}) is:
\begin{flalign}
	\Qb^{-1}\Ab\Db^2\Ab^\ts\Delta\yb=\Qb^{-1}\pb,
	\label{eq:precond}
\end{flalign}
where $\Qb \in \mathbb{R}^{m \times m}$ is the preconditioning matrix; $\Qb$ should be easily invertible (see~\citep{axelsson1984finite,golub2013matrix} for background). An alternative yet equivalent formulation of eqn.~(\ref{eq:precond}), which is more amenable to theoretical analysis, is
\begin{flalign}
	\Qb^{-\nicefrac{1}{2}}\Ab\Db^2\Ab^\ts\Qb^{-\nicefrac{1}{2}}\zb=~\Qb^{-\nicefrac{1}{2}}\pb,\label{eq:precond_alt}
\end{flalign}
where $\zb\in\R{\dimone}$ is a vector such that $\Delta\yb=\Qb^{-\nicefrac{1}{2}}\zb$. Note that the matrix in the left-hand side of the above equation is always symmetric, which is not necessarily the case for eqn.~\eqref{eq:precond}. We do emphasize that one can use eqn.~\eqref{eq:precond} in the actual implementation of the preconditioned solver; eqn.~(\ref{eq:precond_alt}) is much more useful in theoretical analyses. 

Recall that we focus on the special case where $\Ab \in \mathbb{R}^{m \times n}$ has $m \ll n$, i.e., it is a short-and-fat matrix. Our first contribution starts with the design and analysis of a preconditioner for the Conjugate Gradient solver. The preconditioner satisfies, with high probability, the following bound:
\begin{flalign}\label{eq:pdcond1}
	\frac{2}{2+\zeta} \leq \sigma^2_{\min}(\Qb^{-\frac{1}{2}}\Ab\Db) \leq \sigma^2_{\max}(\Qb^{-\frac{1}{2}}\Ab\Db) \leq \frac{2}{2-\zeta},
\end{flalign}
for some error parameter $\zeta \in  [0,1]$. In the above, $\sigma_{\min}(\cdot)$ and $\sigma_{\max}(\cdot)$ correspond to the smallest and largest singular value of the matrix in parentheses.
The above condition says that the preconditioner effectively reduces the condition number of $\Ab\Db$ to a constant. We note that the particular form of the lower and upper bounds in eqn.~(\ref{eq:pdcond1}) was chosen to simplify our derivations.

RLA matrix-sketching techniques allow us to construct preconditioners for all short-and-fat matrices that satisfy the above inequality \textit{and} can be inverted efficiently. Such constructions go back to the work of~\citet{Avron2010}; see Section~\ref{sxn:PCG} for details on the construction of $\Qb$ and its inverse. Importantly, given such a preconditioner, we then prove that the resulting CG iterative solver satisfies
\begin{flalign}
	\|\Qb^{-\nicefrac{1}{2}}\Ab\Db^2\Ab^\ts\Qb^{-\nicefrac{1}{2}}\tilde{\zb}^t-\Qb^{-\nicefrac{1}{2}}\pb\|_2\leq
	\zeta^t \|\Qb^{-\nicefrac{1}{2}}\pb\|_2. \label{eq:pdcond2}
\end{flalign}
Here $\tilde{\zb}^t$ is the approximate solution returned by the CG iterative solver after $t$ iterations. In words, the above inequality states that the \textit{residual}
achieved after $t$ iterations of the CG iterative solver drops exponentially fast. 
\textcolor{black}{Given eqn.~\eqref{eq:pdcond1}, we derive eqn.~\eqref{eq:pdcond2} using the monotonic decrease of the preconditioned residual norms of CG.
}
To the best of our knowledge, \textcolor{black}{such monotonicity of the residual error} is not known in the CG literature: indeed, it is actually well-known that the residual error of CG may oscillate\,\citep{fong2012cg}, even in cases where the energy norm of the solution error decreases monotonically. However, we prove that if the preconditioner is sufficiently good, i.e., it satisfies the constraint of eqn.~\eqref{eq:pdcond1}, then the residual error decreases monotonically as well, \textcolor{black}{resulting in eqn.~\eqref{eq:pdcond2}. It is slightly better than the residual-norm bound derived directly from the energy norm of the solution-error, which is the standard bound for CG. In the later case, the bound in eqn.~\eqref{eq:pdcond2} would have another constant factor involving $\kappa(\Qb^{-1/2}\Ab\Db)$, the condition number of the preconditioned matrix. Using the aforementioned monotonicity of the residual norms, we are able to get rid of that constant factor. In addition, such monotonic decrease of the residual norms can also be of independent interest in CG literature. See Section~\ref{sxn:cg} for details. }

\vspace{0.4mm}
In addition, we also analyze another popular Krylov subspace-based solver, namely, \emph{Chebyshev iteration}\,\citep{barrett1994templates,gutknecht2002chebyshev}. This method avoids the computation of the inner products which is typically needed for CG or other non-stationary methods. Inner products are communication intensive in parallel or distributed settings, and, as such, are detrimental to the performance in such setups. However, there is a trade-off; in order to avoid the computation of the inner products, it requires adequate knowledge about spectrum of the coefficient matrix $\Qb^{-\nicefrac{1}{2}}\Ab\Db^2\Ab^\ts\Qb^{-\nicefrac{1}{2}}$, which, in our context, is nothing but the condition in eqn.~\eqref{eq:pdcond1}. Therefore, given eqn.~\eqref{eq:pdcond1}, we prove that Chebyshev iteration also satisfies eqn.~\eqref{eq:pdcond2}.

Our second contribution is the analysis of a novel variant of a long-step IPM algorithm proposed by~\citet{Mon03}. First, we analyze the feasible version of it, which is the one that starts from a strictly feasible point and stays feasible across all the iterations -- namely, any point $(\xb^k,\yb^k,\sbb^k)$ with $(\xb^k,\sbb^k)>\zero$ such that such that they are both primal and dual feasible \ie, $\Ab\xb^k=\bb$ and $\Ab^\ts\yb^k+\sbb^k=\cbb$. However, the use of an approximate solver prevents the iterates $(\xb^k,\yb^k,\sbb^k)$ from satisfying the above primal and dual feasibility constraints \emph{exactly} right after the first iteration of the IPM. In order to account for the error caused by the CG solver and to push the iterates back to the feasibility, \citet{Mon03} introduced a perturbation vector $\vb$ which needs to satisfy a certain linear invariant exactly. Again, we use RLA matrix sketching principles to propose an efficient construction for $\vb$ that provably satisfies the invariant. 

Finally, we combine the above two primitives to prove that Algorithm~\ref{algo:iipm} in Section~\ref{sxn:IIPM} satisfies the following theorem.
\begin{theorem}\label{thm:2f}
	Let $0 \leq \epsilon \leq 1$ be an accuracy parameter. Consider the long-step feasible IPM Algorithm~\ref{algo:iipm} (Section~\ref{sxn:IIPM}) that solves eqn.~(\ref{eq:precond_alt}) using the iterative solver of Algorithm~\ref{algo:PCG} (Section~\ref{sxn:PCG}). Assume that the iterative solver runs with accuracy parameter $\zeta = \nicefrac{1}{2}$ and iteration count
	$t = \Ocal (\log n)$. 
	%
	Then, with probability at least 0.9, the long-step feasible IPM converges after $\Ocal(n \log \nicefrac{1}{\epsilon})$ iterations.
\end{theorem}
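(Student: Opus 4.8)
The plan is to run the classical polynomial-iteration analysis of long-step path-following IPMs --- keeping every iterate in the wide neighborhood $\neigh = \{(\xb,\yb,\sbb) : \xb\circ\sbb \ge \gamma\mu\,\one_\dimtwo,\ \mu = \xb^\ts\sbb/\dimtwo\}$ with $\gamma$ a fixed constant --- and to show that the two sources of inexactness (the approximate CG solve with the sketched preconditioner $\Qb$, plus the perturbation vector $\vb$ that repairs feasibility) perturb every quantity in that analysis by only a $1/\mathrm{poly}(\dimtwo)$ relative amount, hence do not change the per-iteration progress up to constants. First I would fix the algorithmic parameters: $\sigma$ in a fixed subinterval of $(0,1)$, $\gamma$ constant, solver accuracy $\zeta = \nicefrac12$ so that eqn.~\eqref{eq:pdcond1} forces the preconditioned matrix $\Qb^{-\nicefrac12}\Ab\Db^2\Ab^\ts\Qb^{-\nicefrac12}$ to have condition number at most $5/3$, and iteration count $t = \Ocal(\log \dimtwo)$ so that eqn.~\eqref{eq:pdcond2} gives $\zeta^t = 2^{-t} \le \dimtwo^{-c}$ for a constant $c$ of our choosing; this yields $\|\Qb^{-\nicefrac12}\Ab\Db^2\Ab^\ts\Qb^{-\nicefrac12}\tilde{\zb}^t - \Qb^{-\nicefrac12}\pb\|_2 \le \dimtwo^{-c}\,\|\Qb^{-\nicefrac12}\pb\|_2$, which I would then convert (using the constant conditioning of $\Qb^{-\nicefrac12}\Ab\Db$) into an absolute bound of the form $\|\Ab\Db^2\Ab^\ts\Dely - \pb\|_2 \le \dimtwo^{-c'}\cdot(\text{problem scale})$ on the residual of the original normal equations, and hence into a bound on how far the computed triple $(\Delx,\Dely,\Dels)$ sits from the exact Newton direction.

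Next comes the feasibility-repair step. By construction the perturbation vector $\vb$ satisfies its linear invariant \emph{exactly}, so substituting the corrected search direction into the update keeps the iterates exactly primal- and dual-feasible ($\Ab\xb^k = \bb$, $\Ab^\ts\yb^k + \sbb^k = \cbb$) throughout; this is purely algebraic and costs nothing in the probability budget beyond the one-time success of the sketches used to build $\Qb$ and $\vb$. Moreover, $\|\vb\|$ inherits the $\dimtwo^{-c'}$ smallness from the residual bound above. It then remains to establish the two workhorse estimates of the long-step analysis, now carrying error terms: (i) the scaled cross-terms $\delxdelsnorm$ and $|\delxTdels|$ are $\Ocal(\dimtwo\mu)$ plus additive errors controlled by the residual/$\vb$ bound, hence lower order; and (ii) consequently there is a step length $\alphaused = \Theta(1/\dimtwo)$ such that $(\xnew,\ynew,\snew) \in \neigh$ for all $\alpha \in [0,\alphaused]$, and along this step the corrected duality measure satisfies $\munew \le \bigl(1 - \Theta(1/\dimtwo)\bigr)\mu$ --- the $\vb$-term in $\munew$, of order $\vb^\ts\one_\dimtwo/\dimtwo$, being dominated by choosing $c'$ large enough.

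Finally I would close the induction over iterations: starting from $\mu_0$, after $K = \Ocal(\dimtwo\log\nicefrac1\epsilon)$ iterations we obtain $\mu_K \le (1-\Theta(1/\dimtwo))^K\mu_0 \le \epsilon\mu_0$, and since exact feasibility is preserved at every step, this iterate is a genuine feasible primal--dual point with duality gap at most $\epsilon\mu_0$, i.e.\ the claimed approximately optimal solution; the overall $0.9$ success probability is simply that of the RLA sketching constructions underlying $\Qb$ and $\vb$, whose sketch dimensions are chosen accordingly. The main obstacle I anticipate is steps (i)--(ii): one has to verify carefully that \emph{every} inexactness-induced term --- those coming from the nonzero CG residual, from using $\Qb^{-\nicefrac12}\Ab\Db$ rather than an exact orthonormalizer, and from $\vb$ --- is a factor $\dimtwo^{-c'}$ smaller than the exact-case quantity it perturbs, so that neither the neighborhood bound $\xb_i\sbb_i \ge \gamma\mu$ nor the $\Theta(1/\dimtwo)$ decrease of $\mu$ is spoiled, and that $\zeta=\nicefrac12$ together with $t=\Ocal(\log\dimtwo)$ indeed makes the implied constant $c'$ large enough. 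The remainder is bookkeeping on top of the standard Kojima--Mizuno--Yoshise / Wright long-step framework and the residual guarantees of Sections~\ref{sxn:PCG}--\ref{sxn:cg}.
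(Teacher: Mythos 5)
Your plan follows the same skeleton as the paper's proof (wide-neighborhood long-step analysis with the $\vb$-perturbed Newton system, $\|\hat{\Delta\xb}\circ\hat{\Delta\sbb}\|_\infty=\Ocal(n\mu)$, step size $\Theta(1/n)$, hence $\Ocal(n\log\nicefrac{1}{\epsilon})$ outer iterations, with the success probability coming from the sketches), but there is a genuine gap exactly at the crux of the theorem: why $t=\Ocal(\log n)$ inner iterations suffice. You propose to convert the preconditioned residual bound $\|\tilde{\fb}^{(t)}\|_2\le\zeta^t\|\Qb^{-\nicefrac{1}{2}}\pb\|_2$ into an absolute bound on the unpreconditioned residual $\|\Ab\Db^2\Ab^\ts\hat{\Delta\yb}-\pb\|_2$ ``using the constant conditioning of $\Qb^{-\nicefrac{1}{2}}\Ab\Db$,'' and further into a bound on the distance of the computed triple from the exact Newton direction, from which $\|\vb\|_2$ is said to ``inherit'' smallness. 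The constant conditioning of the \emph{preconditioned} matrix does not permit this: the conversion costs a factor $\|\Qb^{\nicefrac{1}{2}}\|_2$ (and the Newton-direction comparison costs $\|(\Ab\Db^2\Ab^\ts)^{-1}\|_2$), i.e.\ a factor of order $\kappa(\Ab\Db)$, which is exactly the uncontrolled, $\mu$-dependent quantity that blows up as the IPM approaches optimality and that the preconditioner was introduced to bypass. Following your route literally, making the $\vb$-induced terms ``a factor $n^{-c'}$ smaller'' would force $t=\Ocal(\log(n\,\kappa(\Ab\Db)))$ rather than $\Ocal(\log n)$, so the stated inner-iteration count (and hence the theorem) does not follow.

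What is needed instead — and what your proposal never supplies — are the two quantitative facts the paper proves: (a) inside $\neigh$, $\|\Qb^{-\nicefrac{1}{2}}\pb\|_2\le\bigl(1+\nicefrac{\sigma}{\sqrt{1-\gamma}}\bigr)\sqrt{2n\mu}$ (Lemma~\ref{thm:boundf_f}), which pins down your unspecified ``problem scale'' as $\Ocal(\sqrt{n\mu})$; and (b) $\|\vb\|_2\le\sqrt{3n\mu}\,\|\tilde{\fb}^{(t)}\|_2$ (Lemma~\ref{lem:v}), proved \emph{without} passing through the unpreconditioned residual, by exploiting the structure of the construction in eqn.~\eqref{eq:compv}: the product $(\Ab\Db\Wb)^{\dagger}\Qb^{\nicefrac{1}{2}}=\widehat{\Vb}\Ub_{\Qb}^\ts$ has unit spectral norm (so no condition number of $\Ab\Db$ ever enters), and a second application of the sketching guarantee, now to $\Zb=(\Xb\Sb)^{\nicefrac{1}{2}}$, gives $\|(\Xb\Sb)^{\nicefrac{1}{2}}\Wb\|_2^2\le 3n\mu$. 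Combining (a) and (b) yields $\|\vb\|_2\le\sqrt{6}\,n\,\zeta^t\bigl(1+\nicefrac{\sigma}{\sqrt{1-\gamma}}\bigr)\mu$, so $t=\Ocal(\log n)$ already gives the constant-relative bound $\|\vb\|_2\le\frac{\gamma\sigma\mu}{4}$ that the downstream analysis (your steps (i)--(ii), which match Lemmas~\ref{lemmamaxalpha1_f}--\ref{theoremOuter_f}) actually requires — note that $n^{-c'}$-relative smallness is stronger than necessary. Two minor points: the $0.9$ probability requires a union bound over the $\Ocal(n)$ outer iterations, since a fresh sketch is drawn each iteration with per-iteration failure probability $\delta=\Ocal(n^{-1})$; and the feasibility-preservation and positivity argument for the new iterate (part of Lemma~\ref{lemmamaxalpha1_f}) also needs the $\|\vb\|_2\le\frac{\gamma\sigma\mu}{4}$ bound, not just the exact invariant of eqn.~\eqref{eq:addl}.
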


We note that the constant success probability above is for simplicity of exposition and can be easily amplified using standard techniques. Also, at each iteration of our long-step feasible IPM algorithm, the running time is $\Ocal((\nnz{\Ab}+m^3)\log n)$. See Section~\ref{sxn:IIPM} for a detailed discussion of the overall running time. However, finding a strictly feasible initial point is a non-trivial task. Therefore, we also briefly discuss the infeasible version of the long-step IPM algorithm in Section~\ref{sxn:ipm_i}.  

Our empirical evaluation demonstrates that our algorithm requires an order of magnitude much fewer inner CG iterations than a standard IPM using CG, while producing a comparably accurate solution (see Section~\ref{sec:exp}). In practice, our empirical evaluation also indicates that using a CG solver with our sketching-based preconditioner does not increase the number of (outer) iterations of the infeasible IPM, compared to unpreconditioned CG or  a direct linear solver. Furthermore, there are instances where our solver performs much better than non-preconditioned CG in terms of (outer) iteration count.

\subsection{Comparison with Related Work}\label{sxn:comparison}

There is a large body of literature on solving LPs using IPMs, thus we only review literature that is immediately relevant to our work. Recall that we solve the normal equations inexactly at each iteration, and develop a method to \emph{correct} for the error incurred. We focus on IPMs that start with a strictly feasible initial point and discuss papers that present related ideas.

The use of an approximate iterative solver for eqn.~(\ref{eq:normal}), followed by a correction step to ``fix'' the approximate solution was proposed by~\citet{Mon03} (see our discussion in Section~\ref{sxn:contrib}). We propose efficient, RLA-based approaches to precondition and solve eqn.~(\ref{eq:normal}), as well as a novel approach to correct for the approximation error in order to guarantee the convergence of the IPM algorithm. Specifically,~\citet{Mon03} propose to solve eqn.~\eqref{eq:normal} using the so-called \emph{maximum weight basis} preconditioner \citep{RV93}.
However, computing such a preconditioner needs access to a maximal linearly independent set of columns of $\Ab\Db$ in each iteration, which is costly, taking $\Ocal(\dimone^2\dimtwo)$ time in the worst-case.
More importantly, while~\citep{Mon04} provides a bound on the condition number of the preconditioned matrix that depends only on properties of $\Ab$, and is independent of $\Db$, this bound might, in general, be very large. In contrast, our bound is a constant and it does not depend on properties of $\Ab$ or its dimension. In addition, \citet{Mon03} assume a bound on the two-norm of the residual of the preconditioned system, but it is unclear how the proposed preconditioner guarantees such a bound. Similar concerns exist for the construction of the correction vector $\vb$ proposed by~\citet{Mon03}, which our work alleviates. \textcolor{black}{In this context, there is a long list of works on designing efficient preconditioners for solving the linear system at each iteration of IPM. For a more detailed discussion,  we refer the interested readers to the survey of \citep{Gondzio12}.}

{\color{black} In the Theoretical Computer Science community, following the lines of \citep{karmarkar84}, there has been a series of efforts to design faster LP solvers with improved worst-case time complexity. The running time of \citep{karmarkar84} was improved by~\citep{renegar1988polynomial} and~\citep{vaidya1989speeding} which proposed an algorithm that takes $\widetilde{\Ocal}\big(n^{2.5+o(1)}\big)$ time.
Current state-of-the-art running times involve fast matrix multiplication (a theoretically appealing yet impractical approach). For example,~\citep{CLS19} proposed an algorithm that runs in $\widetilde{\Ocal}\big((n^{\omega}+n^{2.5-\alpha / 2+o(1)}+n^{2+1 / 6})\big)$ time, where $\omega$ is the exponent of matrix multiplication and $\alpha$ is the dual exponent of matrix multiplication. For $\omega \approx 2.38$ and $\alpha \approx 0.31$ this time complexity boils down to $\widetilde{\Ocal}\big(n^{\omega+o(1)}\big)$. More recently,~\citep{jiang2021faster} reduced the running time of~\citep{CLS19} to $\widetilde{\Ocal}\big((n^{\omega}+n^{2.5-\alpha / 2+o(1)}+n^{2+1 /18})\big)$, which further reduces the gap between  matrix multiplication and solving LPs.

Work by~\citep{lee2019solvingERM,van2020deterministic, song2021oblivious} achieved the same running time as~\citep{CLS19}, using the same so-called \emph{lazy update} framework of~\citep{CLS19}. However, there are subtle differences between these works in terms of the underlying sketching and sampling techniques, as well as on approaches that achieve fast queries for the so-called \emph{projection maintenance} data structure that handles infeasibilities over iterations. For example, while the work of~\citep{CLS19} involves a non-oblivious sampling scheme whose sampling set and size changes over iterations,~\citep{song2021oblivious} utilizes oblivious  sketching through an iterative framework to approximate the central path. On the other hand, while~\citep{CLS19, van2020deterministic, song2021oblivious} solve the linear system exactly, other works only maintain infeasible updates in each iteration. Similarly, while~\citep{song2021oblivious} can leverage sparse embeddings, most of the aforementioned works (including~\citep{lee2019solvingERM}) require the usage of dense sketching matrices, which could hinder the sparsity structure of the original linear program. 

All the aforementioned solvers are designed for the $m\approx n$ setting.
If $\Ab$ is dense and rectangular with $n\gg m$,~\citep{brand2020solving} provides the theoretically fastest solver, with an iteration complexity of $\widetilde{\mathcal{O}}(\sqrt{m})$ and a total time complexity equal to $\widetilde{\mathcal{O}}\left(m n+m^{3}\right)$. For sparse rectangular matrices, the best per iteration complexities are given by $\Ocaltil\left(\operatorname{nnz}(\mathbf{A})+m^{\omega}\right)$~\citep{LS14} and $\Ocaltil(\nnz{\Ab} + m^2)$ (see~\citep{LS15}); both algorithms  have iteration complexity $\widetilde{\mathcal{O}}(\sqrt{m})$.
Overall, we note that
\citep{LS14,LS15,lee2019solvingERM,brand2020solving,cohen2021solving,song2021oblivious, jiang2021faster} proposed and analyzed theoretically ground-breaking algorithms for LPs based on novel tools such as the so-called \emph{projection maintenance}, \emph{inverse maintenance}, \emph{fast matrix multiplication}, etc. for accelerating the linear system solvers in IPMs. In contrast, our paper differs from all these aforementioned TCS works in at least the following three directions:}
\begin{itemize}
\color{black}
    \item First, all these aforementioned approaches are primarily focused on theoretically fast but practically inefficient short-step path following methods, where the iterates are constrained within a narrow, restrictive neighborhood of the central path in the interior of the feasible region. Therefore, the algorithms based on short-step IPMs do not have much room to maneuver and the progress that they make in each iteration is limited, at least in practice\footnote{Theoretically, the worst-case iteration complexity of short-step central path methods is given by $\Ocaltil(\sqrt{n})$, which is the best complexity bound known for IPMs.}. 
    {\color{black} Some of the recent TCS works (for example, \cite{CLS19, song2021oblivious} etc.) rely on a stochastic version of the short step central path method, in which the neighborhood of the central path is slightly wider than that of the traditional short-step IPMs. This is still considered to be quite restrictive, as it needs all the pairwise products $x_i s_i$ to be close to to the duality measure $\mu$ (more precisely, it needs $0.9\mu\le x_i s_i\le 1.1\mu$ for all $i$), which is not a very practical idea in general.}
    On the other hand, our algorithm is based on long-step path following IPMs which explore a much larger neighborhood around the central path and offer significant flexibility in each iteration. As a result, the iterates can take much longer steps towards optimality. Long-step IPMs are known to be more efficient in practice compared to short-step IPMs,  despite the fact that they exhibit worst-case iteration complexity $\Ocal(n)$.  

    \item Second, the theoretical benefits of the aforementioned TCS works rely on techniques such as \emph{projection maintenance} or \emph{inverse maintenance}, where one needs to preserve the orthogonal projection matrix $\Db \Ab^{\top}(\Ab\Db^2\Ab^{\top})^{-1} \Ab\Db\in \mathbb{R}^{n \times n}$ or $(\Ab\Db^2\Ab^{\top})^{-1}\in \mathbb{R}^{n \times n}$ in order to solve the linear system at each iteration of the IPM. The idea of maintaining such projection matrices depends on the assumption that the matrix $\Db$ does not change much from one iteration to the next. Thus, one only needs to compute the matrix $(\Ab\Db^2\Ab^{\top})^{-1}$ a few times, which is also known as lazy updating. In addition, if $\Db$ only changes in a few of its diagonal entries, one can further use the idea of low-rank updating to maintain the above projection matrix via the Sherman–Morrison–Woodbury (SMW) formula. However, due to large constant factors involved in the lazy updates framework and the numerical instability of the SMW formula, the notion of \emph{projection maintenance} is generally inefficient in practice. In contrast, our methods do not depend on any of the aforementioned techniques. Instead, we use randomized preconditioners combined with iterative solvers to approximately solve the linear system. We also propose a computationally efficient way to correct for the error caused by the solver. As a result, our method is much more relevant in practice and, when $n\gg m$, the per iteration cost of our algorithm is $\widetilde{\Ocal}(\nnz{\Ab}+m^3)$
    
    \item Third, all the aforementioned papers involve fast matrix multiplication, which is a theoretically relevant yet practically inefficient tool. To the best of our knowledge, there are no in- or out-of-core implementations of such algorithms that work better than traditional matrix multiplication approaches and it seems unlikely that such techniques will become practically relevant in the near future. Our methods leverage standard matrix multiplication routines and in Section~\ref{sec:exp} we show how an implementation of our approach offers advantages over existing, practically relevant approaches.

\end{itemize}


{\color{black} Another line of research in the Theoretical Computer Science literature that is very close to our work is~\citep{daitch2008faster}, who presented an IPM that uses an approximate solver in each iteration. However, their accuracy guarantee is in terms of the final objective value, which is different from ours.
More importantly,~\citep{daitch2008faster} focuses on \textit{short-step} IPMs, whereas our approach is a \emph{long-step} algorithm that works for both feasible and infeasible starting points. Finally, the approximate solver proposed by~\citep{daitch2008faster} works only for the special case of input matrices that correspond to graph Laplacians, following the lines of~\citep{spielman2004nearly,spielman2014nearly}. In this context, we note that there are also other relevant works including~\citep{madry2013navigating,madry2016computing,cohen2017negative} that used IPM-based algorithms with approximate solvers for various combinatorial optimization problems on graphs. However, similar to \citep{spielman2004nearly}, the aforementioned papers also focus on short-step IPMs and the linear systems associated with them are either Laplacian or symmetric diagonally dominant (SDD).}

Another relevant line of research is the work by~\citet{CMTH16}, which proposed solving eqn.~\eqref{eq:normal}  using preconditioned Krylov subspace methods, including variants of \emph{generalized minimum residual} (GMRES) or CG methods. Indeed, \citet{CMTH16} conducted extensive numerical experiments on LP problems taken from standard benchmark libraries, but did not provide any theoretical guarantees.
\textcolor{black}{Beyond experiments, the methods of \citep{CMTH16} primarily differ from ours in the way they used preconditioning. Instead of applying the preconditioner explicitly,  their empirical evaluations rely on an implicit preconditioning technique called  \emph{(stationary) inner-iterations preconditioning}
\citep{morikuni2013inner,morikuni2015convergence}. As the name suggests, the key idea is to use another iterative method as a preconditioner within the linear system that needs to be solved at each iteration of the IPM. From a theoretical perspective, it is not clear how big or small the resulting condition number of their preconditioned system could be; moreover, there are two hyperparameters associated with their preconditioner which need to be tuned optimally and there is no theoretical guideline on how that can be done.}

From a matrix-sketching perspective, our work was also partially motivated by~\citep{CYD18}, which presented an iterative, sketching-based algorithm to solve under-constrained ridge regression problems, but did not address how to make use of such approaches in an IPM-based framework, as we do here. We refer the reader to the surveys~\citep{Woodruff14, DM2018,Mahoney11,Drineas2016,martinsson2020randomized} for more background on Randomized Linear Algebra and regression solvers. In the context of deep neural networks (DNNs), \citep{van2021training} recently presented an iterative method to speed up the training of overparametrized DNNs by a similar type of randomized preconditioner as ours; but the algorithm of \citep{van2021training} neither exploited the sparsity in the data, nor they had to correct for the error caused by the inexact solver, as we do here. In another work,~\citep{avron2017faster} also proposed a similar sketching-based preconditioning technique. However, their efforts broadly revolved around speeding up and scaling \emph{kernel ridge regression}.~
\textcolor{black}{\citep{PW17}
proposed the so-called \emph{Newton sketch} to construct an approximate Hessian matrix  for more general convex problems, of which LP is a special case. Nevertheless, based on their local convergence guarantee for the sketched Newton updates, their paper only derived the underlying iteration complexity of the IPM (see, for example, Theorem 4.3 of~\citep{PW17}). It is not clear how to use their approach to bound the number of inner iterations and, as a result, deriving the per iteration cost of their algorithm is not straightforward. Moreover, their convergence guarantees for the IPM is with respect to the objective value and not in terms of the duality measure. Finally, the Newton-sketch of~\citep{PW17} does not exploit the sparsity of the data, whereas the running time of our algorithm depends on the sparsity of $\Ab$.} 
We also note that~\citep{VPL18} proposed a probabilistic algorithm to solve LPs approximately using a random projection reduced feature space. A possible drawback of this work is that the approximate solution is infeasible with respect to the original region. 

\textcolor{black}{On the empirical side, there are prior implementations of $\Ocaltil(\nnz{\Ab} + \mathrm{poly}(m))$ solvers for speeding various ML applications including ordinary least square regression~\citep{clarkson2017low,cormode2019iterative}, general $\ell_p$-regression~\citep{yang2017weighted}, ridge regression~\citep{chen2015fast,CYD18}, Fisher linear discriminant analysis~\citep{ye2017fast,CYD19}, more general Newton updates~\citep{dahiya2018empirical} and many more. However, to the best of our knowledge, there are no such implementations in the context of general linear programming problems, perhaps with the exception of $\ell_1$-regression. As discussed before, prior work on short-step IPM for LP that came up with per iteration cost $\Ocaltil(\nnz{\Ab} + \mathrm{poly}(m))$ was due to \citep{LS14,LS15}, but there is no empirical evaluation because of its heavy reliance on various theoretical tools such as inverse maintenance, fast matrix multiplication etc.}

Finally, in addition to IPMs, LPs can be solved using the Simplex method. In commercial LP packages like Gurobi, both methods are often used in conjunction. For example, multiple solvers are run on multiple threads simultaneously and the one that is finishes first is chosen. Alternatively, an IPM is used initially to get close to the optimal solution and then the simplex algorithm is used to improve the solution~\citep{bixby1992very,glavelis2018improving}. Thus, developing efficient IPMs is vital for solving LPs and provides a crucial building block in commerical packages like Gurobi.
\section{Notation and Background}\label{sxn:background}

$\mathbf{A}, \mathbf{B}, \ldots$ denote
matrices and $\mathbf{a}, \mathbf{b}, \ldots$ denote vectors. For vector $\ab$, $\|\mathbf{a}\|_{2}$ denotes its Euclidean norm; for a matrix $\mathbf{A},\|\mathbf{A}\|_{2}$ denotes
its spectral norm and $\|\Ab\|_F$ denotes its Frobenius norm. We use $\zero$ to denote a null vector or null matrix, dependent upon context, and $\one$ to denote the all-ones vector.
For any matrix $\Xb\in\RR{\dimone}{\dimtwo}$ with $\dimone\leq\dimtwo$ of rank $\dimone$ a thin Singular Value Decomposition (SVD) is a product $\Ub\Sigmab\Vb^\ts$ , with $\mathbf{U} \in\mathbb{R}^{\dimone\times\dimone}$ (the matrix of the left singular vectors), $\mathbf{V} \in$ $\mathbb{R}^{\dimtwo \times \dimone}($ the matrix of the top-$\dimone$ right singular vectors), and $\Sigmab \in$
$\mathbb{R}^{\dimone \times \dimone}$ a diagonal matrix whose entries are equal to the singular values of $\Xb$. We use $\sigma_{i}(\cdot)$ to denote the $i$-th singular value of the matrix in parentheses. For any two vectors $\ab=(a_1,\dots,a_\ell)^\ts$ and $\bb=(b_1,\dots,b_\ell)^\ts$, we denote $\ab\circ\bb=(a_1b_1,\dots,a_\ell b_\ell)^\ts$. For any two symmetric positive semidefinite (resp. positive definite) matrices $\Ab_1$ and $\Ab_2$ of appropriate dimensions, $\Ab_1\preccurlyeq\Ab_2$ ($\Ab_1\prec\Ab_2$) denotes that $\Ab_2-\Ab_1$ is positive semidefinite (resp. positive definite). For any vector $\ab\in\R{n}$ its $\ell_\infty$ norm is defined as $\|\ab\|_\infty=\max_i\abs{a_i}$. We extensively use the following standard inequality to prove several results in the paper:
\begin{flalign}\label{eq:normineq}
\abs{\frac{\ab^\ts\one_n}{n}}\le\|\ab\|_\infty\le\|\ab\|_2.
\end{flalign}



We now briefly discuss a result on matrix sketching~\citep{Cohen2016,cohen2016nearly} that is particularly useful in our theoretical analyses. In our parlance,~\citet{Cohen2016} proved that,
for any matrix $\Zb\in\RR{m}{n}$, there exists a sketching matrix $\Wb\in\RR{n}{w}$ such that
\begin{flalign}\label{eqn:pdprec}
	\nbr{\Zb \Wb \Wb^\ts \Zb^\ts - \Zb \Zb^\ts}_2\le \frac{\zeta}{4}\Big(\nbr{\Zb}_2^2+\frac{\|\Zb\|_F^2}{r}\Big)
\end{flalign}
holds with probability at least $1-\delta$ for any $r\ge 1$. Here $\zeta \in [0,1]$ is a (constant) accuracy parameter. Ignoring constant terms, $w=\Ocal(r\log(\nicefrac{r}{\delta}))$; $\Wb$ has $\Ocal(\log(r/\delta))$ non-zero entries per row; and the product $\Zb\Wb$ can be computed in time
$\Ocal(\log(r/\delta)\cdot\nnz{\Zb})$.

\section{Preconditioned Iterative Solver}
\label{sxn:PCG}
%
In this section, we discuss the computation of the preconditioner $\Qb$ (and its inverse), followed by a discussion on how such a preconditioner can be used to satisfy eqns.~\eqref{eq:pdcond1} and~\eqref{eq:pdcond2}.
\begin{algorithm}[H]
	\caption{Solving eqn.~\eqref{eq:precond_alt} via CG or Chebyshev iteration}\label{algo:PCG}
	\hspace*{\algorithmicindent}\textbf{Input:} $\Ab\Db\in\RR{\dimone}{\dimtwo}$, $\pb\in\R{\dimone}$, sketching matrix $\Wb \in \mathbb{R}^{n \times w}$,  iteration count $t$;
	\begin{algorithmic}[1]
		\State Compute $\Ab\Db\Wb$ and its SVD: let $\Ub_{\Qb} \in \mathbb{R}^{m \times m}$ be the matrix of its left singular vectors and let $\Sigmab_{\Qb}^{\nicefrac{1}{2}} \in \mathbb{R}^{m \times m}$ be the matrix of its singular values;
		\State Compute $\Qb^{-\nicefrac{1}{2}} = \Ub_{\Qb} \Sigmab_{\Qb}^{-\nicefrac{1}{2}}\Ub_{\Qb}^\ts$;
		\State Initialize $\tilde{\zb}^{0} \gets \zero_\dimone $ and run standard CG or Chebyshev iteration on the preconditioned system of eqn.~\eqref{eq:precond_alt} for $t$ iterations;
	\end{algorithmic}
	\hspace*{\algorithmicindent}\textbf{Output:} return $\tilde{\zb}^t$;
\end{algorithm}
\noindent Algorithm~\ref{algo:PCG} takes as input the sketching matrix $\Wb \in \mathbb{R}^{n \times w}$, which we construct as discussed in Section~\ref{sxn:background}. Our preconditioner $\Qb$ is equal to
\begin{flalign}\label{eqn:pdprecond}
	\Qb=\Ab\Db\Wb\Wb^\ts\Db\Ab^\ts.
\end{flalign}
Notice that we only need to compute $\Qb^{\nicefrac{-1}{2}}$ in order to use it to solve eqn.~(\ref{eq:precond_alt}). Towards that end, we first compute the sketched matrix $\Ab\Db\Wb \in \mathbb{R}^{\dimone \times w}$. Then, we compute the SVD of the matrix $\Ab\Db\Wb$: let $\Ub_{\Qb}$ be the matrix of its left singular vectors and let $\Sigmab_{\Qb}^{\nicefrac{1}{2}}$ be the matrix of its singular values. Notice that the left (and right) singular vectors of $\Qb^{\nicefrac{-1}{2}}$ are equal to $\Ub_{\Qb}$ and its singular values are equal to $\Sigmab_{\Qb}^{-\nicefrac{1}{2}}$. Therefore, $\Qb^{\nicefrac{-1}{2}} = \Ub_{\Qb} \Sigmab_{\Qb}^{-\nicefrac{1}{2}}\Ub_{\Qb}^\ts$.

Let $\Ab\Db = \Ub\Sigmab\Vb^\ts$ be the thin SVD representation of $\Ab\Db$. We apply the results of~\citep{Cohen2016} (see Section~\ref{sxn:background}) to the matrix $\Zb = \Vb^\ts \in\mathbb{R}^{m \times n}$ with $r=m$ to get that, with probability at least $1-\delta$,
\begin{flalign}
	\nbr{\Vb^\ts \Wb \Wb^\ts \Vb - \Ib_{\dimone}}_2\le \frac{\zeta}{4}\Big(\nbr{\Vb}_2^2+\frac{\|\Vb\|_F^2}{m}\Big) \le \frac{\zeta}{2}.\label{eq:cnd1}
\end{flalign}
In the above we used $\nbr{\Vb}_2=1$ and $\nbr{\Vb}_F^2=m$. The running time needed to compute the sketch $\Ab\Db\Wb$ is equal to (ignoring constant factors)
%
$\Ocal(\nnz{\Ab}\cdot \log(m/\delta))$.
%
Note that $\nnz{\Ab\Db}=\nnz{\Ab}$. The cost of computing the SVD of $\Ab\Db\Wb$ (and therefore $\Qb^{\nicefrac{-1}{2}}$) is $\Ocal(m^3\log(m/\delta))$. Overall, computing $\Qb^{-\nicefrac{1}{2}}$ can be done in time
\begin{flalign}\label{eqn:svdQ}
	\Ocal(\nnz{\Ab}\cdot \log(m/\delta)+m^3\log(m/\delta)).
\end{flalign}
Given these results, we now discuss how to satisfy eqns.~\eqref{eq:pdcond1} and \eqref{eq:pdcond2} using the sketching matrix $\Wb$. We start with the following bound, which is relatively straightforward given prior RLA work.
%
\begin{lemma}\label{lem:cond3}
	If the sketching matrix $\Wb$ satisfies eqn.~\eqref{eq:cnd1}, then, for all $i=1\ldots m$,
	\begin{flalign*}
		(1+\zeta/2)^{-1}\le\sigma_i^2(\Qb^{-\nicefrac{1}{2}}\Ab\Db)\le (1-\zeta/2)^{-1}.
	\end{flalign*}
\end{lemma}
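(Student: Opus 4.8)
The plan is to reduce the claim about the singular values of $\Qb^{-\nicefrac{1}{2}}\Ab\Db$ to a statement about the eigenvalues of the small $m\times m$ matrix $M\defeq\Vb^\ts\Wb\Wb^\ts\Vb$, which eqn.~\eqref{eq:cnd1} controls directly. First I would fix the thin SVD $\Ab\Db=\Ub\Sigmab\Vb^\ts$, where $\Ub,\Sigmab\in\RR{m}{m}$ are both invertible (since $\Ab$ has full row rank $m$ and $\Db$ is nonsingular on the interior) and $\Vb\in\RR{n}{m}$ has orthonormal columns. Then $\Qb=\Ab\Db\Wb\Wb^\ts\Db\Ab^\ts=\Ub\Sigmab M\Sigmab\Ub^\ts$ and $\Ab\Db^2\Ab^\ts=\Ub\Sigmab^2\Ub^\ts$. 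I would first record that eqn.~\eqref{eq:cnd1} says $\|M-\Ib_m\|_2\le\zeta/2<1$ (as $\zeta\in[0,1]$), so $M\succ\zero$, hence $\Qb\succ\zero$, and the analytic PD inverse square root $\Qb^{-\nicefrac{1}{2}}$ is well defined and coincides with the matrix $\Ub_{\Qb}\Sigmab_{\Qb}^{-\nicefrac{1}{2}}\Ub_{\Qb}^\ts$ built in Algorithm~\ref{algo:PCG}.

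The heart of the argument is the identity
\[
\Qb^{-1}\Ab\Db^2\Ab^\ts=\Ub\Sigmab^{-1}M^{-1}\Sigmab^{-1}\Ub^\ts\cdot\Ub\Sigmab^2\Ub^\ts=\Ub\Sigmab^{-1}M^{-1}\Sigmab\Ub^\ts,
\]
which exhibits $\Qb^{-1}\Ab\Db^2\Ab^\ts$ as a similarity transform of $M^{-1}$ (conjugation by $\Ub\Sigmab^{-1}$, whose inverse is $\Sigmab\Ub^\ts$). Since $\sigma_i^2(\Qb^{-\nicefrac{1}{2}}\Ab\Db)$ are the eigenvalues of $\Qb^{-\nicefrac{1}{2}}\Ab\Db^2\Ab^\ts\Qb^{-\nicefrac{1}{2}}$, which is in turn similar to $\Qb^{-1}\Ab\Db^2\Ab^\ts$ via $\Qb^{\nicefrac{1}{2}}$, all of these matrices share the spectrum of $M^{-1}$. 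Finally, eqn.~\eqref{eq:cnd1} gives $(1-\zeta/2)\Ib_m\preccurlyeq M\preccurlyeq(1+\zeta/2)\Ib_m$, so every eigenvalue of $M$ lies in $[1-\zeta/2,\,1+\zeta/2]$ and every eigenvalue of $M^{-1}$ lies in $[(1+\zeta/2)^{-1},\,(1-\zeta/2)^{-1}]$, which is exactly the asserted bound.

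I do not expect a genuine obstacle here — this is a routine spectral computation once eqn.~\eqref{eq:cnd1} is in hand. The only points needing a careful line are (i) that $\Sigmab$, $M$, and $\Qb$ are all invertible so the similarity transforms are legitimate (this uses $\zeta\le 1$ together with the full-rank assumption on $\Ab$), and (ii) that the algorithmic $\Qb^{-\nicefrac{1}{2}}$ agrees with the analytic inverse square root of $\Qb$. The step I would be most careful about is the bookkeeping that passes from the squared singular values of the nonsymmetric product $\Qb^{-\nicefrac{1}{2}}\Ab\Db$ to the eigenvalues of the associated Gram-type matrix, and then through the $\Qb^{\pm\nicefrac{1}{2}}$ similarity, without introducing spurious factors.
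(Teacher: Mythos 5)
Your proof is correct, and it reaches the lemma by a genuinely different mechanism than the paper's. The paper works entirely with Loewner (congruence) inequalities: it rewrites eqn.~\eqref{eq:cnd1} as $-\tfrac{\zeta}{2}\Ib_\dimone\preccurlyeq\Vb^\ts\Wb\Wb^\ts\Vb-\Ib_\dimone\preccurlyeq\tfrac{\zeta}{2}\Ib_\dimone$, conjugates by $\Ub\Sigmab$ to get $\left(1-\tfrac{\zeta}{2}\right)\Ab\Db^2\Ab^\ts\preccurlyeq\Qb\preccurlyeq\left(1+\tfrac{\zeta}{2}\right)\Ab\Db^2\Ab^\ts$, argues (as you do) that $\Qb$ has full rank so that $\Qb^{-\nicefrac{1}{2}}\Qb\Qb^{-\nicefrac{1}{2}}=\Ib_m$, and then conjugates once more by $\Qb^{-\nicefrac{1}{2}}$ and rescales to sandwich $\Qb^{-\nicefrac{1}{2}}\Ab\Db^2\Ab^\ts\Qb^{-\nicefrac{1}{2}}$ between $(1+\zeta/2)^{-1}\Ib_m$ and $(1-\zeta/2)^{-1}\Ib_m$; it never identifies the spectrum, only bounds it. You instead set $M=\Vb^\ts\Wb\Wb^\ts\Vb$, write $\Qb=\Ub\Sigmab M\Sigmab\Ub^\ts$, and exhibit $\Qb^{-\nicefrac{1}{2}}\Ab\Db^2\Ab^\ts\Qb^{-\nicefrac{1}{2}}$ as similar to $M^{-1}$, so the squared singular values of $\Qb^{-\nicefrac{1}{2}}\Ab\Db$ are \emph{exactly} the eigenvalues of $M^{-1}$, which eqn.~\eqref{eq:cnd1} confines to $[(1+\zeta/2)^{-1},(1-\zeta/2)^{-1}]$. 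Your route buys a marginally stronger conclusion (exact spectral identification with $M^{-1}$, not just two-sided bounds), at the price of passing through the non-symmetric intermediate $\Qb^{-1}\Ab\Db^2\Ab^\ts$ and needing the explicit invertibility bookkeeping for $\Sigmab$, $M$, and $\Qb$; the paper's congruence argument stays within symmetric matrices and positive-semidefinite orderings throughout, which is the more standard RLA idiom and generalizes more readily when exact similarity is unavailable. Your two flagged verifications — invertibility (which follows from $\zeta\le 1$ and $\rank(\Ab)=m$, mirroring the paper's rank argument for $\Ab\Db\Wb$ and $\Qb$) and the agreement of the algorithmic $\Qb^{-\nicefrac{1}{2}}=\Ub_\Qb\Sigmab_\Qb^{-\nicefrac{1}{2}}\Ub_\Qb^\ts$ with the symmetric inverse square root of $\Qb=(\Ab\Db\Wb)(\Ab\Db\Wb)^\ts$ — are exactly the right points to pin down, and neither presents an obstacle, so the argument is complete.
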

\begin{proof}
	Consider the condition of eqn.~\eqref{eq:cnd1}:
	\begin{flalign}
	&~\|\Vb^\ts\Wb\Wb^\ts\Vb-\Ib_\dimone\|_2\le\frac{\zeta}{2}~\Leftrightarrow~ -\frac{\zeta}{2}\,\Ib_{\dimone}\preccurlyeq\Vb^\ts\Wb\Wb^\ts\Vb-\Ib_\dimone\preccurlyeq\frac{\zeta}{2}\,\Ib_{\dimone}\label{eq:full}\\
	\Leftrightarrow~&-\frac{\zeta}{2}\,\Ab\Db^2\Ab^\ts\preccurlyeq\Ab\Db\Wb\Wb^\ts\Db\Ab^\ts-\Ab\Db^2\Ab^\ts\preccurlyeq\frac{\zeta}{2}\,\Ab\Db^2\Ab^\ts\label{eq:rich_3}\\
	\Leftrightarrow~&\left(1-\frac{\zeta}{2}\right)\,\Ab\Db^2\Ab^\ts\preccurlyeq\underbrace{\Ab\Db\Wb\Wb^\ts\Db\Ab^\ts}_{\Qb}\preccurlyeq\left(1+\frac{\zeta}{2}\right)\,\Ab\Db^2\Ab^\ts\,.\label{eq:rich_4}
	\end{flalign}
	We obtain eqn.~\eqref{eq:rich_3} by pre- and post-multiplying the previous inequality by $\Ub\Sigmab$ and $\Sigmab\Ub^\ts$ respectively and using the facts that $\Ab\Db=\Ub\Sigmab\Vb^\ts$ and $\Ab\Db^2\Ab^\ts=\Ub\Sigmab^2\Ub^\ts$. Also, from eqn.~\eqref{eq:full}, note that all the eigenvalues of $\Vb^\ts\Wb\Wb^\ts\Vb$ lie between $(1-\frac{\zeta}{2})$ and $(1+\frac{\zeta}{2})$ and thus $\rank(\Vb^\ts\Wb)=m$. Therefore, $\rank(\Ab\Db\Wb)=\rank(\Ub\Sigmab\Vb^\ts\Wb)=m$, as $\Ub\Sigmab$ is non-singular and we know that the rank of a matrix remains unaltered by pre- or post-multiplying it by a non-singular matrix. So, we have $\rank(\Qb)=m$; in words $\Qb$ has full rank. Therefore, all the diagonal entries of $\Sigmab_\Qb$ are positive and $\Qb^{-\nicefrac{1}{2}}\Qb\Qb^{-\nicefrac{1}{2}}=\Ib_m$\,.

Using the above arguments, pre- and post- multiplying eqn.~\eqref{eq:rich_4} by $\Qb^{-1/2}$, we get
	\begin{flalign}
	&~\left(1-\frac{\zeta}{2}\right)\,\Qb^{-1/2}\Ab\Db^2\Ab^\ts\Qb^{-1/2}\preccurlyeq\Ib_{m}\preccurlyeq\left(1+\frac{\zeta}{2}\right)\,\Qb^{-1/2}\Ab\Db^2\Ab^\ts\Qb^{-1/2}\nonumber\\
	\Rightarrow&~\left(1+\frac{\zeta}{2}\right)^{-1}\Ib_{m}\preccurlyeq\Qb^{-1/2}\Ab\Db^2\Ab^\ts\Qb^{-1/2}\preccurlyeq\left(1-\frac{\zeta}{2}\right)^{-1}\Ib_{\dimone}\,.\label{eq:normbound}
	\end{flalign}
Eqn.~\eqref{eq:normbound} implies that all the eigenvalues of $\Qb^{-1/2}\Ab\Db^2\Ab^\ts\Qb^{-1/2}$ are bounded between $\left(1+\frac{\zeta}{2}\right)^{-1}$ and $\left(1-\frac{\zeta}{2}\right)^{-1}$, which concludes the proof of the lemma.
\end{proof}
\noindent The above lemma directly implies eqn.~\eqref{eq:pdcond1}. We now proceed to show that the above construction for $\Qb^{\nicefrac{-1}{2}}$, when combined with the conjugate gradient solver or Chebyshev iteration to solve eqn.~\eqref{eq:precond_alt}, indeed satisfies eqn.~\eqref{eq:pdcond2}. 

\subsection{Conjugate Gradient Solver}\label{sxn:cg}
\textcolor{black}{As already mentioned in Section~\ref{sxn:contrib}, we derive eqn.~\eqref{eq:pdcond2} using the monotonicity property of CG resudual norms. 
}
It is known that even if the energy norm of the error of the approximate solution decreases monotonically, the norms of the CG residuals may oscillate. Interestingly, we can combine a result on the residuals of CG from~\citep{bouyouli2009new} with Lemma~\ref{lem:cond3} to prove that in our setting the norms of the CG residuals also decrease monotonically\footnote{See Chapter 9 of~\citep{Luenberger15} for a detailed overview of CG.}. \textcolor{black}{We do note that in prior work most of the convergence guarantees for CG focus on the error of the approximate solution, from which, directly deriving eqn.~\eqref{eq:pdcond2} induces a constant factor in terms of the condition number of the preconditioned matrix. Here, we are able to avoid that constant factor by deriving and using the aforementioned monotonicity of the CG residuals.}

Let $\tilde{\fb}^{(j)}$ be the residual at the $j$-th iteration of the CG algorithm:
$$\tilde{\fb}^{(j)}=\Qb^{-\nicefrac{1}{2}}\Ab\Db^2\Ab^\ts\Qb^{-\nicefrac{1}{2}}\tilde{\zb}^j-\Qb^{-\nicefrac{1}{2}}\pb.$$
Recall from Algorithm~\ref{algo:PCG} that~$\tilde{\zb}^0=\zero$ and thus $\tilde{\fb}^{(0)}=-\Qb^{-\nicefrac{1}{2}}\pb$.
In our parlance, Theorem 8 of~\citep{bouyouli2009new} proved the following bound.
\begin{lemma}[Theorem 8 of \citep{bouyouli2009new}]\label{lem:prev1}%
	Let $\tilde{\fb}^{(j-1)}$ and $\tilde{\fb}^{(j)}$ be the residuals obtained by the CG solver at steps $j-1$ and $j$. Then,%
	\begin{flalign*}
	\|\tilde{\fb}^{(j)}\|_2\le~\frac{\kappa^2(\Qb^{-\nicefrac{1}{2}}\Ab\Db)-1}{2}\|\tilde{\fb}^{(j-1)}\|_2\,,
	\end{flalign*}
	where $\kappa(\Qb^{-\nicefrac{1}{2}}\Ab\Db)$ is the condition number of $\Qb^{-\nicefrac{1}{2}}\Ab\Db$.
\end{lemma}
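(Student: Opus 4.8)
The plan is to recognize Lemma~\ref{lem:prev1} as a purely linear-algebraic fact about the CG recursion and to establish it from the standard CG orthogonality relations. Write $\Mb = \Qb^{-\nicefrac{1}{2}}\Ab\Db^2\Ab^\ts\Qb^{-\nicefrac{1}{2}} = \Bb\Bb^\ts$ with $\Bb = \Qb^{-\nicefrac{1}{2}}\Ab\Db$; since $\Ab$ has full row rank and $\Db,\Qb^{-\nicefrac{1}{2}}$ are invertible, $\Bb$ has full row rank, so $\Mb$ is symmetric positive definite with smallest and largest eigenvalues $\lambda_{\min} = \sigma_{\min}^2(\Qb^{-\nicefrac{1}{2}}\Ab\Db)$ and $\lambda_{\max} = \sigma_{\max}^2(\Qb^{-\nicefrac{1}{2}}\Ab\Db)$, hence $\kappa(\Mb) = \lambda_{\max}/\lambda_{\min} = \kappa^2(\Qb^{-\nicefrac{1}{2}}\Ab\Db)$. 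In the paper's notation $\tilde{\fb}^{(k)} = \Mb\tilde{\zb}^k - \Qb^{-\nicefrac{1}{2}}\pb$ is, up to sign, the ordinary CG residual $\rb^{(k)} = \Qb^{-\nicefrac{1}{2}}\pb - \Mb\tilde{\zb}^k$, so it suffices to bound $\nbr{\rb^{(j)}}_2$ in terms of $\nbr{\rb^{(j-1)}}_2$; we may assume $\rb^{(j-1)}\neq\zero$ and $\rb^{(j)}\neq\zero$, as otherwise both sides vanish. Let $\Kcal_j = \mathrm{span}\{\rb^{(0)},\Mb\rb^{(0)},\dots,\Mb^{j-1}\rb^{(0)}\}$ and let $\Pb_j$ be the orthogonal projector onto $\Kcal_j$. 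I will use the textbook CG facts: $\rb^{(j)}\perp\Kcal_j$; the iterate update yields $\rb^{(j-1)}\in\Kcal_j$ and the search direction satisfies $\db^{(j-1)}\in\Kcal_j$ with $\inner{\rb^{(j-1)}}{\db^{(j-1)}} = \nbr{\rb^{(j-1)}}_2^2$ (because $\db^{(j-1)} = \rb^{(j-1)} + \beta_{j-2}\db^{(j-2)}$ and $\rb^{(j-1)}\perp\db^{(j-2)}$, with $\db^{(0)}=\rb^{(0)}$); and $\rb^{(j)} = \rb^{(j-1)} - \alpha_{j-1}\Mb\db^{(j-1)}$ with $\alpha_{j-1} = \nbr{\rb^{(j-1)}}_2^2/\inner{\db^{(j-1)}}{\Mb\db^{(j-1)}} > 0$.

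The key step is a ``project-and-recenter'' manipulation. Applying $\Ib - \Pb_j$ to the residual update and using $\rb^{(j-1)}\in\Kcal_j$ (so $(\Ib-\Pb_j)\rb^{(j-1)}=\zero$) and $\rb^{(j)}\perp\Kcal_j$ (so $(\Ib-\Pb_j)\rb^{(j)}=\rb^{(j)}$) gives $\rb^{(j)} = -\alpha_{j-1}(\Ib-\Pb_j)\Mb\db^{(j-1)}$. Because $\db^{(j-1)}\in\Kcal_j$, for every $\theta\in\mathbb{R}$ we have $(\Ib-\Pb_j)\theta\db^{(j-1)}=\zero$, hence $(\Ib-\Pb_j)\Mb\db^{(j-1)} = (\Ib-\Pb_j)(\Mb-\theta\Ib)\db^{(j-1)}$. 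Using that orthogonal projections are non-expansive, this yields, for every $\theta$,
\[
\nbr{\rb^{(j)}}_2 \;\le\; \alpha_{j-1}\,\nbr{\Mb-\theta\Ib}_2\,\nbr{\db^{(j-1)}}_2 .
\]
Choosing $\theta = \tfrac12(\lambda_{\max}+\lambda_{\min})$ makes $\nbr{\Mb-\theta\Ib}_2 = \tfrac12(\lambda_{\max}-\lambda_{\min})$, so $\nbr{\rb^{(j)}}_2 \le \tfrac12(\lambda_{\max}-\lambda_{\min})\,\alpha_{j-1}\nbr{\db^{(j-1)}}_2$.

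It remains to show $\alpha_{j-1}\nbr{\db^{(j-1)}}_2 \le \nbr{\rb^{(j-1)}}_2/\lambda_{\min}$. By Cauchy--Schwarz and $\inner{\rb^{(j-1)}}{\db^{(j-1)}} = \nbr{\rb^{(j-1)}}_2^2$ we get $\nbr{\db^{(j-1)}}_2 \ge \nbr{\rb^{(j-1)}}_2$, and the Rayleigh quotient bound gives $\inner{\db^{(j-1)}}{\Mb\db^{(j-1)}} \ge \lambda_{\min}\nbr{\db^{(j-1)}}_2^2$. Combining,
\[
\alpha_{j-1}\nbr{\db^{(j-1)}}_2 = \frac{\nbr{\rb^{(j-1)}}_2^2\,\nbr{\db^{(j-1)}}_2}{\inner{\db^{(j-1)}}{\Mb\db^{(j-1)}}} \le \frac{\nbr{\rb^{(j-1)}}_2^2}{\lambda_{\min}\nbr{\db^{(j-1)}}_2} \le \frac{\nbr{\rb^{(j-1)}}_2}{\lambda_{\min}} .
\]
Plugging this into the previous display gives $\nbr{\rb^{(j)}}_2 \le \tfrac{\lambda_{\max}-\lambda_{\min}}{2\lambda_{\min}}\nbr{\rb^{(j-1)}}_2 = \tfrac{\kappa(\Mb)-1}{2}\nbr{\rb^{(j-1)}}_2 = \tfrac{\kappa^2(\Qb^{-\nicefrac{1}{2}}\Ab\Db)-1}{2}\nbr{\rb^{(j-1)}}_2$, which, since $\nbr{\tilde{\fb}^{(k)}}_2 = \nbr{\rb^{(k)}}_2$, is exactly the claim.

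The step I expect to be the crux is the recentering in the second paragraph: the naive estimate $\nbr{(\Ib-\Pb_j)\Mb\db^{(j-1)}}_2 \le \lambda_{\max}\nbr{\db^{(j-1)}}_2$ only produces a factor of order $\kappa(\Mb)$, not $(\kappa(\Mb)-1)/2$; getting the sharper constant requires noticing that one is free to replace $\Mb$ by $\Mb-\theta\Ib$ before taking norms (because $\db^{(j-1)}\in\Kcal_j$) and then pairing this with the bound $\alpha_{j-1}\nbr{\db^{(j-1)}}_2 \le \nbr{\rb^{(j-1)}}_2/\lambda_{\min}$ so that precisely the condition number emerges. An alternative derivation writes $\rb^{(j)} = \phi_j(\Mb)\rb^{(0)}$ with $\phi_j$ the normalized Lanczos polynomial whose roots are the Ritz values, and controls the ratio $\nbr{\rb^{(j)}}_2/\nbr{\rb^{(j-1)}}_2$ via the ratio of leading coefficients together with Cauchy interlacing of the Ritz values; it reaches the same bound but is heavier, so I would favor the recurrence-based argument above.
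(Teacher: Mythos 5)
Your proof is correct, and it is worth noting that the paper itself contains no argument for this lemma: it is imported wholesale as ``Theorem 8 of \citep{bouyouli2009new}'', so your derivation is a self-contained replacement for that citation rather than a variant of an internal proof. Your route uses only the elementary CG identities for the SPD system with $\Mb=\Qb^{-\nicefrac{1}{2}}\Ab\Db^2\Ab^\ts\Qb^{-\nicefrac{1}{2}}$ (namely $\rb^{(j)}\perp\Kcal_j$, $\rb^{(j-1)},\db^{(j-1)}\in\Kcal_j$, $\inner{\rb^{(j-1)}}{\db^{(j-1)}}=\nbr{\rb^{(j-1)}}_2^2$, and the formula for $\alpha_{j-1}$), and the two genuinely nontrivial moves both check out: the recentering $(\Ib-\Pb_j)\Mb\db^{(j-1)}=(\Ib-\Pb_j)(\Mb-\theta\Ib)\db^{(j-1)}$ with $\theta=\tfrac12(\lambda_{\max}+\lambda_{\min})$, which is exactly what upgrades the naive factor of order $\kappa(\Mb)$ to $\tfrac12(\lambda_{\max}-\lambda_{\min})$, and the bound $\alpha_{j-1}\nbr{\db^{(j-1)}}_2\le\nbr{\rb^{(j-1)}}_2/\lambda_{\min}$ via Cauchy--Schwarz plus the Rayleigh quotient. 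Together these give $(\kappa(\Mb)-1)/2$, and the identification $\kappa(\Mb)=\kappa^2(\Qb^{-\nicefrac{1}{2}}\Ab\Db)$ and the sign flip between $\tilde{\fb}^{(j)}$ and the conventional residual are handled correctly (a quick $2\times 2$ diagonal example confirms consistency, where the true one-step ratio is even smaller, $(\kappa-1)/(2\sqrt{\kappa})$, so your bound is valid though not tight). What the paper's citation buys is brevity; what your argument buys is independence from the external reference and transparency about where the constant comes from. One cosmetic point: in the degenerate case, if $\rb^{(j)}=\zero$ but $\rb^{(j-1)}\neq\zero$ the inequality is trivially true rather than ``both sides vanish''; only when $\rb^{(j-1)}=\zero$ (CG has terminated) do both residuals vanish.
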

\noindent\textbf{Satisfying eqn.~\eqref{eq:pdcond2}.} From Lemma~\ref{lem:cond3}, we get
\begin{flalign}\label{eq:condbd}
\kappa^2(\Qb^{-\nicefrac{1}{2}}\Ab\Db)=\frac{\sigma_{\max}^2(\Qb^{-\nicefrac{1}{2}}\Ab\Db)}{\sigma_{\min}^2(\Qb^{-\nicefrac{1}{2}}\Ab\Db)}\le\frac{1+\zeta/2}{1-\zeta/2}.
\end{flalign}
Combining eqn.~\eqref{eq:condbd} with Lemma~\ref{lem:prev1},
\begin{flalign}
\|\tilde{\fb}^{(j)}\|_2\le~\frac{\frac{1+\zeta/2}{1-\zeta/2}-1}{2}\|\tilde{\fb}^{(j-1)}\|_2
=~\frac{\zeta}{2-\zeta}\|\tilde{\fb}^{(j-1)}\|_2
\le~\zeta \|\tilde{\fb}^{(j-1)}\|_2\label{eq:rec}\,,
\end{flalign}
where the last inequality follows from $\zeta\le1$. Applying eqn.~\eqref{eq:rec} recursively, we get
\begin{flalign}
\|\tilde{\fb}^{(t)}\|_2\le\zeta\|\tilde{\fb}^{(t-1)}\|_2\le\dots\le\zeta^t\|\tilde{\fb}^{(0)}\|_2=\zeta^t\|\Qb^{-\nicefrac{1}{2}}\pb\|_2\nonumber\,,
\end{flalign}
which proves the condition of eqn.~\eqref{eq:pdcond2}.

We remark that one can consider using MINRES~\citep{paige1975solution} instead of CG. Our results hinges on bounding the two-norm of the residual. MINRES finds, at each iteration, the optimal vector with respect the two-norm of the residual inside the same Krylov subspace of CG for the corresponding iteration. Thus, the bound we prove for CG applies to MINRES as well.

\subsection{Chebyshev Iteration}
Now, we show that we could potentially replace CG with Chebyshev iteration (see Algorithm~1 of \citep{gutknecht2002chebyshev}) in the step~(3) of Algorithm~\ref{algo:PCG}. As already discussed in Section~\ref{sxn:contrib}, the only requirement Chebyshev iteration needs is to have an upper bound and a lower bound for the singular values of $\Qb^{-\nicefrac{1}{2}}\Ab\Db^2\Ab^\ts\Qb^{-\nicefrac{1}{2}}$,
which we already have in the form of Lemma~\ref{lem:cond3}. Therefore, all we need to show is that the sketching matrix $\Wb$ satisfies eqn.~\eqref{eq:pdcond2} using Chebyshev iteration. For this, we state the following result from~\citep{Gutknecht08}, which is instrumental in proving eqn.~\eqref{eq:pdcond2}.
\begin{lemma}[Theorem~1.6.2 of \citep{Gutknecht08}]
The residual norm reduction of the Chebyshev iteration, when applied to an symmetric positive definite (SPD) system whose condition number is upper bounded by $\mathcal{U}$, is bounded according to
\begin{flalign}
\frac{\|\tilde{\fb}^{(t)}\|_2}{\|\tilde{\fb}^{(0)}\|_2} \leq 2\left[\left(\frac{\sqrt{\mathcal{U}}+1}{\sqrt{\mathcal{U}}-1}\right)^{t}+\left(\frac{\sqrt{\mathcal{U}}-1}{\sqrt{\mathcal{U}}+1}\right)^{t}\right]^{-1}\label{eq:cs}
\end{flalign}
\end{lemma}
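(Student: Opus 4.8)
The statement is a classical residual bound for the Chebyshev semi-iteration, so I expect it to be quoted (or re-derived in a line or two), its substantive use being the one-step passage to eqn.~\eqref{eq:pdcond2}; I sketch both. To re-derive the lemma, the plan is to use the extremal property of Chebyshev polynomials. The Chebyshev iteration is a \emph{polynomial} method: applied to an SPD system $\Mb\zb=\bb$, its residual after $t$ steps equals $\tilde{\fb}^{(t)}=q_t(\Mb)\,\tilde{\fb}^{(0)}$ for a \emph{fixed} polynomial $q_t$ with $\deg q_t\le t$ and $q_t(0)=1$; moreover --- and this is where the only real care is needed --- the three-term recurrence the algorithm uses to update its iterates is exactly the normalized recurrence of the degree-$t$ first-kind Chebyshev polynomial $T_t$, affinely transplanted from $[-1,1]$ onto the spectral interval $[\lambda_{\min},\lambda_{\max}]$ supplied to the method and rescaled to take the value $1$ at $0$, i.e. $q_t(\lambda)=T_t\!\big(\tfrac{\lambda_{\max}+\lambda_{\min}-2\lambda}{\lambda_{\max}-\lambda_{\min}}\big)\big/\,T_t(\omega)$ with $\omega=\tfrac{\lambda_{\max}+\lambda_{\min}}{\lambda_{\max}-\lambda_{\min}}$ (the normalization $q_t(0)=1$ being immediate since the argument at $\lambda=0$ is $\omega$). \textbf{Matching this recurrence to the Chebyshev one and checking that $q_t(0)=1$ is preserved at every step is the main obstacle}; everything after it is a routine minimax estimate.

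Given the polynomial characterization, I would pass to the spectrum. Since $\Mb$ is SPD, $\nbr{\tilde{\fb}^{(t)}}_2\le\nbr{q_t(\Mb)}_2\,\nbr{\tilde{\fb}^{(0)}}_2=\big(\max_{\lambda\in\mathrm{spec}(\Mb)}|q_t(\lambda)|\big)\nbr{\tilde{\fb}^{(0)}}_2\le\big(\max_{\lambda\in[\lambda_{\min},\lambda_{\max}]}|q_t(\lambda)|\big)\nbr{\tilde{\fb}^{(0)}}_2$, and on $[\lambda_{\min},\lambda_{\max}]$ the numerator's argument lies in $[-1,1]$, where $|T_t|\le1$, so this last maximum is at most $1/T_t(\omega)$ with $\omega=\tfrac{\kappa+1}{\kappa-1}$ and $\kappa=\lambda_{\max}/\lambda_{\min}$ the condition number of $\Mb$. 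Plugging in $T_t(\omega)=\tfrac{1}{2}[(\omega+\sqrt{\omega^2-1})^t+(\omega-\sqrt{\omega^2-1})^t]$ (valid for $\omega\ge1$) and simplifying with $\omega^2-1=\tfrac{4\kappa}{(\kappa-1)^2}$ --- which gives $\omega\pm\sqrt{\omega^2-1}=\tfrac{\sqrt\kappa\pm1}{\sqrt\kappa\mp1}$ --- yields $\nbr{\tilde{\fb}^{(t)}}_2/\nbr{\tilde{\fb}^{(0)}}_2\le 2\big[(\tfrac{\sqrt\kappa+1}{\sqrt\kappa-1})^t+(\tfrac{\sqrt\kappa-1}{\sqrt\kappa+1})^t\big]^{-1}$. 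Finally, the right-hand side is increasing in $\kappa$ (as $\omega(\kappa)$ decreases on $(1,\infty)$ and $T_t$ increases on $[1,\infty)$, so $T_t(\omega(\kappa))$ decreases and its reciprocal grows), so replacing $\kappa$ by any upper bound $\mathcal{U}\ge\kappa$ gives eqn.~\eqref{eq:cs}.

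For the passage to eqn.~\eqref{eq:pdcond2}: Lemma~\ref{lem:cond3} bounds the eigenvalues of $\Qb^{-\nicefrac{1}{2}}\Ab\Db^2\Ab^\ts\Qb^{-\nicefrac{1}{2}}$ between $(1+\zeta/2)^{-1}$ and $(1-\zeta/2)^{-1}$, hence its condition number is at most $\mathcal{U}=\tfrac{2+\zeta}{2-\zeta}$. Substituting into eqn.~\eqref{eq:cs}, discarding the positive subdominant term, and abbreviating $a=\tfrac{\sqrt{\mathcal{U}}+1}{\sqrt{\mathcal{U}}-1}$, the residual ratio after $t$ steps is at most $2a^{-t}$; and a short computation gives $a\zeta=2+\sqrt{4-\zeta^2}\ge 2+\sqrt3>2$ for $\zeta\in(0,1]$, whence $2a^{-t}\le\zeta^t$ for every $t\ge1$. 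So the Chebyshev residual decays at least as fast as $\zeta^t$ --- exactly eqn.~\eqref{eq:pdcond2} --- and, just as in the CG analysis, $t=\Ocal(\log n)$ iterations suffice for the accuracy needed by the IPM.
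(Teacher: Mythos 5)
Your proposal is correct and follows essentially the same route as the paper: the lemma itself is not proved there but quoted verbatim from Theorem~1.6.2 of \citep{Gutknecht08} (your minimax/Chebyshev-polynomial re-derivation is the standard argument behind that citation), and your passage to eqn.~\eqref{eq:pdcond2} is algebraically identical to the paper's, since your quantity $a\zeta=2+\sqrt{4-\zeta^2}$ is exactly the reciprocal of the paper's factor $\tfrac{2\zeta}{4\big(1+\sqrt{1-(\nicefrac{\zeta}{2})^2}\big)}$, so both give $\|\tilde{\fb}^{(t)}\|_2\le\zeta^t\|\Qb^{-\nicefrac{1}{2}}\pb\|_2$ for $t\ge 1$ (the paper additionally notes the trivial $t=0$ case).
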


\noindent\textbf{Satisfying eqn.~\eqref{eq:pdcond2}.} From Lemma~\ref{lem:cond3}, we directly have  $\mathcal{U}=\frac{2+\zeta}{2-\zeta}$ 
. Note that for $t=0$ and starting from $\tilde{\zb}^0=\zero$ (\emph{i.e., $\tilde{\fb}^{(0)}=\Qb^{-\nicefrac{1}{2}}\pb$}), we directly get eqn.~\eqref{eq:pdcond2} from eqn.~\eqref{eq:cs}.  Therefore, we only show that eqn.~\eqref{eq:pdcond2} is satisfied for $t\ge 1$.
%
We already have $\tilde{\fb}^{(0)}=\Qb^{-\nicefrac{1}{2}}\pb$ and letting $a=\left(\frac{\sqrt{\mathcal{U}}-1}{\sqrt{\mathcal{U}}+1}\right)^{t}$, we rewrite eqn.~\eqref{eq:cs} as follows
\begin{flalign}
\|\tilde{\fb}^{(t)}\|_2\le\frac{2}{a+\frac{1}{a}}\|\Qb^{-\nicefrac{1}{2}}\pb\|_2=\frac{2\,a}{(a^2+1)}\|\Qb^{-\nicefrac{1}{2}}\pb\|_2\le 2\,a\,\|\Qb^{-\nicefrac{1}{2}}\pb\|_2\,,\label{eq:cs2}
\end{flalign}
where the last inequality in eqn.~\eqref{eq:cs2} holds as $a^2>0$. Now, we'll work on the bound in eqn.~\eqref{eq:cs2}. Putting back $a=\left(\frac{\sqrt{\mathcal{U}}-1}{\sqrt{\mathcal{U}}+1}\right)^{t}$ and $\mathcal{U}=\frac{2+\zeta}{2-\zeta}$, we rewrite eqn.~\eqref{eq:cs2} as
\begin{flalign}
\|\tilde{\fb}^{(t)}\|_2\le&~ 2\left(\frac{\sqrt{\mathcal{U}}-1}{\sqrt{\mathcal{U}}+1}\right)^{t}\,\|\Qb^{-\nicefrac{1}{2}}\pb\|_2=2\left(\frac{\sqrt{\frac{2+\zeta}{2-\zeta}}-1}{\sqrt{\frac{2+\zeta}{2-\zeta}}+1}\right)^{t}\,\|\Qb^{-\nicefrac{1}{2}}\pb\|_2\nonumber\\
=~& 2\left(\frac{\sqrt{2+\zeta}-\sqrt{2-\zeta}}{\sqrt{2+\zeta}+\sqrt{2-\zeta}}\right)^{t}\,\|\Qb^{-\nicefrac{1}{2}}\pb\|_2=2\left(\frac{2\zeta}{\big(\sqrt{2+\zeta}+\sqrt{2-\zeta}\big)^2}\right)^{t}\,\|\Qb^{-\nicefrac{1}{2}}\pb\|_2\nonumber\\
=~& 2\left(\frac{2\zeta}{4\,\big(1+\sqrt{1-(\nicefrac{\zeta}{2})^2}\big)}\right)^{t}\,\|\Qb^{-\nicefrac{1}{2}}\pb\|_2=\frac{\zeta^t}{2^{t-1}\big(1+\sqrt{1-(\nicefrac{\zeta}{2})^2}\big)^t}\,\|\Qb^{-\nicefrac{1}{2}}\pb\|_2\nonumber\\
\le~& \zeta^t\,\|\Qb^{-\nicefrac{1}{2}}\pb\|_2\nonumber\,,
\end{flalign}
where the last inequality holds as $t\ge 1$ and the denominator is greater than unity. This establishes eqn.~\eqref{eq:pdcond2}.

\section{The Feasible IPM algorithm}\label{sxn:IIPM}

In order to avoid spurious solutions, primal-dual path-following IPMs bias the search direction towards the \emph{central path} and restrict the iterates to a neighborhood of the central path. This search is controlled by the \emph{centering parameter} $\sigma\in[0,1]$.
%
At each iteration, given the current feasible solution $(\xb^{k},\yb^{k},\sbb^{k})$, a standard feasible IPM obtains the search direction $(\Delta\xb^k,\Delta\yb^k,\Delta\sbb^k)$ by solving the following system of linear equations:
\begin{subequations}\label{eq:system}
	\begin{flalign}
		\Ab\Db^2\Ab^\ts\Delta\yb^k=~&\pb^k\,,\label{eq:normal1}\\
		\Delta\sbb^k=~&-\Ab^\ts\Delta\yb^k\,,\label{eq:dels}\\
		\Delta\xb^k=~&-\xb^k+\sigma\mu_k\Sb^{-1}\one_\dimtwo-\Db^2\Delta\sbb^k.\label{eq:delx}
	\end{flalign}
\end{subequations}
Here $\Db$ and $\Sb$ are computed given the current iterate $(\xb^{k}$ and $\sbb^{k})$; we skip the indices on $\Db$ and $\Sb$ for notational simplicity. 
After solving the above system, the feasible IPM Algorithm~\ref{algo:iipm} proceeds by computing a step-size $\alphabar$ to return:
\begin{flalign}\label{eqn:update}
	(\xb^{k+1},\yb^{k+1},\sbb^{k+1}) = (\xb^{k},\yb^{k},\sbb^{k}) + \alphabar (\Delta \xb^k,\Delta \yb^k,\Delta \sbb^k).
\end{flalign}
%
In the above linear system in eqn.~\eqref{eq:system}, we also use \textit{duality measure} $\mu_k=\nicefrac{{\xb^k}^\ts\sbb^k}{n}$ and the vector
\begin{flalign}\label{eqn:pdef}
	\pb^k&=-\sigma\mu_k\Ab\Sb^{-1}\one_\dimtwo+\Ab\xb^k.
\end{flalign}
Given $\Delta\yb^k$ from eqn.~(\ref{eq:normal1}), $\Delta\sbb^k$ and $\Delta\xb^k$ are easy to compute from eqns.~\eqref{eq:dels} and \eqref{eq:delx}, as they only involve matrix-vector products.  However, since we  use Algorithm~\ref{algo:PCG} to solve eqn.~\eqref{eq:normal1} approximately using the sketching-based preconditioned solver, the iterates $(\xb^{k},\yb^{k},\sbb^{k})$ do not satisfy the primal and dual constraints exactly.

For notational simplicity, we now drop the dependency of vectors and scalars on the iteration counter $k$. Let $\hat{\Delta \yb}=\Qb^{\nicefrac{-1}{2}}\tilde{\zb}^t$ be the approximate solution to eqn.~(\ref{eq:normal1}). In order to account for the loss of accuracy due to the approximate solver, we compute $\hat{\Delta\xb}$ as follows:
\begin{flalign}
	\hat{\Delta\xb}=~-\xb+\sigma\mu\Sb^{-1}\one_\dimtwo-\Db^2\hat{\Delta\sbb}-\Sb^{-1}\vb\label{eq:delxhat}.
\end{flalign}
Here $\vb\in\R{n}$ is a perturbation vector that needs to exactly satisfy the following invariant at each iteration of the feasible IPM:
\begin{flalign}
	\Ab\Sb^{-1}\vb=\Ab\Db^2\Ab^\ts\hat{\Delta\yb}-\pb\,\label{eq:addl}.
\end{flalign}
We note that the computation of $\hat{ \Delta \sbb}$ is still done using, essentially, eqn.~\eqref{eq:dels}, namely
\begin{flalign}
\hat{\Delta\sbb}^k=~&-\Ab^\ts\hat{\Delta\yb}^k.\label{eq:delshat}
\end{flalign}
%
At each iteration of the IPM, if $\vb$ satisfies eqn.~\eqref{eq:addl}, then it can be shown that the primal and dual feasibility constraints are satisfied exactly.

\vspace{0.02in}\noindent\textbf{Construction of $\vb$.} There are many choices for $\vb$ satisfying eqn.~\eqref{eq:addl}. \textcolor{black}{Intuitively, we would expect the approximation error due to the solver to be reasonably small. Therefore,}
to prove convergence, it is desirable for $\vb$ to have a small norm and hence a natural choice is
$$\vb=(\Ab\Sb^{-1})^{\dagger}(\Ab\Db^2\Ab^\ts\hat{\Delta\yb}-\pb)\,.$$
\textcolor{black}{The aforementioned choice of $\vb$ has a clear geometric interpretation: it not only ensures that $\mathbf{A} \mathbf{S}^{-1} \mathbf{v}$ is a Euclidean projection of the infeasible solution onto the column space of $\Ab\Db$, but it is also the minimum norm least squares solution and satisfies the invariant in eqn.~\eqref{eq:addl} exactly. However, computing $\vb$ such a way is expensive, as it involves the evaluation of the pseudoinverse of  $\mathbf{A} \mathbf{S}^{-1}$, which is expensive, taking time $\Ocal(\dimone^2\dimtwo)$.}
Instead, we propose
to construct $\vb$ using the sketching matrix $\Wb$ of Section~\ref{sxn:background}. More precisely, we construct the perturbation vector
\begin{flalign}\label{eq:compv}
	\vb=(\Xb\Sb)^{\nicefrac{1}{2}}\Wb(\Ab\Db\Wb)^{\dagger}(\Ab\Db^2\Ab^\ts\hat{\Delta\yb}-\pb).
\end{flalign}
\textcolor{black}{Similar to the minimum-norm solution mentioned above, our sketching based solution in eqn.~\eqref{eq:compv} 
also guarantees that $\mathbf{A} \mathbf{S}^{-1} \mathbf{v}$ is a projection of the ``infeasibility'' vector $\Ab\Db^2\Ab^\ts\hat{\Delta\yb}-\pb$ onto the column space of $\Ab\Db\Wb$ (which is identical to the column space of $\Ab\Db$) and satisfies eqn.~\eqref{eq:addl} exactly (see Lemma~\ref{lem:fullrankR} below). The computation of our proposed $\vb$ is dominated by the cost of computing $(\Ab\Db\Wb)^{\dagger}$, which can be done much more efficiently as discussed in Section~\ref{sxn:PCG}. Actually, we do not even need to compute it while evaluating $\vb$, since we have already computed it during the construction of $\Qb^{-1/2}$ in Algorithm~\ref{algo:PCG}. 
Finally, in Lemma~\ref{lem:conouter}, we showed that using our choice of $\vb$, $\|\vb\|_2$  remains small enough (with a few iterations of the iterative solver), which essentially leads to the convergence of the IPM.}
%

%
\begin{lemma}\label{lem:fullrankR}
	Let $\Wb\in\RR{\dimtwo}{w}$ be the sketching matrix of Section~\ref{sxn:background} and $\vb$ be the perturbation vector of eqn.~(\ref{eq:compv}). Then, with probability at least $1-\delta$, $\rank(\Ab\Db\Wb)=\dimone$ and
	$\vb$ satisfies eqn.~\eqref{eq:addl}.
\end{lemma}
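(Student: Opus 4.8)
The plan is to split the claim into two parts: (i) the rank condition $\rank(\Ab\Db\Wb) = \dimone$, and (ii) that $\vb$ as defined in eqn.~\eqref{eq:compv} satisfies the invariant of eqn.~\eqref{eq:addl}. For part (i), I would simply invoke the argument already carried out inside the proof of Lemma~\ref{lem:cond3}: applying the sketching result of~\citep{Cohen2016} to $\Zb = \Vb^\ts$ with $r = m$ gives, with probability at least $1-\delta$, the bound $\|\Vb^\ts\Wb\Wb^\ts\Vb - \Ib_\dimone\|_2 \le \zeta/2$ of eqn.~\eqref{eq:cnd1}. This forces every eigenvalue of $\Vb^\ts\Wb\Wb^\ts\Vb$ to lie in $[1-\zeta/2,\,1+\zeta/2] \subset (0,\infty)$, so $\rank(\Vb^\ts\Wb) = \dimone$; since $\Ab\Db = \Ub\Sigmab\Vb^\ts$ with $\Ub\Sigmab$ nonsingular ($\Ab$ has full row rank $\dimone$ and $\Db \succ \zero$), we get $\rank(\Ab\Db\Wb) = \rank(\Ub\Sigmab\Vb^\ts\Wb) = \dimone$. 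Everything here is conditioned on the single event of eqn.~\eqref{eq:cnd1}, which holds with probability $1-\delta$.

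For part (ii), I would argue on the event that $\rank(\Ab\Db\Wb) = \dimone$, so that $\Ab\Db\Wb$ has full row rank and therefore $(\Ab\Db\Wb)(\Ab\Db\Wb)^\dagger = \Ib_\dimone$. Substituting $\vb = (\Xb\Sb)^{\nicefrac{1}{2}}\Wb(\Ab\Db\Wb)^{\dagger}(\Ab\Db^2\Ab^\ts\hat{\Delta\yb}-\pb)$ into the left-hand side of eqn.~\eqref{eq:addl} and using $\Db = \Xb^{1/2}\Sb^{-1/2}$, so that $\Sb^{-1}(\Xb\Sb)^{1/2} = \Sb^{-1}\Xb^{1/2}\Sb^{1/2} = \Xb^{1/2}\Sb^{-1/2} = \Db$, I get
\begin{flalign*}
\Ab\Sb^{-1}\vb = \Ab\Sb^{-1}(\Xb\Sb)^{1/2}\Wb(\Ab\Db\Wb)^\dagger(\Ab\Db^2\Ab^\ts\hat{\Delta\yb}-\pb) = (\Ab\Db\Wb)(\Ab\Db\Wb)^\dagger(\Ab\Db^2\Ab^\ts\hat{\Delta\yb}-\pb).
\end{flalign*}
Since $(\Ab\Db\Wb)(\Ab\Db\Wb)^\dagger = \Ib_\dimone$ under the full-row-rank condition, the right-hand side collapses to $\Ab\Db^2\Ab^\ts\hat{\Delta\yb}-\pb$, which is exactly eqn.~\eqref{eq:addl}. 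Combining the two parts, both statements hold on the event of eqn.~\eqref{eq:cnd1}, hence with probability at least $1-\delta$.

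The only mild subtlety — and the ``main obstacle'' if any — is the algebraic identity $\Sb^{-1}(\Xb\Sb)^{1/2} = \Db$, which relies on $\Xb$ and $\Sb$ being diagonal with positive entries (true since $(\xb,\sbb) > \zero$ along the feasible IPM iterates) so that the square roots commute; and making sure one correctly uses the full-row-rank pseudoinverse property $M M^\dagger = \Ib$ (valid precisely because $M = \Ab\Db\Wb \in \RR{\dimone}{w}$ has rank $\dimone \le w$). Neither step is deep; the whole lemma is essentially a bookkeeping consequence of the sketching guarantee already established for Lemma~\ref{lem:cond3}.
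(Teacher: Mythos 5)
Your proposal is correct and matches the paper's own proof essentially step for step: both establish $\rank(\Ab\Db\Wb)=\dimone$ by reusing the event of eqn.~\eqref{eq:cnd1} and the rank argument from the proof of Lemma~\ref{lem:cond3}, then use the full-row-rank identity $(\Ab\Db\Wb)(\Ab\Db\Wb)^\dagger=\Ib_\dimone$ together with $\Sb^{-1}(\Xb\Sb)^{\nicefrac{1}{2}}=\Db$ to verify eqn.~\eqref{eq:addl}. No gaps.
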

\begin{proof}
	Let $\Ab\Db=\Ub\Sigmab\Vb^\ts$ be the thin SVD representation of $\Ab\Db$.
	We use the exact same $\Wb$ as discussed in Section~\ref{sxn:PCG}.  Therefore, eqn.~\eqref{eq:cnd1} holds with probability $1-\delta$ and it directly follows from the proof of Lemma~\ref{lem:cond3} that $\rank(\Ab\Db\Wb)=\dimone$.
Recall that $\Ab\Db\Wb$ has full \emph{row-rank} and thus $\Ab\Db\Wb\,(\Ab\Db\Wb)^\dagger=\Ib_\dimone$. Therefore, taking $\vb=(\Xb\Sb)^{\nicefrac{1}{2}}\Wb(\Ab\Db\Wb)^{\dagger}(\Ab\Db^2\Ab^\ts\hat{\Delta\yb}-\pb)$, we get
	\begin{flalign*}
	\Ab\Sb^{-1}\,\vb=&~\Ab\Sb^{-1}(\Xb\Sb)^{\nicefrac{1}{2}}\Wb(\Ab\Db\Wb)^{\dagger}(\Ab\Db^2\Ab^\ts\hat{\Delta\yb}-\pb)\nonumber\\
	=&~\Ab\Db\Wb(\Ab\Db\Wb)^{\dagger}(\Ab\Db^2\Ab^\ts\hat{\Delta\yb}-\pb)\nonumber\\
	=&~\Ab\Db^2\Ab^\ts\hat{\Delta\yb}-\pb\,,
	\end{flalign*}
	where the second equality follows from $\Db = \Xb^{1/2}\Sb^{-1/2}$.
\end{proof}
 \noindent We emphasize here that we use the same exact sketching matrix $\Wb \in \mathbb{R}^{n \times w}$ to form the preconditioner used in the iterative solver of Section~\ref{sxn:PCG} \textit{as well as} the vector $\vb$ in eqn.~(\ref{eq:compv}). This allows us to sketch $\Ab \Db$  only once, thus saving time in practice. Next, we present a bound for the two-norm of the perturbation vector $\vb$ of eqn.~(\ref{eq:compv}).

\begin{lemma}\label{lem:v}
	With probability at least $1-\delta$, our perturbation vector $\vb$ in Lemma~\ref{lem:fullrankR} satisfies
	\begin{flalign}
		\|\vb\|_2\le\sqrt{3n\mu}\,\|\tilde{\fb}^{(t)}\|_2,
	\end{flalign}
	with $\tilde{\fb}^{(t)}=\Qb^{-\nicefrac{1}{2}}\Ab\Db^2\Ab^\ts\Qb^{-\nicefrac{1}{2}}\tilde{\zb}^t-\Qb^{-\nicefrac{1}{2}}\pb$.
\end{lemma}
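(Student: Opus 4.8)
The plan is to collapse $\vb$ into the action of a single matrix on a short residual vector, and then control that matrix's spectral norm with the sketching guarantee applied to a cleverly augmented matrix; the expensive-looking pseudoinverse will cancel against $\Qb$.

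\emph{Reduction.} Since $\hat{\Delta\yb}=\Qb^{-\nicefrac{1}{2}}\tilde{\zb}^{t}$, the vector $\rb:=\Ab\Db^2\Ab^\ts\hat{\Delta\yb}-\pb$ satisfies $\Qb^{-\nicefrac{1}{2}}\rb=\tilde{\fb}^{(t)}$, i.e.\ $\rb=\Qb^{\nicefrac{1}{2}}\tilde{\fb}^{(t)}$. By Lemma~\ref{lem:fullrankR}, $\Ab\Db\Wb$ has full row rank with probability at least $1-\delta$, so $(\Ab\Db\Wb)^\dagger=(\Ab\Db\Wb)^\ts\big((\Ab\Db\Wb)(\Ab\Db\Wb)^\ts\big)^{-1}=\Wb^\ts\Db\Ab^\ts\Qb^{-1}$. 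Writing the thin SVD $\Ab\Db=\Ub\Sigmab\Vb^\ts$ and $\Mb:=\Vb^\ts\Wb$ (so $\Db\Ab^\ts=\Vb\Sigmab\Ub^\ts$ and $\Qb=\Ub\Sigmab\Mb\Mb^\ts\Sigmab\Ub^\ts$), eqn.~\eqref{eq:compv} becomes $\vb=\big[(\Xb\Sb)^{\nicefrac{1}{2}}\Wb\Wb^\ts\Vb\big]\,\db$ with $\db:=\Sigmab\Ub^\ts\Qb^{-1}\rb=\Sigmab\Ub^\ts\Qb^{-\nicefrac{1}{2}}\tilde{\fb}^{(t)}$, hence $\|\vb\|_2\le\big\|(\Xb\Sb)^{\nicefrac{1}{2}}\Wb\Wb^\ts\Vb\big\|_2\,\|\db\|_2$. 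It remains to bound the two factors.

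\emph{The easy factor.} From $\Qb^{-\nicefrac{1}{2}}=\Ub(\Sigmab\Mb\Mb^\ts\Sigmab)^{-\nicefrac{1}{2}}\Ub^\ts$ we get $\db=\Sigmab(\Sigmab\Mb\Mb^\ts\Sigmab)^{-\nicefrac{1}{2}}\Ub^\ts\tilde{\fb}^{(t)}$, and $\big\|\Sigmab(\Sigmab\Mb\Mb^\ts\Sigmab)^{-\nicefrac{1}{2}}\big\|_2^2=\big\|(\Mb\Mb^\ts)^{-1}\big\|_2\le(1-\zeta/2)^{-1}$ since eqn.~\eqref{eq:cnd1} gives $\lambda_{\min}(\Mb\Mb^\ts)\ge 1-\zeta/2$; therefore $\|\db\|_2\le(1-\zeta/2)^{-\nicefrac{1}{2}}\,\|\tilde{\fb}^{(t)}\|_2$.

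\emph{The hard factor, and the main obstacle.} I expect bounding $\big\|(\Xb\Sb)^{\nicefrac{1}{2}}\Wb\Wb^\ts\Vb\big\|_2$ to be the crux: one cannot pull $(\Xb\Sb)^{\nicefrac{1}{2}}$ out and bound $\|\Wb\Wb^\ts\Vb\|_2$ by a constant, because eqn.~\eqref{eq:cnd1} controls $\Wb\Wb^\ts$ only \emph{as seen through} $\Vb$, and $\|\Wb\Wb^\ts\Vb\|_2$ can be as large as $\Theta(\sqrt{\dimtwo/w})$. The fix is to keep $(\Xb\Sb)^{\nicefrac{1}{2}}$ attached and invoke eqn.~\eqref{eqn:pdprec} on the \emph{augmented} matrix $\Zb:=\big[\Vb^\ts\,;\,(\dimtwo\mu)^{-\nicefrac{1}{2}}(\Xb\Sb)^{\nicefrac{1}{2}}\big]$ (stacking rows) with $r=\dimone$ — eqn.~\eqref{eqn:pdprec} holds for any matrix $\Zb$, and a single oblivious $\Wb$ works for this $\Zb$ and for $\Zb=\Vb^\ts$ simultaneously, its $\Vb^\ts$-block being exactly what Lemmas~\ref{lem:cond3} and~\ref{lem:fullrankR} use. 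Since every $x_is_i\le\xb^\ts\sbb=\dimtwo\mu$ we have $\|(\Xb\Sb)^{\nicefrac{1}{2}}\|_2\le\sqrt{\dimtwo\mu}$, so $\|\Zb\|_2^2\le 1+\|\xcircs\|_\infty/(\dimtwo\mu)\le2$ and $\|\Zb\|_F^2=\dimone+1$, giving $\tfrac{\zeta}{4}\big(\|\Zb\|_2^2+\|\Zb\|_F^2/\dimone\big)\le\zeta$. The bottom-left block of $\Zb\Wb\Wb^\ts\Zb^\ts-\Zb\Zb^\ts$ is $(\dimtwo\mu)^{-\nicefrac{1}{2}}\big((\Xb\Sb)^{\nicefrac{1}{2}}\Wb\Wb^\ts\Vb-(\Xb\Sb)^{\nicefrac{1}{2}}\Vb\big)$, whose spectral norm is at most that of the full perturbation, i.e.\ $\le\zeta$; hence $\big\|(\Xb\Sb)^{\nicefrac{1}{2}}\Wb\Wb^\ts\Vb\big\|_2\le\big\|(\Xb\Sb)^{\nicefrac{1}{2}}\Vb\big\|_2+\zeta\sqrt{\dimtwo\mu}\le(1+\zeta)\sqrt{\dimtwo\mu}$. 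Multiplying the two bounds, $\|\vb\|_2\le\frac{1+\zeta}{\sqrt{1-\zeta/2}}\sqrt{\dimtwo\mu}\,\|\tilde{\fb}^{(t)}\|_2$, which equals $\sqrt{3\dimtwo\mu}\,\|\tilde{\fb}^{(t)}\|_2$ at the accuracy parameter $\zeta=\nicefrac{1}{2}$ fixed in Theorem~\ref{thm:2f} (and is $\le\sqrt{3\dimtwo\mu}\,\|\tilde{\fb}^{(t)}\|_2$ for any $\zeta\le\nicefrac{1}{2}$); the $1-\delta$ probability is carried entirely by the single sketching event.
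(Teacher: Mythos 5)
Your proposal is correct in substance but takes a genuinely different route from the paper, and it proves the stated constant only for $\zeta\le\nicefrac{1}{2}$. The paper inserts $\Qb^{\nicefrac{1}{2}}\Qb^{-\nicefrac{1}{2}}$ and uses the SVD $\Ab\Db\Wb=\Ub_\Qb\Sigmab_\Qb^{\nicefrac{1}{2}}\widehat{\Vb}^\ts$ to collapse $(\Ab\Db\Wb)^{\dagger}\Qb^{\nicefrac{1}{2}}=\widehat{\Vb}\Ub_\Qb^\ts$, a product of orthonormal factors with spectral norm exactly one; the whole bound then reduces to $\|(\Xb\Sb)^{\nicefrac{1}{2}}\Wb\|_2^2\le 3n\mu$, obtained by one application of eqn.~\eqref{eqn:pdprec} with $\Zb=(\Xb\Sb)^{\nicefrac{1}{2}}$ plus Weyl's inequality (eqn.~\eqref{eq:w2}), uniformly for all $\zeta\le 1$. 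You instead expand $(\Ab\Db\Wb)^{\dagger}=\Wb^\ts\Db\Ab^\ts\Qb^{-1}$ and group the factors as $[(\Xb\Sb)^{\nicefrac{1}{2}}\Wb\Wb^\ts\Vb]\cdot[\Sigmab(\Sigmab\Mb\Mb^\ts\Sigmab)^{-\nicefrac{1}{2}}]$; your ``easy factor'' computation is right (it costs $(1-\zeta/2)^{-\nicefrac{1}{2}}$ via eqn.~\eqref{eq:cnd1}), and your augmented-matrix trick --- applying eqn.~\eqref{eqn:pdprec} to $\Zb=[\Vb^\ts\,;\,(n\mu)^{-\nicefrac{1}{2}}(\Xb\Sb)^{\nicefrac{1}{2}}]$ and reading off the off-diagonal block --- is a valid and rather elegant way to control the cross term $(\Xb\Sb)^{\nicefrac{1}{2}}\Wb\Wb^\ts\Vb$, which indeed cannot be split naively. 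But the obstacle you work around is an artifact of your grouping: in the paper's factorization $\Wb$ never appears sandwiched between $(\Xb\Sb)^{\nicefrac{1}{2}}$ and $\Vb$, so no augmented sketch is needed and the pseudoinverse contributes no loss at all. The net effect is that your bound $\frac{1+\zeta}{\sqrt{1-\zeta/2}}\sqrt{n\mu}\,\|\tilde{\fb}^{(t)}\|_2$ matches $\sqrt{3n\mu}\,\|\tilde{\fb}^{(t)}\|_2$ only for $\zeta\le\nicefrac{1}{2}$ (enough for Algorithms~\ref{algo:iipm}--\ref{algo:ipm_i} and Theorems~\ref{thm:2f} and~\ref{thm:1}, where $\zeta=\nicefrac{1}{2}$, but weaker than the lemma as stated for $\zeta\in(\nicefrac{1}{2},1]$). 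Also note that, like the paper, you in fact condition on two sketching events (eqn.~\eqref{eq:cnd1} for the easy factor and full row rank, and the augmented-matrix bound), so strictly a union bound is being elided in the ``single event'' claim; this is a bookkeeping point, not a flaw specific to your argument.
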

\begin{proof}
Recall that $\Qb=\Ab\Db\Wb(\Ab\Db\Wb)^\ts=\Ub_\Qb\Sigmab_\Qb\Ub_\Qb^\ts$. Also, $\Ub_\Qb$ and $\Sigmab_\Qb^{\nicefrac{1}{2}}$ are (respectively) the matrices of the left singular vectors and the singular values of $\Ab\Db\Wb$. Now, let $\widehat{\Vb}$ be the right singular vector of $\Ab\Db\Wb$. Therefore, $\Ab\Db\Wb=\Ub_\Qb\Sigmab_\Qb^{\nicefrac{1}{2}}\widehat{\Vb}^\ts$ is the thin SVD representation of $\Ab\Db\Wb$. Also, from Lemma~\ref{lem:cond3}, we know that $\Qb$ has full rank. Therefore, $\Qb^{\nicefrac{1}{2}}\Qb^{\nicefrac{-1}{2}}=\Ib_m$. Next, we bound $\|\vb\|_2$:
\begin{flalign}
	\|\vb\|_2=&~\|(\Xb\Sb)^{\nicefrac{1}{2}}\Wb(\Ab\Db\Wb)^{\dagger}(\Ab\Db^2\Ab^\ts\hat{\Delta\yb}-\pb)\|_2\nonumber\\
	=&~\|(\Xb\Sb)^{\nicefrac{1}{2}}\Wb(\Ab\Db\Wb)^{\dagger}\Qb^{\nicefrac{1}{2}}\Qb^{\nicefrac{-1}{2}}(\Ab\Db^2\Ab^\ts\hat{\Delta\yb}-\pb)\|_2\nonumber\\
	\le&~\|(\Xb\Sb)^{\nicefrac{1}{2}}\Wb(\Ab\Db\Wb)^{\dagger}\Qb^{\nicefrac{1}{2}}\|_2\,\|\tilde{\fb}^{(t)}\|_2\label{eq:s1}.
\end{flalign}
In the above we used $\Qb^{\nicefrac{-1}{2}}(\Ab\Db^2\Ab^\ts\hat{\Delta\yb}-\pb)=\tilde{\fb}^{(t)}$. Using the SVD of $\Ab\Db\Wb$ and $\Qb$, we get $(\Ab\Db\Wb)^{\dagger}\Qb^{\nicefrac{1}{2}}=\widehat{\Vb}\Sigmab_\Qb^{-1/2}\Ub_\Qb^\ts\,\Ub_\Qb\Sigmab_\Qb^{1/2}\Ub_\Qb^\ts=\widehat{\Vb}\Ub_\Qb^\ts$. Now, note that $\Ub_\Qb\in\RR{m}{m}$ is an orthogonal matrix and $\|\widehat{\Vb}\|_2=1$. Therefore, combining with eqn.~\eqref{eq:s1} yields
	\begin{flalign}
	\|\vb\|_2\le&~\|(\Xb\Sb)^{\nicefrac{1}{2}}\Wb\widehat{\Vb}\Ub_\Qb^\ts\|_2\|\tilde{\fb}^{(t)}\|_2=\|(\Xb\Sb)^{\nicefrac{1}{2}}\Wb\widehat{\Vb}\|_2\|\tilde{\fb}^{(t)}\|_2\nonumber\\
	\le&\|(\Xb\Sb)^{\nicefrac{1}{2}}\Wb\|_2\|\tilde{\fb}^{(t)}\|_2.\label{eq:semi}
	\end{flalign}
The first equality follows from the unitary invariance property of the spectral norm and the second inequality follows from the sub-multiplicativity of the spectral norm and $\|\widehat{\Vb}\|_2=1$.
Our construction for $\Wb$ implies that eqn.~\eqref{eqn:pdprec} holds for any matrix $\Zb$ and, in particular, for $\Zb=(\Xb\Sb)^{\nicefrac{1}{2}}$. Eqn.~\eqref{eqn:pdprec} implies that
	\begin{flalign}\label{eq:w2}
	\nbr{(\Xb\Sb)^{\nicefrac{1}{2}} \Wb \Wb^\ts(\Xb\Sb)^{\nicefrac{1}{2}} - (\Xb\Sb)}_2\le \frac{\zeta}{4} \left(\|(\Xb\Sb)^{\nicefrac{1}{2}}\|_2^2+\frac{\|(\Xb\Sb)^{\nicefrac{1}{2}}\|_F^2}{m}\right)
	\end{flalign}
	holds with probability at least $1-\delta$. Applying Weyl's inequality on the left hand side of the  eqn.~\eqref{eq:w2}, we get
	\begin{flalign}\label{eq:semi2}
	\abs{\nbr{(\Xb\Sb)^{\nicefrac{1}{2}} \Wb}_2^2-\nbr{(\Xb\Sb)^{\nicefrac{1}{2}}}_2^2}\le  \frac{\zeta}{4} \left(\|(\Xb\Sb)^{\nicefrac{1}{2}}\|_2^2+\frac{\|(\Xb\Sb)^{\nicefrac{1}{2}}\|_F^2}{m}\right).
	\end{flalign}
	Using $\zeta\le 1$ and $\|(\Xb\Sb)^{\nicefrac{1}{2}}\|_2^2\le \|(\Xb\Sb)^{\nicefrac{1}{2}}\|_F^2 =\xb^\ts\sbb=n\mu$, we get\footnote{The constant three in eqn.~(\ref{eq:semi3}) could be slightly improved to \nicefrac{3}{2}; we chose to keep the suboptimal constant 
	as the better constant does not result in any significant improvements in the number of iterations of Algorithm~\ref{algo:iipm}.}
	\begin{flalign}\label{eq:semi3}
	\nbr{(\Xb\Sb)^{\nicefrac{1}{2}} \Wb}_2^2\le 3\|(\Xb\Sb)^{\nicefrac{1}{2}}\|_F^2=3n \mu.
	\end{flalign}
	Finally, combining eqns.~\eqref{eq:semi} and \eqref{eq:semi3}, we conclude
	\begin{flalign*}
	\|\vb\|_2\le\sqrt{3n\mu}\|\tilde{\fb}^{(t)}\|_2.
	\end{flalign*}
\end{proof}
%
Intuitively, the bound in Lemma~\ref{lem:v} implies that $\|\vb\|_2$ depends on how close the approximate solution $\hat{\Delta\yb}$ is to the exact solution. Lemma~\ref{lem:v} is particularly useful in proving the convergence of Algorithm~\ref{algo:iipm}, which needs $\|\vb\|_2$ to be a small quantity.  
Next, using the properties of our preconditioner $\Qb^{-1/2}$, we prove that
$\|\Qb^{\nicefrac{-1}{2}}\pb\|_2= \Ocal(\sqrt{n})\sqrt{\mu}$.
This bound allows us to further show that that if we run Algorithm~\ref{algo:PCG} for $\Ocal(\log n)$ iterations, then
$\ \|\vb\|_2\le\frac{\gamma\sigma}{4}\mu.$
%
This inequality is critical in the convergence analysis of Algorithm~\ref{algo:iipm} (see Section~\ref{sxn:conv} for details). Before presenting our feasible IPM algorithm, we first prove the above two inequalities using a couple of lemmas.

{\color{black}Let $\mathcal{F}^0$ be the set of strictly feasible points respectively \ie,
\begin{flalign*}
\mathcal{F}^0=&~\{(\xb,\yb,\sbb): ~(\xb,\sbb)>\zero,~\Ab\xb=\bb,~\Ab^\ts\yb+\sbb=\cbb\}.
\end{flalign*}
In addition, we will need the following definition for the neighborhood
\begin{flalign}
&\mathcal{N}(\gamma)=\Big\{(\xb,\yb,\sbb)\in\mathcal{F}^0:x_i s_i\ge(1-\gamma)\mu\Big\}.\label{eq:neigh}
\end{flalign}
%
%
Here $\gamma \in (0,1)$ and $\mu$ is the duality measure. Note that $\mathcal{N}(\gamma)\subseteq\mathcal{F}^0$ and we assume that $\mathcal{F}^0$ is non-empty.}
\begin{lemma}\label{thm:boundf_f}
	Let $(\xb,\yb,\sbb)\in\mathcal{N}(\gamma)$ and let the sketching matrix $\Wb\in\RR{\dimtwo}{w}$ satisfy the condition in eqn.~\eqref{eq:pdcond1}. Then,
	\begin{flalign}
	\|\Qb^{-\nicefrac{1}{2}}\pb\|_2\le~\left(1+ \frac{\sigma}{\sqrt{1-\gamma}}\right)\sqrt{2n\mu}\,.
	\end{flalign}
\end{lemma}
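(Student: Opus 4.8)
The plan is to bound $\|\Qb^{-\nicefrac{1}{2}}\pb\|_2$ by first splitting $\pb = -\sigma\mu\Ab\Sb^{-1}\one_\dimtwo + \Ab\xb$ (using the definition from eqn.~\eqref{eqn:pdef}) and applying the triangle inequality, so that
\begin{flalign*}
\|\Qb^{-\nicefrac{1}{2}}\pb\|_2 \le \sigma\mu\,\|\Qb^{-\nicefrac{1}{2}}\Ab\Sb^{-1}\one_\dimtwo\|_2 + \|\Qb^{-\nicefrac{1}{2}}\Ab\xb\|_2.
\end{flalign*}
The key observation is that $\Qb^{-\nicefrac{1}{2}}\Ab$ can be rewritten in terms of $\Qb^{-\nicefrac{1}{2}}\Ab\Db$ by inserting $\Db\Db^{-1}$: for the first term, $\Ab\Sb^{-1}\one_\dimtwo = \Ab\Db\cdot\Db^{-1}\Sb^{-1}\one_\dimtwo = \Ab\Db\cdot(\Xb^{-1/2}\Sb^{-1/2}\Sb^{-1})\one_\dimtwo$; since $\Db^{-1} = \Xb^{-1/2}\Sb^{1/2}$, this equals $\Ab\Db\cdot(\Xb^{-1/2}\Sb^{-1/2})\one_\dimtwo = \Ab\Db\cdot(\Xb\Sb)^{-1/2}\one_\dimtwo$. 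Similarly $\Ab\xb = \Ab\Db\cdot\Db^{-1}\xb = \Ab\Db\cdot\Xb^{-1/2}\Sb^{1/2}\xb = \Ab\Db\cdot(\Xb\Sb)^{1/2}\one_\dimtwo$. So both terms have the form $\Qb^{-\nicefrac{1}{2}}\Ab\Db\cdot\vb$ for an explicit vector $\vb$, and by eqn.~\eqref{eq:pdcond1} (equivalently Lemma~\ref{lem:cond3}) we have $\sigma_{\max}(\Qb^{-\nicefrac{1}{2}}\Ab\Db) \le \sqrt{2/(2-\zeta)} \le \sqrt{2}$ for $\zeta \le 1$, hence $\|\Qb^{-\nicefrac{1}{2}}\Ab\Db\,\vb\|_2 \le \sqrt{2}\,\|\vb\|_2$.

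Next I would bound the two explicit vector norms. For the second term, $\|(\Xb\Sb)^{1/2}\one_\dimtwo\|_2^2 = \sum_i x_i s_i = \xb^\ts\sbb = n\mu$, so $\|\Qb^{-\nicefrac{1}{2}}\Ab\xb\|_2 \le \sqrt{2n\mu}$. For the first term, $\|(\Xb\Sb)^{-1/2}\one_\dimtwo\|_2^2 = \sum_i \frac{1}{x_i s_i}$; using the neighborhood condition $x_i s_i \ge (1-\gamma)\mu$ from eqn.~\eqref{eq:neigh}, each summand is at most $\frac{1}{(1-\gamma)\mu}$, so the sum is at most $\frac{n}{(1-\gamma)\mu}$. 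Therefore $\sigma\mu\,\|\Qb^{-\nicefrac{1}{2}}\Ab\Sb^{-1}\one_\dimtwo\|_2 \le \sigma\mu\cdot\sqrt{2}\cdot\sqrt{\frac{n}{(1-\gamma)\mu}} = \sigma\sqrt{\frac{2n\mu}{1-\gamma}} = \frac{\sigma}{\sqrt{1-\gamma}}\sqrt{2n\mu}$. Combining the two bounds gives $\|\Qb^{-\nicefrac{1}{2}}\pb\|_2 \le \left(1 + \frac{\sigma}{\sqrt{1-\gamma}}\right)\sqrt{2n\mu}$, which is exactly the claimed inequality.

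The calculation is essentially routine once the right algebraic identity is spotted; the only place requiring care is the bookkeeping with $\Db = \Xb^{1/2}\Sb^{-1/2}$ to convert $\Ab\Sb^{-1}\one_\dimtwo$ and $\Ab\xb$ into the form $\Ab\Db\,\vb$ so that the spectral bound from Lemma~\ref{lem:cond3} applies cleanly, and keeping track of the square roots. I expect the main (mild) obstacle is simply verifying that these substitutions are exact — in particular that $\Db^{-1}\Sb^{-1}\one_\dimtwo = (\Xb\Sb)^{-1/2}\one_\dimtwo$ and $\Db^{-1}\xb = (\Xb\Sb)^{1/2}\one_\dimtwo$ — and then invoking the neighborhood lower bound $x_i s_i \ge (1-\gamma)\mu$ at the one step where it is needed. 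No probabilistic step is required beyond assuming eqn.~\eqref{eq:pdcond1}, which already holds with probability $1-\delta$ by the construction of $\Wb$.
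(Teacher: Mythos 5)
Your proof is correct and takes essentially the same route as the paper's: split $\pb$ via the triangle inequality, rewrite each piece as $\Qb^{-\nicefrac{1}{2}}\Ab\Db$ applied to $(\Xb\Sb)^{\nicefrac{1}{2}}\one_\dimtwo$ or $(\Xb\Sb)^{-\nicefrac{1}{2}}\one_\dimtwo$, use $\|\Qb^{-\nicefrac{1}{2}}\Ab\Db\|_2\le\sqrt{2}$ from eqn.~\eqref{eq:pdcond1}, and bound the two vector norms by $\sqrt{n\mu}$ and $\sqrt{n/((1-\gamma)\mu)}$ via the neighborhood condition. The only blemish is a typographical slip in your intermediate expression for $\Db^{-1}\Sb^{-1}$ (it equals $\Xb^{-\nicefrac{1}{2}}\Sb^{-\nicefrac{1}{2}}$ directly), but your final identities and all subsequent steps are right.
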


\begin{proof}	
	To bound $\|\Qb^{-\nicefrac{1}{2}}\pb\|_2$,  first we express $\pb$ as in eqn.~\eqref{eqn:pdef} and rewrite
	\begin{flalign}
	\Qb^{-\nicefrac{1}{2}}\pb
	=&~\Qb^{-\nicefrac{1}{2}}\left(-\sigma\mu\Ab\Sb^{-1}\one_\dimtwo+\Ab\xb\right)\label{eq:recur3_f}.
	\end{flalign}
Applying the triangle inequality on $\|\Qb^{-\nicefrac{1}{2}}\pb\|_2$ in eqn.~\eqref{eq:recur3_f}, we get
	\begin{flalign}
	\|\Qb^{-\nicefrac{1}{2}}\pb\|_2\le \Delta_1+\Delta_2\label{eq:recur4_f}\,,
	\end{flalign}
	where $\Delta_1=~\sigma\mu\|\Qb^{-\nicefrac{1}{2}}\Ab\Db(\Xb\Sb)^{-\nicefrac{1}{2}}\one_\dimtwo\|_2$ and
	$\Delta_2=~\|\Qb^{-\nicefrac{1}{2}}\Ab\Db\Db^{-1}\xb\|_2$.
	In order to bound  $\Delta_1$ and $\Delta_2$, we use the condition of eqn.~\eqref{eq:pdcond1}. In particular, eqn.~\eqref{eq:pdcond1} implies that $\|\Qb^{-\nicefrac{1}{2}}\Ab\Db\|_2\le\sqrt{2}$ as $\zeta\le 1$.
	
	\paragraph{Bounding $\Delta_1$.} Applying submultiplicativity, we get
	\begin{flalign}
	\Delta_1=&~\sigma\mu\,\|\Qb^{-\nicefrac{1}{2}}\,\Ab\Db\,(\Xb\Sb)^{-\nicefrac{1}{2}}\one_\dimtwo\|_2\nonumber\\
	\le&~\sigma\mu\,\|\Qb^{-\nicefrac{1}{2}}\,\Ab\Db\|_2\|(\Xb\Sb)^{-\nicefrac{1}{2}}\one_\dimtwo\|_2
	\le~\sqrt{2}\,\sigma\mu\,\|(\Xb\Sb)^{-\nicefrac{1}{2}}\one_\dimtwo\|_2\nonumber\\
	=&~\sqrt{2}\,\sigma\mu\,\sqrt{\sum_{i=1}^{\dimtwo}\frac{1}{x_i s_i}}
	\le~\sqrt{2}\,\sigma\mu\,\sqrt{\sum_{i=1}^{\dimtwo}\frac{1}{(1-\gamma)\mu}}
	=~\sqrt{2}\,\sigma\,\sqrt{\frac{\dimtwo\,\mu}{(1-\gamma)}}\label{eq:del2_f}\,,
	\end{flalign}
	where we used the fact that $(\xb,\yb,\sbb)\in\mathcal{N}(\gamma)$.
	
	\paragraph{Bounding $\Delta_2$.} Since $\Db=\Sb^{-\nicefrac{1}{2}}\Xb^{\nicefrac{1}{2}}$ and $\xb=\Xb\,\one_\dimtwo$, we get
	\begin{flalign}
	\Delta_2=&~\|\Qb^{-\nicefrac{1}{2}}\,\Ab\Db\,(\Sb^{\nicefrac{1}{2}}\Xb^{-\nicefrac{1}{2}})\,\Xb\,\one_\dimtwo\|_2
	=~\|\Qb^{-\nicefrac{1}{2}}\,\Ab\Db\,(\Sb\Xb)^{\nicefrac{1}{2}}\,\one_\dimtwo\|_2\nonumber\\
	\le&~\|\Qb^{-\nicefrac{1}{2}}\,\Ab\Db\|_2\|(\Sb\Xb)^{\nicefrac{1}{2}}\,\one_\dimtwo\|_2
	\le~\sqrt{2}\,\sqrt{\sum_{i=1}^{\dimtwo}x_i s_i}=~\sqrt{2\dimtwo\,\mu}\label{eq:del3_f}.
	\end{flalign}

	\paragraph{Final bound.} Combining eqns.~\eqref{eq:recur4_f},~\eqref{eq:del2_f}, and~\eqref{eq:del3_f}, we get
	\begin{flalign}
	\|\Qb^{-\nicefrac{1}{2}}\pb\|_2\le~\left(1+ \frac{\sigma}{\sqrt{1-\gamma}}\right)\sqrt{2n\mu}\,.
	\end{flalign}
	This concludes the proof of Lemma~\ref{thm:boundf_f}.
\end{proof}
\begin{lemma}\label{lem:conouter}
	Let $(\xb,\yb,\sbb)\in\mathcal{N}(\gamma)$ and let the sketching matrix $\Wb$ satisfy the conditions of eqns.~\eqref{eq:pdcond1} and \eqref{eq:pdcond2}. Then, after $t\ge\scriptstyle\frac{\log(n\,\psi)}{\log(\nicefrac{1}{\zeta})}$ iterations of the iterative solver in Algorithm~\ref{algo:PCG}, we have $\|\vb\|_2\le\frac{\gamma\sigma}{4}\mu.$
	Here $\psi=\scriptstyle\frac{4\sqrt{6}\left(1+ \nicefrac{\sigma}{\sqrt{1-\gamma}}\right)}{\gamma\sigma}$ and $\tilde{\fb}^{(t)}=\Qb^{-\nicefrac{1}{2}}\Ab\Db^2\Ab^\ts\Qb^{-\nicefrac{1}{2}}\tilde{\zb}^t-\Qb^{-\nicefrac{1}{2}}\pb$ is the residual of the solver.
\end{lemma}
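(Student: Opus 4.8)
The plan is to obtain the claimed bound by chaining the three estimates already established. Recall that $\vb$ is the perturbation vector of eqn.~\eqref{eq:compv} and that $\tilde{\fb}^{(t)}=\Qb^{-\nicefrac{1}{2}}\Ab\Db^2\Ab^\ts\Qb^{-\nicefrac{1}{2}}\tilde{\zb}^t-\Qb^{-\nicefrac{1}{2}}\pb$ is the residual returned by Algorithm~\ref{algo:PCG} after $t$ iterations. First, Lemma~\ref{lem:v} gives $\|\vb\|_2\le\sqrt{3n\mu}\,\|\tilde{\fb}^{(t)}\|_2$ on the high-probability event underlying the sketching guarantee. Second, since $\Wb$ satisfies eqn.~\eqref{eq:pdcond2}, the residual decays geometrically: $\|\tilde{\fb}^{(t)}\|_2\le\zeta^t\|\Qb^{-\nicefrac{1}{2}}\pb\|_2$. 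Third, since $(\xb,\yb,\sbb)\in\mathcal{N}(\gamma)$ and $\Wb$ satisfies eqn.~\eqref{eq:pdcond1}, Lemma~\ref{thm:boundf_f} bounds $\|\Qb^{-\nicefrac{1}{2}}\pb\|_2\le\left(1+\frac{\sigma}{\sqrt{1-\gamma}}\right)\sqrt{2n\mu}$.

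Putting these together, I would write
\[
\|\vb\|_2\;\le\;\sqrt{3n\mu}\cdot\zeta^t\cdot\left(1+\frac{\sigma}{\sqrt{1-\gamma}}\right)\sqrt{2n\mu}\;=\;\sqrt{6}\,\left(1+\frac{\sigma}{\sqrt{1-\gamma}}\right)n\mu\,\zeta^t .
\]
To force the right-hand side below $\frac{\gamma\sigma}{4}\mu$, divide through by $\mu$: it suffices that $\zeta^t\le\frac{\gamma\sigma}{4\sqrt{6}\,n\,(1+\sigma/\sqrt{1-\gamma})}=\frac{1}{n\psi}$, with $\psi=\frac{4\sqrt{6}\,(1+\sigma/\sqrt{1-\gamma})}{\gamma\sigma}$, exactly as in the statement. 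Taking logarithms and using $0<\zeta<1$ (so $\log(\nicefrac{1}{\zeta})>0$ and the inequality direction flips), this is equivalent to $t\ge\frac{\log(n\psi)}{\log(\nicefrac{1}{\zeta})}$, which is the hypothesized iteration count; hence $\|\vb\|_2\le\frac{\gamma\sigma}{4}\mu$.

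There is essentially no obstacle here: the lemma is a bookkeeping corollary combining Lemmas~\ref{lem:v} and \ref{thm:boundf_f} with the residual-decay bound of eqn.~\eqref{eq:pdcond2}. The only points to watch are (i) that all three ingredients hold on the same high-probability event — they all depend on the single sketching matrix $\Wb$ and its sketching guarantee eqn.~\eqref{eqn:pdprec}, so the conclusion inherits the probability $1-\delta$ from Lemma~\ref{lem:v} without any additional union bound; and (ii) transcribing the constant correctly, noting that $\sqrt{6}=\sqrt{3}\cdot\sqrt{2}$ arises from multiplying the $\sqrt{3n\mu}$ of Lemma~\ref{lem:v} by the $\sqrt{2n\mu}$ of Lemma~\ref{thm:boundf_f}, and that the two $\sqrt{n\mu}$ factors combine to $n\mu$, leaving a single $\mu$ on each side after dividing through.
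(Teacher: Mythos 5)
Your proposal is correct and follows essentially the same route as the paper: chain Lemma~\ref{lem:v}, the residual decay of eqn.~\eqref{eq:pdcond2}, and Lemma~\ref{thm:boundf_f} to get $\|\vb\|_2\le\sqrt{6}\,n\,\zeta^t\left(1+\nicefrac{\sigma}{\sqrt{1-\gamma}}\right)\mu$, then choose $t$ so that $\zeta^t\le\nicefrac{1}{n\psi}$. Your explicit logarithmic derivation of the iteration count and the remark that all three ingredients hold on the same sketching event merely spell out steps the paper leaves implicit.
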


\begin{proof}
	Combining Lemma~\ref{thm:boundf_f} and the condition in eqn.~\eqref{eq:pdcond2}, we get
	\begin{flalign}\label{eq:bd}
	\|\tilde{\fb}^{(t)}\|_2\le\zeta^t\left(1+ \frac{\sigma}{\sqrt{1-\gamma}}\right)\sqrt{2n\mu}.
	\end{flalign}
	%
	Next, combining Lemma~\ref{lem:v} and eqn.~\eqref{eq:bd} we get
	\begin{flalign*}
	\|\vb\|_2\le\sqrt{3n\mu}\,\|\tilde{\fb}^{(t)}\|_2\le\sqrt{6}n\,\zeta^t\left(1+ \frac{\sigma}{\sqrt{1-\gamma}}\right)\mu
	\end{flalign*}
	Therefore, $\|\vb\|_2\le\frac{\gamma\sigma\mu}{4}$ holds if
	$\sqrt{6}n\,\zeta^t\left(1+ \nicefrac{\sigma}{\sqrt{1-\gamma}}\right)\mu\le\frac{\gamma\sigma\mu}{4}$, which holds for our choice of $t$.
	Now, fixing $\gamma$, $\sigma$, and $\zeta$, after $t=\Ocal (\log\dimtwo)$ iterations of Algorithm~\ref{algo:PCG} the conclusions of the lemma hold.
\end{proof}
Now, we are ready to present the feasible IPM algorithm. Recall the definition the neighborhood $\mathcal{N}(\gamma)$ in eqn.~\eqref{eq:neigh}.
%
%

\begin{algorithm}[H]
	\caption{Feasible IPM}\label{algo:iipm}%
	\hspace*{\algorithmicindent} \textbf{Input:}
		$\Ab\in\RR{\dimone}{\dimtwo}$,  $\bb\in\R{\dimone}$, $\cbb\in\R{\dimtwo}$, $\gamma \in (0,1)$, tolerance $\epsilon> 0$, $\sig\in (0,\nicefrac{4}{5})$;

		\hspace*{\algorithmicindent} \textbf{Initialize:} $k\gets 0$; initial point $(\xb^{0},\yb^{0},\sbb^{0})\in\mathcal{F}^0$;
	\begin{algorithmic}[1]
		\While{$\mu_k > \epsilon$}
		
		\State Compute sketching matrix $\Wb \in \mathbb{R}^{n \times w}$ (Section~\ref{sxn:background}) with $\zeta=1/2$ and $\delta = O(n^{-1})$;
		\State Solve eqn.~\eqref{eq:precond_alt} for $\zb$ using Algorithm~\ref{algo:PCG} with $\Wb$ from step (2) and $t=\Ocal(\log n)$,
		
		\hspace*{-3mm} and then Compute $\hat{\Delta \yb}=\Qb^{\nicefrac{-1}{2}}{\zb}$;
		\State Compute $\vb$ using eqn.~\eqref{eq:compv} with $\Wb$ from step (2); $\hat{\Delta\sbb}$ using eqn.~\eqref{eq:dels}; $\hat{\Delta\xb}$ using 
		
		\hspace{-3mm} eqn.~\eqref{eq:delxhat};
		\State \label{stepalpha1} Compute $\alphamax = \argmax\{ \alpha \in [0,1] : (\xb^k,\yb^k,\sbb^k) + \alpha (\hat{\Delta \xb}^k,\hat{\Delta \yb}^k,\hat{\Delta \sbb}^k)  \in \neigh\}$.
		\State \label{stepalpha2}Compute $\alphaused = \argmin\{\alpha \in [0, \alphamax]: (\xb^k + \alpha \hat{\Delta \xb}^k)^\ts (\sbb^k + \alpha \hat{\Delta\sbb}^k)\}$.
		\State Compute $(\xb^{k+1}, \yb^{k+1}, \sbb^{k+1}) = (\xb^k,\yb^k,\sbb^k) + \alphabar (\hat{\Delta \xb}^k,\hat{\Delta \yb}^k,\hat{\Delta \sbb}^k)$; set $k \gets k + 1$;
		
		\EndWhile
	\end{algorithmic}
\end{algorithm}

\vspace{0.02in}\noindent\textbf{Running time.} We start by discussing the running time to compute $\vb$. As discussed in Section~\ref{sxn:PCG},
$(\Ab\Db\Wb)^{\dagger}$ can be computed in $\Ocal(\nnz{\Ab}\cdot \log(m/\delta)+m^3\log(m/\delta))$ time.  Now, as $\Wb$ has $\Ocal(\log(m/\delta))$ non-zero entries per row, pre-multiplying by $\Wb$ takes $\Ocal(\nnz{\Ab}\log(m/\delta))$ time (assuming $\nnz{A}\ge n$). Since $\Xb$ and $\Sb$ are diagonal matrices, computing $\vb$ takes  $\Ocal(\nnz{\Ab}\cdot \log(m/\delta)+m^3\log(m/\delta))$ time, which is asymptotically the same as computing $\Qb^{\nicefrac{-1}{2}}$ (see eqn.~(\ref{eqn:svdQ})).

We now discuss the overall running time of Algorithm~\ref{algo:iipm}. At each iteration, with failure probability $\delta$, the preconditioner $\Qb^{\nicefrac{-1}{2}}$ and the vector $\vb$ can be computed in
$\Ocal(\nnz{\Ab}\cdot \log(m/\delta)+m^3\log(m/\delta))$ time. In addition, for $t=\Ocal(\log n)$ iterations of Algorithm~\ref{algo:PCG}, all the matrix-vector products in the CG or Chebyshev iteration can be computed in $\Ocal(\nnz{\Ab}\cdot \log n)$ time. Therefore, the computational time for steps (2)-(4) is given by $\Ocal(\nnz{\Ab}\cdot(\log n+ \log(m/\delta))+m^3\log(m/\delta))$. 
\textcolor{black}{Finally, considering $\epsilon$ to be a constant, if we assume that the IPM needs $k=c\,n$ iterations to converge  and accordingly, if we fix the failure probability $\delta=\frac{0.1}{c\,n}$ for some suitable constant $c$, then taking a union bound over all the IPM iterations, our algorithm converges with probability  at least $1-c\,n\cdot\frac{0.1}{c\,n}=0.9$ and
the running time at each iteration is given by
$\Ocal((\nnz{\Ab}+m^3)\log n)$.}

\subsection{Convergence Analysis of Algorithm~\ref{algo:iipm}}\label{sxn:conv}
{\color{black} In this section, we prove a set of results that ultimately establish Theorem~\ref{thm:2f} and guarantee the convergence of Algorithm~\ref{algo:iipm}. Due to the use of an approximate solver, these proofs typically differ from the standard analysis of long-step feasible IPM~\citep{wright1997primal} in many aspects. For example, all the major results in this section rely on the condition that the error due to the linear solver is small \ie, $\|\vb\|_2$ is small, whereas the standard convergence analysis does not have this requirement as the linear system there is solved exactly \ie $\|\vb\|_2$ is always zero. This difference makes our case more intricate as we deal with an extra term involving $\|\vb\|_2$ which needs more care.

On the other hand, while the origin of the statements of our feasible IPM results is essentially \citep{Mon03}, the proofs are different from that of \citep{Mon03}. The exact same analysis of \citep{Mon03} (just by making the primal and dual residuals equal to zero) neither directly applies to the feasible case, nor matches the best iteration complexity of it, whereas our current analysis has the iteration complexity $\Ocal(n\log 1/\epsilon)$, which is the best known for feasible long-step path following IPM algorithms. The proofs that look similar to \citep{Mon03} also have differences. We discuss them individually before the respective lemmas. The only overlap we have with \citep{Mon03} is our Lemma~\ref{lemmaminalpha1_f} that works for both feasible and infeasible setting. Now, we proceed to prove our Theorem~\ref{thm:2f}.
} 

%
%

First, we can rewrite the linear system of eqns.~\eqref{eq:delxhat}, \eqref{eq:addl}, \eqref{eq:delshat} as follows:
\begin{subequations}\label{eq:iip_3_f}
	\begin{flalign}
	\Ab\hat{\Delta\xb}=&~\zero, \label{eq:iip_3_f_1}\\
	\Ab^\ts\hat{\Delta\yb}+\hat{\Delta\sbb}=&~\zero,  \label{eq:iip_3_f_2}\\
	\Xb\hat{\Delta\sbb}+\Sb\hat{\Delta\xb}=&-\Xb\Sb\,\one_\dimtwo+\sigma\mu\,\one_\dimtwo - \vb.  \label{eq:iip_3_f_3}
	\end{flalign}
\end{subequations}
Indeed, we now show how to derive eqns.~\eqref{eq:delxhat}, \eqref{eq:addl}, \eqref{eq:delshat} from eqn.~\eqref{eq:iip_3_f}. Pre-multiplying both sides of eqn.~\eqref{eq:iip_3_f_3} by $\Ab\Sb^{-1}$ and noting that $\Db^2=\Xb\Sb^{-1}$, we get
\begin{flalign}
&~\Ab\Db^2\hat{\Delta\sbb}+\Ab\hat{\Delx}=-\Ab\Xb\one_n+\sigma\mu\Ab\Sb^{-1}\one_n-\Ab\Sb^{-1}\vb\nonumber\\
\Rightarrow&~\Ab\Db^2\hat{\Delta\sbb}=-\Ab\xb+\sigma\mu\Ab\Sb^{-1}\one_n-\Ab\Sb^{-1}\vb.\label{eq:iip_3_f1}
\end{flalign}
Eqn.~\eqref{eq:iip_3_f1} holds as $\Ab\Xb\one_n=\Ab\xb$ and, from eqn.~\eqref{eq:iip_3_f_1}, $\Ab\hat{\Delx}=\zero$. Next, pre-multiplying eqn.~\eqref{eq:iip_3_f_2} by $\Ab\Db^2$, we get
\begin{flalign}
&~\Ab\Db^2\Ab^\ts\hat{\Dely}+\Ab\Db^2\hat{\Dels}=\zero\nonumber\\
\Rightarrow &~\Ab\Db^2\Ab^\ts\hat{\Dely}=\Ab\xb-\sigma\mu\Ab\Sb^{-1}\one_n+\Ab\Sb^{-1}\vb=\pb+\Ab\Sb^{-1}\vb.\label{eq:iip_3_f11}
\end{flalign}
The first equality in eqn.~\eqref{eq:iip_3_f11} follows from eqn.~\eqref{eq:iip_3_f1} and the definition of $\pb$. This establishes
eqn.~\eqref{eq:addl}. Eqn.~\eqref{eq:delshat} directly follows from eqn.~\eqref{eq:iip_3_f_2}. Finally, we get eqn.~\eqref{eq:delxhat} by pre-multiplying eqn.~\eqref{eq:iip_3_f_3} by $\Sb^{-1}$.
We will now use a slightly different notations. We define the next point traversed by the algorithm as $(\xnew,\ynew, \snew)$, where
\begin{flalign}
(\xnew,\ynew,\snew) &= (\xb, \yb, \sbb) + \alpha(\hat{\Delta\xb},\hat{\Delta \yb},\hat{\Delta\sbb}),\ \mbox{and} \\
\munew &= \left(\nicefrac{1}{n}\right)\xnew^\ts \snew.
\end{flalign}
Our goal is to bound the number of outer iterations required by the feasible IPM algorithm.
To do so, we bound the magnitude of the step size $\alpha$. First, we provide an upper bound on $\alpha$, which allows us to show that each new point $(\xnew,\ynew,\snew)$ traversed by the algorithm stays within the neighborhood $\neigh$.
Second, we provide a lower bound on $\alpha$, which allows us to bound the number of iterations required. 
The following Lemma will be used throughout the section.
\begin{lemma}\label{lemmamaxalpha0}
	Assume $(\hat{\Delx},\hat{\Dels},\hat{\Dely})$ satisfies eqns. $(\ref{eq:iip_3})$ for some $\sig \in \mathbb{R}$ and $\vb \in \mathbb{R}^\dimtwo$. Let $(\xb,\yb,\sbb)$ be any point such that $(\xb,\sbb)>0$. Then, for every $\alpha \in \mathbb{R}$, 
	\begin{flalign*}
	(a)&~~\xnew \circ \snew = (1- \alpha) \xcircs + \alpha \sig \mu \one_\dimtwo - \alpha \vb + \alpha^2 \hat{\Delta\xb}\circ\hat{\Delta \sbb}\,,\\
	(b)&~~\munew =  [1-\alpha(1-\sig)]\mu - \frac{\alpha\,\vb^\ts\one_\dimtwo}{\dimtwo}.
	\end{flalign*}
\end{lemma}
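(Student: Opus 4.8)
The plan is a direct componentwise computation, substituting the defining equations of the search direction. For part (a), I would start from the definition $(\xnew,\ynew,\snew) = (\xb,\yb,\sbb) + \alpha(\hat{\Delta\xb},\hat{\Delta\yb},\hat{\Delta\sbb})$ and expand the Hadamard product:
\[
\xnew\circ\snew = (\xb+\alpha\hat{\Delta\xb})\circ(\sbb+\alpha\hat{\Delta\sbb}) = \xcircs + \alpha\,(\xb\circ\hat{\Delta\sbb} + \sbb\circ\hat{\Delta\xb}) + \alpha^2\,\hat{\Delta\xb}\circ\hat{\Delta\sbb}.
\]
Since $\Xb$ and $\Sb$ are diagonal, $\xb\circ\hat{\Delta\sbb} + \sbb\circ\hat{\Delta\xb} = \Xb\hat{\Delta\sbb} + \Sb\hat{\Delta\xb}$, so eqn.~\eqref{eq:iip_3_f_3} gives that this term equals $-\Xb\Sb\one_\dimtwo + \sig\mu\one_\dimtwo - \vb = -\xcircs + \sig\mu\one_\dimtwo - \vb$. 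Plugging back in and collecting the $\xcircs$ contributions yields $(1-\alpha)\xcircs + \alpha\sig\mu\one_\dimtwo - \alpha\vb + \alpha^2\hat{\Delta\xb}\circ\hat{\Delta\sbb}$, which is exactly statement~(a). Note the $\alpha^2$ cross-term does not simplify at this stage and must be carried along.

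For part (b), I would use $\munew = \tfrac{1}{\dimtwo}\xnew^\ts\snew = \tfrac{1}{\dimtwo}\one_\dimtwo^\ts(\xnew\circ\snew)$ and apply part (a). Term by term: $\one_\dimtwo^\ts\xcircs = \xb^\ts\sbb = \dimtwo\mu$, $\one_\dimtwo^\ts\one_\dimtwo = \dimtwo$, $\one_\dimtwo^\ts\vb = \vb^\ts\one_\dimtwo$, and $\one_\dimtwo^\ts(\hat{\Delta\xb}\circ\hat{\Delta\sbb}) = \hat{\Delta\xb}^\ts\hat{\Delta\sbb}$. The only nontrivial point is that this last inner product vanishes: from eqn.~\eqref{eq:iip_3_f_2} we have $\hat{\Delta\sbb} = -\Ab^\ts\hat{\Delta\yb}$, hence $\hat{\Delta\xb}^\ts\hat{\Delta\sbb} = -(\Ab\hat{\Delta\xb})^\ts\hat{\Delta\yb} = 0$ by eqn.~\eqref{eq:iip_3_f_1}. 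Therefore $\munew = (1-\alpha)\mu + \alpha\sig\mu - \tfrac{\alpha\,\vb^\ts\one_\dimtwo}{\dimtwo} = [1-\alpha(1-\sig)]\mu - \tfrac{\alpha\,\vb^\ts\one_\dimtwo}{\dimtwo}$, which establishes~(b).

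There is no real obstacle here; the lemma is an algebraic identity, and both parts follow in a couple of lines once the system~\eqref{eq:iip_3_f} is substituted. The two things to be careful about are bookkeeping in~(a) (retaining the $\alpha^2$ term, which genuinely stays in the vector identity and only disappears in~(b) after contracting with $\one_\dimtwo$), and recognizing in~(b) that it is precisely the first two relations $\Ab\hat{\Delta\xb}=\zero$ and $\Ab^\ts\hat{\Delta\yb}+\hat{\Delta\sbb}=\zero$ — not the third — that force the orthogonality $\hat{\Delta\xb}^\ts\hat{\Delta\sbb}=0$. I would also remark that the argument uses nothing about $(\xb,\yb,\sbb)$ beyond $(\xb,\sbb)>\zero$ (needed only so that $\Xb,\Sb$ are invertible when deriving~\eqref{eq:iip_3_f}) and no sign or magnitude assumption on $\sig$ or $\vb$, which is exactly why the statement is phrased for arbitrary $\sig\in\mathbb{R}$ and $\vb\in\R{\dimtwo}$ and can be reused verbatim in the infeasible analysis.
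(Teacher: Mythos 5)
Your proof is correct and is essentially identical to the paper's: expand the Hadamard product, substitute eqn.~\eqref{eq:iip_3_f_3} for the cross term to get (a), then contract with $\one_\dimtwo^\ts/\dimtwo$ and use $\hat{\Delta\xb}^\ts\hat{\Delta\sbb}=0$ (from eqns.~\eqref{eq:iip_3_f_1} and \eqref{eq:iip_3_f_2}) to get (b). Your explicit remark that only the feasible relations $\Ab\hat{\Delta\xb}=\zero$ and $\Ab^\ts\hat{\Delta\yb}+\hat{\Delta\sbb}=\zero$ are needed for the orthogonality is exactly the observation the paper makes parenthetically.
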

\begin{proof}
Proving $(a)$: 
	\begin{flalign}
	\xnew \circ \snew &= (\xb + \alpha \hat{\Delta\xb}) \circ (\sbb + \alpha \hat{\Delta\sbb})  \notag \\ 
	& = \xcircs + \alpha(\xb \circ \hat{\Delta\sbb} + \sbb\circ\hat{\Delta\xb}) + \alpha^2 \hat{\Delta\xb}\circ\hat{\Delta \sbb}  \notag\\
	& = \xcircs + \alpha(-\xcircs + \sig\mu \one_\dimtwo - \vb) + \alpha^2\hat{\Delta\xb}\circ\hat{\Delta \sbb}  \notag \\
	& = (1-\alpha)\xcircs + \alpha\sig\mu \one_\dimtwo - \alpha\vb + \alpha^2\hat{\Delta\xb}\circ\hat{\Delta \sbb}\,, \notag
	\end{flalign} 
	where the third equality follows from eqn.~\eqref{eq:iip_3_f_3}. Now, left-multiply the above equality by $\one_\dimtwo^\ts$ and divide by $\dimtwo$ to obtain $(b)$. (Notice that $\hat{\Delx}^\ts\hat{\Dels}=0$ from eqns.~\eqref{eq:iip_3_f_1} and \eqref{eq:iip_3_f_2}.)
\end{proof}

\noindent Next, we provide an upper bound on $\alpha$, ensuring that each new point $(\xnew,\ynew,\snew)$ traversed by the algorithm stays within the neighborhood $\neigh$. Note that the following result resembles Lemma~3.5 of \citep{Mon03}, but what makes Lemma~\ref{lemmamaxalpha1_f} different from it is the fact that here we need to additionally prove the strict feasibility of the new iterate \ie~$(\xnew,\ynew,\snew)\in\mathcal{F}^0$ (in order to show $(\xnew,\ynew,\snew) \in \neigh$), which was not proven in Lemma~3.5 of \citep{Mon03}.

\begin{lemma}\label{lemmamaxalpha1_f}
	Assume $(\hat{\Delx},\hat{\Dely},\hat{\Dels})$  satisfies eqns. $(\ref{eq:iip_3_f})$ for some $\sig > 0$, $(\xb, \yb,\sbb) \in \neigh$ for $\gamma \in (0,1)$, and $\|\vb\|_2 \leq \frac{\gamma \sig \mu}{4}$. Then, $(\xnew,\ynew,\snew) \in \neigh$ for every scalar $\alpha$ such that 
	\begin{flalign} \label{alphalemma1_f}
	0 \leq \alpha \leq \min \left\{ 1, \frac{\gamma \sig \mu}{4 \delxdelsnorm}\right\} .
	\end{flalign}
\end{lemma}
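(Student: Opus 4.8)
The plan is to verify directly the two requirements in the definition of $\neigh$ in eqn.~\eqref{eq:neigh}: that $(\xnew,\ynew,\snew)\in\mathcal{F}^0$ (strict primal-dual feasibility) and that $x_i(\alpha)s_i(\alpha)\ge(1-\gamma)\munew$ for every coordinate $i$. The equality constraints are immediate: since $(\xb,\yb,\sbb)\in\neigh\subseteq\mathcal{F}^0$ satisfies $\Ab\xb=\bb$ and $\Ab^\ts\yb+\sbb=\cbb$, while eqns.~\eqref{eq:iip_3_f_1}--\eqref{eq:iip_3_f_2} give $\Ab\hat{\Delta\xb}=\zero$ and $\Ab^\ts\hat{\Delta\yb}+\hat{\Delta\sbb}=\zero$, we get $\Ab\xnew=\bb$ and $\Ab^\ts\ynew+\snew=\cbb$ for every $\alpha$. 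Hence only the coordinatewise products and the strict positivity of $(\xnew,\snew)$ remain.

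For the centrality bound I would invoke Lemma~\ref{lemmamaxalpha0} (which holds for all real $\alpha$ because the base point satisfies $(\xb,\sbb)>0$): it gives $\xnew\circ\snew=(1-\alpha)\xcircs+\alpha\sig\mu\one_\dimtwo-\alpha\vb+\alpha^2\,\hat{\Delta\xb}\circ\hat{\Delta\sbb}$ and $\munew=[1-\alpha(1-\sig)]\mu-\alpha\,\vb^\ts\one_\dimtwo/\dimtwo$. Subtracting $(1-\gamma)\munew$ coordinatewise, using $x_is_i\ge(1-\gamma)\mu$ (from $(\xb,\yb,\sbb)\in\neigh$) together with $0\le\alpha\le1$ so that $(1-\alpha)x_is_i\ge(1-\alpha)(1-\gamma)\mu$, the $\mu$-terms telescope to $\alpha\gamma\sig\mu$ and one is left with
\[
x_i(\alpha)s_i(\alpha)-(1-\gamma)\munew\ \ge\ \alpha\gamma\sig\mu-\alpha v_i+(1-\gamma)\,\alpha\,\tfrac{\vb^\ts\one_\dimtwo}{\dimtwo}+\alpha^2(\hat{\Delta\xb}\circ\hat{\Delta\sbb})_i.
\]
Then I would bound the three perturbation terms: $|v_i|\le\|\vb\|_\infty\le\|\vb\|_2\le\gamma\sig\mu/4$, and $|\vb^\ts\one_\dimtwo/\dimtwo|\le\|\vb\|_\infty\le\gamma\sig\mu/4$ by eqn.~\eqref{eq:normineq}, together with $1-\gamma\le1$ and $|(\hat{\Delta\xb}\circ\hat{\Delta\sbb})_i|\le\delxdelsnorm$. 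This yields $x_i(\alpha)s_i(\alpha)-(1-\gamma)\munew\ge\alpha\gamma\sig\mu/2-\alpha^2\delxdelsnorm$; invoking the hypothesis $\alpha\le\gamma\sig\mu/(4\delxdelsnorm)$ from eqn.~\eqref{alphalemma1_f} gives $\alpha^2\delxdelsnorm\le\alpha\gamma\sig\mu/4$, and therefore $x_i(\alpha)s_i(\alpha)-(1-\gamma)\munew\ge\alpha\gamma\sig\mu/4\ge0$, which is exactly the centrality inequality.

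For strict positivity I would run the same estimate on $\xnew\circ\snew$ itself: dropping the nonnegative term $(1-\alpha)\xcircs$ and using $|v_i|\le\gamma\sig\mu/4$ and $\alpha^2\delxdelsnorm\le\alpha\gamma\sig\mu/4$ shows $x_i(\alpha)s_i(\alpha)\ge\alpha\sig\mu(1-\gamma/2)>0$ for every $i$ and every $\alpha>0$ in the interval of eqn.~\eqref{alphalemma1_f}. Since $\xnew$ and $\snew$ are affine in $\alpha$, strictly positive at $\alpha=0$, and their coordinatewise products never vanish on that interval, no coordinate of $\xnew$ or $\snew$ can change sign there; hence $(\xnew,\snew)>0$ throughout, so $(\xnew,\ynew,\snew)\in\mathcal{F}^0$ and, combined with the centrality bound, $(\xnew,\ynew,\snew)\in\neigh$. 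The main obstacle is this last strict-feasibility step -- which is exactly what distinguishes the lemma from Lemma~3.5 of~\citep{Mon03}, where the iterates are feasible by construction: one has to check that the single hypothesis $\|\vb\|_2\le\gamma\sig\mu/4$ is simultaneously strong enough to absorb the $-\alpha v_i$ term in $\xnew\circ\snew$ \emph{and} the $\vb$-contribution that enters through $\munew$, while still leaving the positive slack ($\alpha\gamma\sig\mu/4$, resp.\ $\alpha\sig\mu(1-\gamma/2)$) needed to run the continuity argument.
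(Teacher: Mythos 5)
Your proposal is correct, and its first two components coincide with the paper's proof: the equality constraints follow from eqns.~\eqref{eq:iip_3_f_1}--\eqref{eq:iip_3_f_2} exactly as you say, and your centrality computation is the same as the paper's -- expand $\xnew\circ\snew-(1-\gamma)\munew\one_\dimtwo$ via Lemma~\ref{lemmamaxalpha0}, absorb the two $\vb$-contributions (each at most $\gamma\sigma\mu/4$ coordinatewise, so $\gamma\sigma\mu/2$ in total, which is what the paper's $2\|\vb\|_\infty$ bound gives), and use the step-size cap to control the $\alpha^2$ term, landing on the same slack $\alpha\gamma\sigma\mu/4\ge 0$. Where you genuinely diverge is the strict-positivity step. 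The paper first shows $\munew>0$ (using $\abs{\vb^\ts\one_\dimtwo/\dimtwo}<\sigma\mu/4$ in Lemma~\ref{lemmamaxalpha0}(b)), concludes $x_i(\alpha)s_i(\alpha)\ge(1-\gamma)\munew>0$, and then rules out the ``both coordinates negative'' case by a contradiction argument that re-enters the elementwise Newton equation~\eqref{eq:iip_3_f_3} and derives $v_i>\sigma\mu$, contradicting $\|\vb\|_2\le\gamma\sigma\mu/4$. You instead lower-bound the product directly, $x_i(\alpha)s_i(\alpha)\ge\alpha\sigma\mu(1-\nicefrac{\gamma}{2})>0$ for $\alpha>0$, and invoke continuity of the affine maps $\alpha\mapsto x_i(\alpha),s_i(\alpha)$: positive at $\alpha=0$ and with a never-vanishing product on the interval, neither coordinate can cross zero. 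Both arguments are valid; yours is slightly more self-contained (no second pass through the Newton system, no separate $\munew>0$ step), while the paper's version makes $\munew>0$ explicit, which is a convenient fact to have on record, and avoids an explicit intermediate-value argument.
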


\begin{proof}
	First, we show that  $\xnew \circ \snew \geq (1-\gamma) \munew \one_\dimtwo$. From Lemma~\ref{lemmamaxalpha0}, we get
\begin{flalign}
&\xnew \circ \snew - (1-\gamma) \munew \one_\dimtwo \notag \\
&= (1-\alpha)\left(\xcircs - (1-\gamma)\mu\,\one_\dimtwo\right) + \alpha \threefac\, \one_\dimtwo - \alpha \left(\vb - (1-\gamma)\frac{\vb^\ts\one_\dimtwo}{\dimtwo}\one_\dimtwo\right)\nonumber\\
&~~~~~~~~~~~~~~~~~~~~~~~~~~~~~~~~~~~~~~~~~~~~~~~+ \alpha^2 \left(\delxdels \right)  \notag \\
& \geq \alpha \left( \threefac  - \nbr{\vb- (1-\gamma)\frac{\vb^\ts\one_\dimtwo}{\dimtwo}\,\one_\dimtwo}_\infty    - \alpha \nbr{\hat{\Delta\xb}\circ\hat{\Delta\sbb}}_\infty\right) \one_\dimtwo  \notag \\
& \geq \alpha \Bigg(\threefac - 2 \|\vb\|_\infty -  \alpha\delxdelsnorm \Bigg)\one_\dimtwo \notag \\ 
& \geq \alpha \Bigg(\threefac - \frac{\threefac}{2} -  \frac{\threefac}{4}\Bigg) \one_\dimtwo  = \, \alpha\frac{\threefac}{4}\one_n\ge\zero \notag.
\end{flalign} 
The first inequality follows from $\xcircs\ge(1-\gamma)\mu\,\one_\dimtwo$, because $(\xb,\yb,\sbb)\in\mathcal{N}(\gamma)$ and $\ab\le\|\ab\|_\infty\,\one_\dimtwo$ for any vector $\ab\in\R{\dimtwo}$. The second-to-last inequality follows from the fact that for any $\ub \in \mathbb{R}^\dimtwo$ and $\delta \in [0,\dimtwo]$, $\nbr{\ub - \delta \frac{\ub^\ts \one_\dimtwo}{\dimtwo}\,\one_\dimtwo}_\infty \leq (1+ \delta)\|\ub\|_\infty$. Thus, we prove that the point $(\xnew,\ynew,\snew)$ satisfies the proximity condition for $\neigh$.


Finally, we show that $(\xnew,\ynew,\snew)\in\mathcal{F}^0$ \ie~it satisfies the primal and dual constraints and $(\xnew,\snew)>\zero$. From eqn.~\eqref{eq:iip_3_f} and the fact that $(\xb,\yb,\sbb)\in\mathcal{F}^0$,  we get $\Ab\,\xnew=\Ab\xb+\alpha\Ab\hat{\Delta\xb}=\bb$. Similarly, $$\Ab^\ts\,\ynew+\snew=(\Ab^\ts\yb+\sbb)+\alpha(\Ab^\ts\hat{\Delta\yb}+\hat{\Delta\sbb})=\Ab^\ts\yb+\sbb = \cbb.$$  
We now show that $(\xnew,\snew)>\zero$. For $\alpha=0$, we trivially have$(\xnew,\snew)=(\xb,\sbb)>\zero$. To prove $(\xnew,\snew)>\zero$ for $0<\alpha\le1$, we first show $\munew>0$. Using $\gamma \in (0,1)$, the inequality $\abs{\frac{\vb^\ts\one_n}{n}}\le\|\vb\|_\infty\le\|\vb\|_2$, and the assumption $\|\vb\|_2\le\frac{\threefac}{4}$, we get $\frac{\vb^\ts\one_\dimtwo}{\dimtwo} < \frac{\sig \mu}{4}$. 
Thus, from Lemma~\ref{lemmamaxalpha0}(b), 
	\begin{flalign}
	\munew &= [1-\alpha(1-\sig)]\mu - \alpha \,\frac{\vb^\ts\one_\dimtwo}{\dimtwo}\nonumber\\
	&> [1-\alpha(1-\sig)]\mu -  \alpha\,\frac{\sig \mu}{4}\nonumber\\
	&= (1-\alpha)\mu + \alpha\,\frac{3\,\sig \mu}{4}>0.
	\label{last}
	\end{flalign} 
The last inequality holds because $\alpha\in (0,1]$, $\sigma\in (0,1)$, and $\mu>0$. 
We already have $\xnew\circ\snew\ge (1-\gamma)\munew\one_n$. Combining with $\munew>0$ and $\gamma\in (0,1)$ this implies that $\xnew\circ\snew>\zero$. Therefore, $x_i(\alpha)\,s_i(\alpha)>0$ for all $i=1\dots n$ which implies that, for each $i$, either both $x_i(\alpha)$ and $s_i(\alpha)$ are positive or both $x_i(\alpha)$ and $s_i(\alpha)$ are negative. We will use contradiction to prove that the second case is not possible.

Indeed, assume that $x_i(\alpha)<0$ and $s_i(\alpha)<0$ for some $i=1 \ldots n$. First, we rewrite $x_i(\alpha)$ and $s_i(\alpha)$ as follows\footnote{Here, $x_i(\alpha)$, $s_i(\alpha)$, $x_i$, $s_i$, $\hat{\Delta x_i}$, and $\hat{\Delta s_i}$ are the $i$-th elements of $\xnew$, $\snew$, $\xb$, $\sbb$, $\hat{\Delx}$, and $\hat{\Dels}$, respectively.}
\begin{subequations}
\begin{flalign}
x_i(\alpha)=&~x_i+\alpha\hat{\Delta x_i}<0\label{eq:x_i},\\
s_i(\alpha)=&~s_i+\alpha\hat{\Delta s_i}<0\label{eq:s_i}.
\end{flalign}
\end{subequations}
Recall that both $x_i$ and $s_i$ are positive. Therefore, pre-multiplying eqn.~\eqref{eq:x_i} by $s_i$ and eqn.~\eqref{eq:s_i} by $x_i$ we get
\begin{subequations}
\begin{flalign}
x_is_i+\alpha s_i\hat{\Delta x_i}<&~0\label{eq:x_i1},\\
x_is_i+\alpha x_i\hat{\Delta s_i}<&~0\label{eq:s_i1}.
\end{flalign}
\end{subequations}
Adding eqns.~\eqref{eq:x_i1} and \eqref{eq:s_i1} and applying eqn.~\eqref{eq:iip_3_f_3} (element-wise), we get
\begin{flalign}
&~2x_is_i+\alpha (s_i\hat{\Delta x_i}+x_i\hat{\Delta s_i})<0\nonumber\\
\Rightarrow&~2x_is_i+ \alpha (-x_is_i+\sigma\mu-v_i)<0\nonumber\\
\Rightarrow&~ (2-\alpha)x_is_i+\alpha\sigma\mu-\alpha v_i<0\nonumber\\
\Rightarrow&~ v_i>\frac{2-\alpha}{\alpha}x_is_i+\sigma\mu > \sigma\mu\label{eq:xs}.
\end{flalign}
In the above $v_i$ is the $i$-th element of $\vb$; the first inequality in eqn.~\eqref{eq:xs} holds because $\alpha>0$; the second inequality in eqn.~\eqref{eq:xs} because $\frac{2-\alpha}{\alpha}x_is_i>0$ ($x_i,s_i>0$ and $0 < \alpha \leq 1$). Using $\|\vb\|_2\le\frac{\threefac}{4}$ we get
$$v_i\le\|\vb\|_\infty\le\|\vb\|_2\le\frac{\threefac}{4}<\sigma\mu\,,$$
for all $i=1\ldots n$. This contradicts the inequality of eqn.~\eqref{eq:xs}; thus, both $x_i(\alpha)>0$ and $s_i(\alpha)>0$ for all $i=1\ldots n$ and all $\alpha\in[0,1]$.
\end{proof}

\noindent We now cite a result from \citep{Mon03}  that provides a lower bound on $\alphaused$ and the corresponding $\mu(\alphaused)$. Note that \citep{Mon03} presented it in context of infeasible IPM; however, it holds for the feasible case as well, as long as the perturbation vector $\vb$ satisfies $\|\vb\|_2\le\frac{\gamma\sigma\mu}{4}$ at each iteration.
\begin{lemma}[Lemma 3.6 of \citep{Mon03}]\label{lemmaminalpha1_f}
	At each iteration of the Algorithm~\ref{algo:iipm}, if $\|\vb\|_2\le\frac{\gamma\sigma\mu}{4}$, then the step size $\alphaused$ satisfies 
	\begin{flalign} \label{alphalemma2_f}
	\alphaused \geq \min \left\{ 1,  \frac{ \min   \{ \gamma \sigma, (1 - \frac{5}{4} \sigma) \} \mu } {4  \|\hat{\Delta x} \circ \hat{\Delta s}\|_\infty}  \right\} 
	\end{flalign}
	and 
	\begin{flalign} \label{mulemma2_f}
	\mu(\alphaused)= \Big[ 1 - \frac{\alphaused}{2} (1-\frac{5}{4}\sig) \Big] \mu. 
	\end{flalign}
\end{lemma}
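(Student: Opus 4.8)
The plan is to show that the proof of Lemma~3.6 of~\citep{Mon03} goes through in our feasible setting, since it relies only on (i) the algebraic identities of Lemma~\ref{lemmamaxalpha0} for the system~\eqref{eq:iip_3_f}, and (ii) the standing hypothesis $\|\vb\|_2\le\frac{\gamma\sigma\mu}{4}$ --- both of which are available here. I would organize the argument into three short steps.

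First I would record that, in the feasible case, the duality gap along the step is \emph{affine} in $\alpha$: by Lemma~\ref{lemmamaxalpha0}(b), $\munew=[1-\alpha(1-\sigma)]\mu-\alpha\,\vb^\ts\one_\dimtwo/\dimtwo$, where the $\alpha^2$ term vanishes because $\hat{\Delta\xb}^\ts\hat{\Delta\sbb}=0$ (from eqns.~\eqref{eq:iip_3_f_1}--\eqref{eq:iip_3_f_2}). Using $|\vb^\ts\one_\dimtwo/\dimtwo|\le\|\vb\|_\infty\le\|\vb\|_2\le\frac{\gamma\sigma\mu}{4}\le\frac{\sigma\mu}{4}$ (by~\eqref{eq:normineq} and $\gamma<1$) together with $\sigma<\nicefrac{4}{5}$, the slope of $\munew$ is at most $-(1-\frac{5}{4}\sigma)\mu<0$, so $\munew$ is strictly decreasing on $[0,1]$. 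Hence the minimizer computed in Step~6 of Algorithm~\ref{algo:iipm} is attained at the right endpoint, i.e. $\alphaused=\alphamax$, the largest admissible step.

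Second I would lower-bound $\alphamax$. Lemma~\ref{lemmamaxalpha1_f}, whose hypothesis is precisely $\|\vb\|_2\le\frac{\gamma\sigma\mu}{4}$, guarantees $(\xnew,\ynew,\snew)\in\neigh$ for every $\alpha\in[0,\min\{1,\gamma\sigma\mu/(4\,\delxdelsnorm)\}]$, so $\alphaused=\alphamax\ge\min\{1,\gamma\sigma\mu/(4\,\delxdelsnorm)\}$. Since $\min\{\gamma\sigma,\,1-\frac{5}{4}\sigma\}\le\gamma\sigma$, this already implies the claimed bound~\eqref{alphalemma2_f}; the weaker constant $\min\{\gamma\sigma,\,1-\frac{5}{4}\sigma\}$ is retained only so that the statement coincides with the infeasible version of~\citep{Mon03}, where a second constraint (keeping $\mu$ on a decreasing track) becomes active. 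Then, for~\eqref{mulemma2_f}, I would substitute $\alpha=\alphaused$ into the affine expression for $\munew$ and use $\vb^\ts\one_\dimtwo/\dimtwo\le\frac{\sigma\mu}{4}$ once more to get $\mu(\alphaused)\le[1-\alphaused(1-\frac{5}{4}\sigma)]\mu\le[1-\frac{\alphaused}{2}(1-\frac{5}{4}\sigma)]\mu$, which is~\eqref{mulemma2_f}; in the feasible case the quoted identity collapses to this inequality because $\hat{\Delta\xb}^\ts\hat{\Delta\sbb}=0$, and this inequality (together with the lower bound on $\alphaused$ above) is exactly what is needed for the $\Ocal(n\log(1/\epsilon))$ iteration count in Theorem~\ref{thm:2f}.

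The only delicate point --- and the reason one cannot simply zero out the primal/dual residuals in the proof of~\citep{Mon03} and be finished --- is the bookkeeping of the perturbation term $\vb$: one must check that the two-sided bound $|\vb^\ts\one_\dimtwo/\dimtwo|\le\frac{\sigma\mu}{4}$ simultaneously keeps the slope of $\munew$ negative (so that $\alphaused=\alphamax$) and leaves the neighborhood analysis of Lemma~\ref{lemmamaxalpha1_f} intact. Since $\|\vb\|_2\le\frac{\gamma\sigma\mu}{4}$ is the standing assumption of the present lemma --- and is enforced at every IPM iteration by Lemma~\ref{lem:conouter} via $t=\Ocal(\log n)$ inner solver iterations --- no further argument is required, and the rest is routine constant-chasing.
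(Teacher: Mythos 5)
Your proof is correct, but it takes a genuinely different route from the paper: the paper does not prove this lemma at all, it imports it verbatim as Lemma~3.6 of \citep{Mon03} and merely remarks that the infeasible-case argument carries over once $\|\vb\|_2\le\frac{\gamma\sigma\mu}{4}$ is enforced. You instead give a short self-contained derivation that exploits feasibility: since $\hat{\Delta\xb}^\ts\hat{\Delta\sbb}=0$, Lemma~\ref{lemmamaxalpha0}(b) makes $\mu(\alpha)$ affine in $\alpha$ with slope at most $-(1-\frac{5}{4}\sigma)\mu<0$ (using $\abs{\vb^\ts\one_\dimtwo/\dimtwo}\le\|\vb\|_2\le\frac{\sigma\mu}{4}$ and $\sigma<\nicefrac{4}{5}$), so the minimization in step~6 of Algorithm~\ref{algo:iipm} returns $\alphaused=\alphamax$, and Lemma~\ref{lemmamaxalpha1_f} then gives $\alphaused=\alphamax\ge\min\{1,\gamma\sigma\mu/(4\,\delxdelsnorm)\}$, which implies eqn.~\eqref{alphalemma2_f}. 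This is simpler than the argument in \citep{Mon03}, which must handle the quadratic term $\alpha^2\hat{\Delta\xb}^\ts\hat{\Delta\sbb}/\dimtwo$ and the residual-contraction condition built into the infeasible neighborhood; the paper's citation buys uniformity (one statement serving both the feasible and infeasible algorithms), while your derivation buys transparency and correctly observes that in the feasible case the $\gamma\sigma$ threshold alone suffices, the $\min\{\gamma\sigma,1-\frac{5}{4}\sigma\}$ constant being kept only to match \citep{Mon03}. One point to state more carefully: what you actually establish for eqn.~\eqref{mulemma2_f} is the inequality $\mu(\alphaused)\le\big[1-\frac{\alphaused}{2}(1-\frac{5}{4}\sigma)\big]\mu$, not the displayed equality; equality cannot hold in general, since $\mu(\alphaused)=[1-\alphaused(1-\sigma)]\mu-\alphaused\,\vb^\ts\one_\dimtwo/\dimtwo$ depends on $\vb$. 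This is not a real gap --- the inequality is what \citep{Mon03} proves and is all that Lemma~\ref{theoremOuter_f} uses to get $\mu_{k+1}\le(1-\nicefrac{\beta}{\dimtwo})\mu_k$ --- but saying the identity ``collapses'' to the inequality is loose; simply note that the stated ``$=$'' should be read as ``$\le$'' and your argument delivers exactly that.
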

\noindent At this point, we have provided a lower bound (eqn.~(\ref{alphalemma2_f})) for the allowed values of the step size $\alphaused$. Next, we will show that this lower bound is bounded away from zero. From eqn.~(\ref{alphalemma2_f}), is suffices to show that $\delxdelsnorm$ is bounded. First, we state the following inequality that will be instrumental in proving Lemma~\ref{lemmaminalpha2_f}.

\begin{lemma}\label{prop:wri}
	Let $\ab,\bb\in\R{n}$ be any two vectors such that $\ab^\ts\bb\ge 0$. Then
	\begin{flalign*}
	\|\ab\circ\bb\|_2\le \|\ab+\bb\|_2^2\,.
	\end{flalign*}
\end{lemma}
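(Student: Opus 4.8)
The plan is to derive the pointwise bound from the polarization-type identity $\ab\circ\bb=\tfrac14\big((\ab+\bb)\circ(\ab+\bb)-(\ab-\bb)\circ(\ab-\bb)\big)$, convert the coordinatewise estimates into a norm estimate block by block, and invoke the hypothesis $\ab^\ts\bb\ge 0$ only at the very end.

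First I would partition $\{1,\dots,n\}$ into $P=\{i: a_ib_i\ge 0\}$ and $M=\{i: a_ib_i<0\}$. The identity $(a_i+b_i)^2=(a_i-b_i)^2+4a_ib_i$ gives $0\le a_ib_i\le\tfrac14(a_i+b_i)^2$ for $i\in P$, and $0\le -a_ib_i\le\tfrac14(a_i-b_i)^2$ for $i\in M$. Squaring and summing over each block, and then using the elementary fact that $\sum_i t_i^2\le\big(\sum_i t_i\big)^2$ whenever $t_i\ge 0$ (applied with $t_i=(a_i+b_i)^2$ on $P$ and $t_i=(a_i-b_i)^2$ on $M$), yields $\sum_{i\in P}(a_ib_i)^2\le\tfrac1{16}\|\ab+\bb\|_2^4$ and $\sum_{i\in M}(a_ib_i)^2\le\tfrac1{16}\|\ab-\bb\|_2^4$. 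Adding the two blocks gives $\|\ab\circ\bb\|_2^2\le\tfrac1{16}\big(\|\ab+\bb\|_2^4+\|\ab-\bb\|_2^4\big)$.

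Only now would I use the assumption: since $\|\ab+\bb\|_2^2-\|\ab-\bb\|_2^2=4\,\ab^\ts\bb\ge 0$, we have $\|\ab-\bb\|_2^2\le\|\ab+\bb\|_2^2$, hence $\|\ab\circ\bb\|_2^2\le\tfrac18\|\ab+\bb\|_2^4$, and taking square roots gives $\|\ab\circ\bb\|_2\le\tfrac1{2\sqrt 2}\|\ab+\bb\|_2^2\le\|\ab+\bb\|_2^2$, which is the claim (in fact with a spare factor). The only mildly delicate point is the passage from the coordinatewise bounds to the norm bound: one must not bound $\|\ab\circ\bb\|_2$ crudely by $\sum_i|a_ib_i|$, nor merge the $P$ and $M$ contributions before using $\ab^\ts\bb\ge 0$; keeping the two blocks separate, estimating each by the appropriate squared quantity, and only afterwards comparing $\|\ab-\bb\|_2$ with $\|\ab+\bb\|_2$ is exactly what makes the constant go through. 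Everything else is the two-line AM–GM estimate together with the inequality $\sum_i t_i^2\le\big(\sum_i t_i\big)^2$ for nonnegative $t_i$.
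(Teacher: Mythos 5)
Your proof is correct, and it even recovers the sharper constant $2^{-3/2}$ that the paper mentions in passing. Note, though, that the paper does not prove this lemma at all: it simply cites Wright's primal--dual IPM monograph, where the standard argument also partitions the indices into $P=\{i:a_ib_i\ge 0\}$ and $M=\{i:a_ib_i<0\}$ and uses $a_i b_i \le \tfrac14(a_i+b_i)^2$ on $P$, but then invokes the hypothesis in a different place: from $\ab^\ts\bb\ge 0$ one gets $\sum_{i\in M}|a_ib_i|\le\sum_{i\in P}a_ib_i$, so the $M$-block is controlled by the $P$-block, and the $\ell_2$-norm of $\ab\circ\bb$ is bounded via $\|\cdot\|_2\le\|\cdot\|_1$ applied to each block, giving $\|\ab\circ\bb\|_2\le\sqrt{2}\sum_{i\in P}a_ib_i\le 2^{-3/2}\|\ab+\bb\|_2^2$. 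Your variant instead bounds the $M$-block independently by $\tfrac1{16}\|\ab-\bb\|_2^4$ and uses the hypothesis only at the end through the polarization identity $\|\ab+\bb\|_2^2-\|\ab-\bb\|_2^2=4\,\ab^\ts\bb\ge 0$; this is a clean, self-contained alternative that reaches the same constant, at the mild cost of carrying the auxiliary quantity $\|\ab-\bb\|_2$ rather than exploiting the cancellation between the two blocks directly. Either route more than suffices for the paper, which only needs the inequality with constant $1$.
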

\noindent See~\citep{wright1997primal} for a proof of Lemma~\ref{prop:wri}; as a matter of fact,~\citep{wright1997primal} proved $\|\ab\circ\bb\|_2\le2^{-3/2}\|\ab+\bb\|_2^2$, which is tighter. This tighter bound is not needed in our proof. 

\begin{lemma}\label{lemmaminalpha2_f}
	Let $(\xb,\yb,\sbb) \in \neigh$ and $\|\vb\|_2\le\frac{\gamma\sigma\mu}{4}$. Then   $(\hat{\Delta \xb},\hat{\Delta \yb},\hat{\Delta \sbb})$ satisfies 
	
\begin{flalign} \label{normsmax_f}
\|\hat{\Delx}\circ\hat{\Dels}\|_\infty\le\left(1+\frac{\sigma^2}{1-\gamma}\right)n\mu + \frac{\gamma^2\sigma^2}{16(1-\gamma)}\mu+\frac{\gamma\sigma^2}{2}\mu\,.
\end{flalign}
\end{lemma}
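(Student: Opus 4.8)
The plan is to reduce $\|\hat{\Delta\xb}\circ\hat{\Delta\sbb}\|_\infty$ to the squared Euclidean norm of a single scaled vector and then bound that norm termwise. First I would record the orthogonality $\hat{\Delta\xb}^\ts\hat{\Delta\sbb}=0$, which follows immediately from eqns.~\eqref{eq:iip_3_f_1}--\eqref{eq:iip_3_f_2} (as already observed in the proof of Lemma~\ref{lemmamaxalpha0}): from $\hat{\Delta\sbb}=-\Ab^\ts\hat{\Delta\yb}$ and $\Ab\hat{\Delta\xb}=\zero$ we get $\hat{\Delta\xb}^\ts\hat{\Delta\sbb}=-(\Ab\hat{\Delta\xb})^\ts\hat{\Delta\yb}=0$. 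Next, introduce the scaled directions $\ab:=\Db^{-1}\hat{\Delta\xb}$ and $\bb:=\Db\hat{\Delta\sbb}$ with $\Db=\Xb^{1/2}\Sb^{-1/2}$; since $\Db$ is diagonal we have $\ab\circ\bb=\hat{\Delta\xb}\circ\hat{\Delta\sbb}$ entrywise and $\ab^\ts\bb=\hat{\Delta\xb}^\ts\hat{\Delta\sbb}=0\ge 0$. Applying Lemma~\ref{prop:wri} to $\ab,\bb$, together with the trivial bound $\|\cdot\|_\infty\le\|\cdot\|_2$, then yields $\|\hat{\Delta\xb}\circ\hat{\Delta\sbb}\|_\infty\le\|\ab+\bb\|_2^2$.

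The second step is to evaluate $\ab+\bb$ from the third block of the (inexact) Newton system. Premultiplying eqn.~\eqref{eq:iip_3_f_3}, i.e.\ $\Xb\hat{\Delta\sbb}+\Sb\hat{\Delta\xb}=-\Xb\Sb\one_n+\sigma\mu\one_n-\vb$, by $(\Xb\Sb)^{-1/2}$ gives $\ab+\bb=(\Xb\Sb)^{-1/2}(-\Xb\Sb\one_n+\sigma\mu\one_n-\vb)$, so that $\|\ab+\bb\|_2^2=\sum_{i=1}^n (x_is_i-\sigma\mu+v_i)^2/(x_is_i)$. Expanding the square splits this into $(I)=\sum_i (x_is_i-\sigma\mu)^2/(x_is_i)$, a cross term $(II)=2\sum_i (x_is_i-\sigma\mu)v_i/(x_is_i)$, and $(III)=\sum_i v_i^2/(x_is_i)$. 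Using only $\sum_i x_is_i=n\mu$ and $x_is_i\ge(1-\gamma)\mu$ (the latter because $(\xb,\yb,\sbb)\in\neigh$), I get the exact identity $(I)=n\mu-2n\sigma\mu+\sigma^2\mu^2\sum_i 1/(x_is_i)$, hence $(I)\le\big(1-2\sigma+\tfrac{\sigma^2}{1-\gamma}\big)n\mu$, and $(III)\le\|\vb\|_2^2/((1-\gamma)\mu)\le\gamma^2\sigma^2\mu/(16(1-\gamma))$ after inserting $\|\vb\|_2\le\gamma\sigma\mu/4$.

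The cross term $(II)$ is the only delicate point. A direct Cauchy--Schwarz bound $(II)\le 2\sqrt{(I)}\,\sqrt{(III)}$ would introduce an unwanted $\sqrt n$ through $\sqrt{(I)}=\Theta(\sqrt{n\mu})$; instead I would apply Young's inequality $2a_ib_i\le\epsilon a_i^2+\epsilon^{-1}b_i^2$ \emph{termwise} with $a_i=(x_is_i-\sigma\mu)/\sqrt{x_is_i}$ and $b_i=v_i/\sqrt{x_is_i}$, then sum, obtaining $(II)\le\epsilon(I)+\epsilon^{-1}(III)$ with no $\sqrt n$ factor. Thus $\|\ab+\bb\|_2^2\le(1+\epsilon)(I)+(1+\epsilon^{-1})(III)$. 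Choosing $\epsilon=\gamma/(8(1-\gamma))$ makes $\epsilon^{-1}(III)\le\gamma\sigma^2\mu/2$, which matches the last target term, while the extra $\epsilon(I)$ is only an $O(n\mu)$ quantity that is absorbed by the $-2n\sigma\mu$ slack already present in the exact expression for $(I)$, leaving $(1+\epsilon)(I)\le\big(1+\tfrac{\sigma^2}{1-\gamma}\big)n\mu$. Summing the three bounds gives precisely the claimed inequality. I expect the absorption of $\epsilon(I)$ into the $n\mu$ term — essentially checking an elementary inequality in $\sigma$ and $\gamma$ (here $\gamma$ bounded away from $1$, as is standard for long-step methods) — to be the one place requiring care; everything else is routine substitution.
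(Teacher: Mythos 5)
Your setup coincides with the paper's own proof up to the point where the square is expanded: the orthogonality $\hat{\Delx}^\ts\hat{\Dels}=0$, the application of Lemma~\ref{prop:wri} to $\ab=\Db^{-1}\hat{\Delx}$, $\bb=\Db\hat{\Dels}$ together with $\|\cdot\|_\infty\le\|\cdot\|_2$, the identity $\ab+\bb=(\Xb\Sb)^{-\nicefrac{1}{2}}(-\Xb\Sb\one_n+\sigma\mu\one_n-\vb)$ obtained from eqn.~\eqref{eq:iip_3_f_3}, and your bounds on $(I)$ and $(III)$ are all correct and are exactly what the paper does. The genuine gap is your treatment of the cross term $(II)$ via termwise Young's inequality. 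To make $\epsilon^{-1}(III)\le\frac{\gamma\sigma^2}{2}\mu$ you are forced to take $\epsilon\ge\frac{\gamma}{8(1-\gamma)}$, while absorbing $\epsilon(I)$ into the available slack requires (using that $(I)$ can be as large as its upper bound and that the only room against the target first term is $2\sigma n\mu$)
\begin{flalign*}
\frac{\gamma}{8(1-\gamma)}\left(1-2\sigma+\frac{\sigma^2}{1-\gamma}\right)\;\le\;2\sigma\,.
\end{flalign*}
This fails for admissible parameters: at $\gamma=0.9$, $\sigma=0.5$ the left side is about $2.8$ while the right side is $1$, and for any fixed $\gamma$ it fails once $\sigma$ is small, since the left side stays bounded away from zero while the right side vanishes. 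Because the lower constraint on $\epsilon$ cannot be relaxed, no choice of $\epsilon$ makes the termwise-Young route deliver the constants claimed in eqn.~\eqref{normsmax_f} uniformly over $\gamma\in(0,1)$, $\sigma\in(0,\nicefrac{4}{5})$; the absorption you flagged as the delicate step is precisely where the argument breaks, and not merely for $\gamma$ pathologically close to $1$.

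The paper avoids Young's inequality altogether by grouping $\vb-\sigma\mu\one_n$ together: the cross term between $(\Xb\Sb)^{\nicefrac{1}{2}}\one_n$ and $(\Xb\Sb)^{-\nicefrac{1}{2}}(\vb-\sigma\mu\one_n)$ has its weights cancel exactly, equals $2\,\one_n^\ts(\vb-\sigma\mu\one_n)\le 2n\left(\|\vb\|_2-\sigma\mu\right)\le 0$ (since $\|\vb\|_2\le\frac{\gamma\sigma\mu}{4}<\sigma\mu$), and is simply dropped; the remaining square $\|(\Xb\Sb)^{-\nicefrac{1}{2}}(\vb-\sigma\mu\one_n)\|_2^2$ is expanded as in eqn.~\eqref{eq:dint2} and contributes $\frac{\|\vb\|_2^2}{(1-\gamma)\mu}+\frac{n\sigma^2\mu}{1-\gamma}$ plus a term reduced to $2\sigma\abs{\one_n^\ts\vb}/n\le 2\sigma\|\vb\|_2\le\frac{\gamma\sigma^2}{2}\mu$, which yields exactly eqn.~\eqref{normsmax_f}. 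You can repair your decomposition in the same spirit without introducing any $\epsilon$: evaluate $(II)$ exactly as $2\,\one_n^\ts\vb-2\sigma\mu\,\one_n^\ts(\Xb\Sb)^{-1}\vb$, absorb $2\,\one_n^\ts\vb\le 2n\|\vb\|_2$ against the $-2n\sigma\mu$ already inside your exact expression for $(I)$, and bound the second piece as the paper bounds the corresponding term; the key structural point you missed is that the large part of the cross term carries no $(x_is_i)^{-1}$ weights and is dominated by the negative $-2n\sigma\mu$, so no trade-off against $(I)$ is ever needed.
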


\begin{proof}
First, we multiply eqn.~(\ref{eq:iip_3_f_3}) on the left by $(\Xb \Sb)^{-1/2}$ to get
\begin{flalign}\label{eq:Dint}
\Db^{-1} \hat{\Delta \xb} + \Db \hat{\Delta \sbb} = -(\Xb \Sb)^{1/2}\one_\dimtwo + 
\sig \mu (\Xb \Sb)^{-1/2} \one_\dimtwo - (\Xb \Sb)^{-1/2} \vb\,.
\end{flalign} 
Next, pre-multiplying eqn.~\eqref{eq:iip_3_f_2} by $\hat{ \Delta \xb}^\ts$ and applying eqn.~\eqref{eq:iip_3_f_1}, we have $\hat{\Delx}^\ts\hat{ \Delta \sbb}=0$. This also implies that $\hat{\Delx}^\ts\hat{ \Delta \sbb}=(\Db^{-1}\hat{\Delx})^\ts(\Db\hat{ \Delta \sbb})$. Applying Lemma~\ref{prop:wri} with $\ab=\Db^{-1}\hat{\Delx}$ and $\bb=\Db\hat{\Dels}$, and using eqn.~\eqref{eq:Dint} we get
\begin{flalign}
\|\hat{ \Delta \xb}\circ\hat{ \Delta \sbb}\|_2=&~\|(\Db^{-1}\hat{ \Delta \xb})\circ(\Db\hat{ \Delta \sbb})\|_2\le\|\Db^{-1} \hat{\Delta \xb} + \Db \hat{\Delta \sbb}\|_2^2\nonumber\\
=&~\|-(\Xb \Sb)^{1/2}\one_\dimtwo + 
\sig \mu (\Xb \Sb)^{-1/2} \one_\dimtwo - (\Xb \Sb)^{-1/2} \vb\|_2^2\nonumber\\
=&~\|(\Xb \Sb)^{1/2}\one_\dimtwo+(\Xb \Sb)^{-1/2}(\vb-\sigma\mu\one_\dimtwo)\|_2^2\nonumber\\
=&~\|(\Xb \Sb)^{1/2}\one_\dimtwo\|_2^2+ 2\cdot\one_\dimtwo^\ts(\vb-\sigma\mu\one_\dimtwo)+\|(\Xb \Sb)^{-1/2}(\vb-\sigma\mu\one_\dimtwo)\|_2^2\nonumber\\
\le&~n\mu+2n (\|\vb\|_2-\sigma\mu)+\|(\Xb \Sb)^{-1/2}(\vb-\sigma\mu\one_\dimtwo)\|_2^2\label{eq:dineq}.
\end{flalign}
The inequality in eqn.~\eqref{eq:dineq} follows from $\|(\Xb \Sb)^{1/2}\one_\dimtwo\|_2^2=n\mu$ and $\abs{\one_\dimtwo^\ts\vb}\le n\,\|\vb\|_\infty\le n\,\|\vb\|_2$. Next, consider the last term on the right hand side of eqn.~\eqref{eq:dineq}:
\begin{flalign}
\|(\Xb \Sb)^{-1/2}(\vb-\sigma\mu\one_\dimtwo)\|_2^2=&~\|(\Xb \Sb)^{-1/2}\vb\|_2^2+\sigma^2\mu^2\|(\Xb \Sb)^{-1/2}\one_\dimtwo\|_2^2-2\sigma\mu\,\one_n^\ts(\Xb\Sb)^{-1}\vb\nonumber\\
\le&~\frac{\|\vb\|_2^2}{\min_i x_is_i}+\sigma^2\mu^2\sum_{i=1}^{n}\frac{1}{x_is_i}-2\sigma\mu\sum_{i=1}^{n}\frac{v_i}{x_is_i}.\label{eq:dint2}
\end{flalign}
Eqn.~\eqref{eq:dint2} follows from $\|(\Xb \Sb)^{-1/2}\vb\|_2^2\le\|(\Xb \Sb)^{-1/2}\|_2^2\|\vb\|_2^2$ and $\|(\Xb \Sb)^{-1/2}\|_2^2=\nicefrac{1}{\min_i x_is_i}$. Moreover, it is easy to verify that $\|(\Xb \Sb)^{-1/2}\one_\dimtwo\|_2^2=\sum_{i=1}^{n}\nicefrac{1}{x_is_i}$ and $\one_n^\ts(\Xb\Sb)^{-1}\vb=\sum_{i=1}^{n}\nicefrac{v_i}{x_is_i}$. Now, we have $x_is_i\ge(1-\gamma)\mu$ for $i=1 \ldots n$ as $(\xb,\yb,\sbb)\in\mathcal{N}(\gamma)$. Also, $x_is_i\le\sum_{i=1}^{n}x_is_i=n\mu$. Using the above we rewrite eqn.~\eqref{eq:dint2} as 
\begin{flalign}
\|(\Xb \Sb)^{-1/2}(\vb-\sigma\mu\one_\dimtwo)\|_2^2\le\frac{\|\vb\|_2^2}{(1-\gamma)\mu}+\frac{n\sigma^2\mu^2}{(1-\gamma)\mu}-2\sigma\mu\frac{\sum_{i=1}^{n}v_i}{n\mu}.\label{eq:dint3}
\end{flalign}
Combining eqns.~\eqref{eq:dineq} and~\eqref{eq:dint3} we get
\begin{flalign}
\|\hat{ \Delta \xb}\circ\hat{ \Delta \sbb}\|_2\le&~ n\mu+2n (\|\vb\|_2-\sigma\mu)+\frac{\|\vb\|_2^2}{(1-\gamma)\mu}+\frac{n\sigma^2\mu}{(1-\gamma)}-2\sigma\frac{\sum_{i=1}^{n}v_i}{n}\nonumber\\
=&~\left(1+\frac{\sigma^2}{1-\gamma}\right)n\mu+2n (\|\vb\|_2-\sigma\mu)+\frac{\|\vb\|_2^2}{(1-\gamma)\mu}-2\sigma\frac{\sum_{i=1}^{n}v_i}{n}\,.\label{eq:dint4}
\end{flalign}
Using  $\|\vb\|_2\le\frac{\gamma\sigma\mu}{4}$ and the fact that $\abs{\nicefrac{\vb^\ts\one_n}{n}}\le\|\vb\|_\infty\le\|\vb\|_2$, we get
\begin{flalign}
\|\hat{ \Delta \xb}\circ\hat{ \Delta \sbb}\|_2\le\left(1+\frac{\sigma^2}{1-\gamma}\right)n\mu+\frac{\gamma^2\sigma^2}{16 (1-\gamma)}\mu+\frac{\gamma\sigma^2}{2}\mu.
\end{flalign}
Finally, we conclude the proof using $\|\hat{ \Delta \xb}\circ\hat{ \Delta \sbb}\|_\infty\le\|\hat{ \Delta \xb}\circ\hat{ \Delta \sbb}\|_2$.
\end{proof}

The next result guarantees the convergence of Algorithm~\ref{algo:iipm}
.
\begin{lemma}\label{theoremOuter_f}
Assume that the constants $\gamma$ and $\sigma$ are such that $\max\{\gamma^{-1},(1-\gamma)^{-1},\sigma^{-1},(1-\frac{5}{4}\sigma)^{-1}\}=\Ocal(1)$. At each iteration of Algorithm~\ref{algo:iipm}, if $\|\vb\|_2\le\frac{\gamma\sigma\mu}{4}$, then after 
	$k=\mathcal{O}(\dimtwo \log{\nicefrac{1}{\epsilon}})$ iterations, $(\xb^{k}, \sbb^{k}, \yb^{k})$ satisfies $$\mu_k \leq \epsilon \mu_0.$$
\end{lemma}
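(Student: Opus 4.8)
The plan is to run the classical long-step path-following convergence argument (\cf \citep{wright1997primal}), now built on the step-size and duality-measure estimates of Lemmas~\ref{lemmaminalpha1_f} and~\ref{lemmaminalpha2_f}. I would argue by induction on $k$, the base case being the initial point: the inductive hypothesis is that $(\xb^k,\yb^k,\sbb^k)\in\neigh$ and that the perturbation vector $\vb$ produced at iteration $k$ satisfies $\|\vb\|_2\le\frac{\gamma\sigma\mu_k}{4}$ (the latter is exactly the standing assumption of the lemma, and in Algorithm~\ref{algo:iipm} it is guaranteed by running Algorithm~\ref{algo:PCG} for $t=\Ocal(\log n)$ iterations via Lemma~\ref{lem:conouter}). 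Since $\alphaused\le\alphamax$ by the choice in steps~\ref{stepalpha1}--\ref{stepalpha2}, Lemma~\ref{lemmamaxalpha1_f} then certifies $(\xb^{k+1},\yb^{k+1},\sbb^{k+1})\in\neigh$, so the invariant is preserved and it remains only to control the rate at which $\mu_k$ shrinks.

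First I would use Lemma~\ref{lemmaminalpha2_f} together with the hypothesis $\max\{\gamma^{-1},(1-\gamma)^{-1},\sigma^{-1},(1-\tfrac54\sigma)^{-1}\}=\Ocal(1)$ to absorb all constant factors and the lower-order $\mu_k$ terms, obtaining $\|\hat{\Delx}\circ\hat{\Dels}\|_\infty\le C_1\,n\mu_k$ for an absolute constant $C_1$ (using $n\ge 1$). Substituting this into the lower bound~\eqref{alphalemma2_f} of Lemma~\ref{lemmaminalpha1_f} gives
\[
\alphaused\ \ge\ \min\!\Big\{1,\ \frac{\min\{\gamma\sigma,\ 1-\tfrac54\sigma\}}{4C_1\,n}\Big\}\ \ge\ \frac{C_2}{n}
\]
for an absolute constant $C_2>0$, since $\min\{\gamma\sigma,\,1-\tfrac54\sigma\}=\Omega(1)$ under the hypothesis (when $n$ exceeds a constant the second term is the minimum; for bounded $n$ the bound $\alphaused\ge C_2/n$ is immediate).

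Next I would plug this into the duality-measure recursion~\eqref{mulemma2_f}. Because $\sigma\in(0,\tfrac45)$ forces $1-\tfrac54\sigma\in(0,1)$ and $\alphaused\le 1$, the bracket $[1-\tfrac{\alphaused}{2}(1-\tfrac54\sigma)]$ lies in $(\tfrac12,1)$, and combining with $\alphaused\ge C_2/n$ and $1-\tfrac54\sigma=\Omega(1)$ yields
\[
\mu_{k+1}\ =\ \Big[1-\tfrac{\alphaused}{2}\big(1-\tfrac54\sigma\big)\Big]\mu_k\ \le\ \Big(1-\frac{C_3}{n}\Big)\mu_k
\]
for an absolute constant $C_3>0$. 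Unrolling the recursion gives $\mu_k\le(1-C_3/n)^k\mu_0\le e^{-C_3 k/n}\mu_0$, so $k\ge \tfrac{n}{C_3}\ln(1/\epsilon)$ suffices to force $\mu_k\le\epsilon\mu_0$; that is, $k=\Ocal(n\log(1/\epsilon))$ iterations suffice, as claimed.

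The bulk of this is routine bookkeeping once the earlier lemmas are in hand. The only parts needing care are: (i) verifying that the inductive invariant — membership in $\neigh$ and $\|\vb\|_2\le\frac{\gamma\sigma\mu_k}{4}$ — really does propagate from one iteration to the next (this is where Lemma~\ref{lemmamaxalpha1_f} and, in the algorithm, Lemma~\ref{lem:conouter} with $t=\Ocal(\log n)$ interlock), and (ii) tracking the constants through Lemma~\ref{lemmaminalpha2_f} so that $\|\hat{\Delx}\circ\hat{\Dels}\|_\infty/\mu_k$ is genuinely $\Ocal(n)$ and hence $\alphaused=\Omega(1/n)$ rather than smaller. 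I expect (ii) to be the main, though still elementary, obstacle; everything else is the standard long-step analysis, modified only to carry the harmless $\|\vb\|_2$ terms.
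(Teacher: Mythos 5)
Your proposal is correct and follows essentially the same route as the paper's proof: bound $\|\hat{\Delx}\circ\hat{\Dels}\|_\infty$ by $\Ocal(n\mu)$ via Lemma~\ref{lemmaminalpha2_f}, plug into the step-size lower bound of Lemma~\ref{lemmaminalpha1_f} to get $\alphaused=\Omega(1/n)$, and unroll the recursion~\eqref{mulemma2_f} to obtain $\mu_{k+1}\le(1-\Omega(1/n))\mu_k$ and hence $k=\Ocal(n\log(1/\epsilon))$. Your explicit inductive bookkeeping of the neighborhood invariant via Lemma~\ref{lemmamaxalpha1_f} and Lemma~\ref{lem:conouter} is a harmless (and accurate) addition that the paper handles implicitly through the algorithm's choice of $\alphamax$ and the lemmas' standing hypotheses.
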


\vspace{-2mm}
\begin{proof}
From Lemma~\ref{lemmaminalpha2_f}, 
\begin{flalign}
&~\|\hat{\Delx}\circ\hat{\Dels}\|_\infty\le \left(1+\frac{\sigma^2}{1-\gamma} + \frac{\gamma^2\sigma^2}{16(1-\gamma)}+\frac{\gamma\sigma^2}{2}\right)n\mu\nonumber\\
\Rightarrow&~\mu\|\hat{\Delx}\circ\hat{\Dels}\|_\infty^{-1}\ge n^{-1} \left(1+\frac{\sigma^2}{1-\gamma} + \frac{\gamma^2\sigma^2}{16(1-\gamma)}+\frac{\gamma\sigma^2}{2}\right)^{-1}.\label{eq:step1}
\end{flalign}
Combining eqns.~\eqref{alphalemma2_f} and \eqref{eq:step1} we get
\begin{flalign}
	\alphaused \geq \min \left\{ 1,  \frac{ \min   \{ \gamma \sigma, (1 - \frac{5}{4} \sigma) \} } {4  \,n \left(1+\frac{\sigma^2}{1-\gamma} + \frac{\gamma^2\sigma^2}{16(1-\gamma)}+\frac{\gamma\sigma^2}{2}\right)}  \right\}.\label{eq:step2}
\end{flalign}
Let $\max\{\gamma^{-1},(1-\gamma)^{-1},\sigma^{-1},(1-\frac{5}{4}\sigma)^{-1}\}\le \lambda$ for some constant $\lambda> 1$. Therefore, $\gamma\sigma\ge\frac{1}{\lambda^2}$ and $(1-\frac{5}{4}\sigma)\ge\frac{1}{\lambda}$, which further implies that $\min\{ \gamma \sigma, (1 - \frac{5}{4} \sigma)\}\ge\min\{\frac{1}{\lambda},\frac{1}{\lambda^2}\}=\frac{1}{\lambda^2}$. Also, $\frac{1}{1-\gamma}\le\lambda$. Combining these with eqn.~\eqref{eq:step2}, we get
\begin{flalign}
\alphaused\ge\min\left\{1,\frac{\frac{1}{\lambda^2}}{4n (1+\lambda\sigma^2+\frac{\lambda\gamma^2\sigma^2}{16}+\frac{\gamma\sigma^2}{2})}\right\}=\min\left\{1,\frac{1}{4n(\lambda^2+\lambda^3\sigma^2+\frac{\lambda^3\gamma^2\sigma^2}{16}+\frac{\lambda^2\gamma\sigma^2}{2})}\right\}.\label{eq:step3}
\end{flalign}
Note that in eqn.\eqref{eq:step3}, $4n(\lambda^2+\lambda^3\sigma^2+\frac{\lambda^3\gamma^2\sigma^2}{16}+\frac{\lambda^2\gamma\sigma^2}{2})> 1$. Thus,
\begin{flalign}
&~\alphaused\ge \frac{1}{4n(\lambda^2+\lambda^3\sigma^2+\frac{\lambda^3\gamma^2\sigma^2}{16}+\frac{\lambda^2\gamma\sigma^2}{2})}\nonumber\\
\Rightarrow&~ \frac{\alphaused}{2}\left(1-\frac{5}{4}\sigma\right)\ge \frac{1-\frac{5}{4}\sigma}{8n(\lambda^2+\lambda^3\sigma^2+\frac{\lambda^3\gamma^2\sigma^2}{16}+\frac{\lambda^2\gamma\sigma^2}{2})}\ge \frac{1}{8n\lambda^3(1+\lambda\sigma^2+\frac{\lambda\gamma^2\sigma^2}{16}+\frac{\gamma\sigma^2}{2})}=\frac{\beta}{n}\,,\label{eq:step4}
\end{flalign}
where 
\begin{equation}\label{eqn:pdbeta1}
\beta=\frac{1}{8\lambda^3(1+\lambda\sigma^2+\frac{\lambda\gamma^2\sigma^2}{16}+\frac{\gamma\sigma^2}{2})}.
\end{equation}
We also note that the second inequality in eqn.~\eqref{eq:step4} holds because $1-\frac{5}{4}\sigma\ge\frac{1}{\lambda}$. Let $\mu=\mu_k$ and $\mu(\alphaused)=\mu_{k+1}$ in eqn.~\eqref{mulemma2_f};  applying eqn.~\eqref{eq:step4}, we get, $\mu_{k+1} \leq  \left(1 - \nicefrac{\beta}{\dimtwo}\right) \mu_k, \forall k \geq 0$.
%
%
Applying the above inequality recursively,  we get $\mu_{k} \leq  \left(1 - \nicefrac{\beta}{\dimtwo}\right)^k \mu_0$. Therefore, for any accuracy parameter $\epsilon\in(0,1)$, $\mu_{k}\le\epsilon\mu_0$ holds, if $\left(1 - \nicefrac{\beta}{\dimtwo}\right)^k\le \epsilon$ holds. Thus, it suffices for $k$ to be at least
\begin{flalign}
k\ge \frac{n}{\beta}\log (\nicefrac{1}{\epsilon}).
\end{flalign}
Therefore, since $\beta$, as defined in eqn.~(\ref{eqn:pdbeta1}), is a constant,  we need $\Ocal(n\log\frac{1}{\epsilon})$ iterations to satisfy $\mu_k\le \epsilon\mu_0$.
\end{proof}

\vspace{-4mm}
\textbf{Proof of Theorem~\ref{thm:2f}.} Finally, the proof of our Theorem~\ref{thm:2f} directly follows from combining Lemma~\ref{lem:conouter} and Lemma~\ref{theoremOuter_f}.

\vspace{-3mm}
\section{Infeasible IPM}\label{sxn:ipm_i}

In this section, we briefly discuss the long-step infeasible IPM using approximate solver with our sketching-based preconditioner. Recall that such algorithms can, in general, start with an initial point that is not necessarily feasible, but the initial point does need to satisfy some, more relaxed, constraints. Following the lines of~\citep{Zh94,Mon03}, let $\mathcal{S}$ be the set of feasible and optimal solutions
of the form $(\xb^*,\yb^*,\sbb^*)$ for the primal and dual problems of eqns.~\eqref{eq:primal} and~\eqref{eq:dual} and assume that $\mathcal{S}$ is not empty. Then, long-step infeasible IPMs can start with any initial point $(\xb^{0},\yb^{0},\sbb^{0})$ that satisfies $(\xb^{0},\sbb^{0}) > 0$ \textit{and} $(\xb^{0},\sbb^{0}) \geq (\xb^{*},\sbb^{*})$, for some feasible and optimal solution
$(\xb^{*},\sbb^{*})\in \mathcal{S}$. In words, the starting primal and slack variables must be strictly positive \textit{and} larger (element-wise) when compared to some feasible, optimal primal-dual solution. See Chapter 6 of \citep{wright1997primal}
for a discussion regarding why such choices of starting points 
are
are relevant to computational practice and can be identified more efficiently than feasible points.

The flexibility of infeasible IPMs comes at a cost: long-step \textit{feasible} IPMs converge in $\Ocal(n\log\nicefrac{1}{\epsilon})$  iterations, while long-step \textit{infeasible} IPMs need $\Ocal(n^2 \log\nicefrac{1}{\epsilon})$ iterations to converge~\citep{Zh94,Mon03}. Here $\epsilon$ is the accuracy of the approximate LP solution returned by the IPM.
Let
\begin{subequations}\label{eq:ipm_residuals}
\begin{flalign}
	\Ab\xb^k-\bb&= \rb_p^k, \label{eq:primalres}\\
	\Ab^\ts\yb^k+\sbb^k-\cbb &= \rb_d^k,\label{eq:dualres}
\end{flalign}
\end{subequations}
%
where $\rb_p^k \in \mathbb{R}^n$ and $\rb_d^k \in \mathbb{R}^m$ are the \textit{primal} and \textit{dual} residuals, respectively that characterize how far the iterate $(\xb^k,\yb^k,\sbb^k)$ is from being feasible. 

As long-step infeasible IPM algorithms iterate and update the primal and dual solutions, the residuals are updated as well. In case of convergence, these residuals $\rb_p^k$ and $\rb_d^k$ are reduced at the same rate as the duality measure $\mu_k$ and eventually converge to zero~\citep{wright1997primal}. Let $\rb^k = (\rb_p^k,\rb_d^k) \in \mathbb{R}^{n+m}$ be the primal and dual residual at the $k$-th iteration: it is well-known that the convergence analysis of infeasible long-step IPMs critically depends on $\rb^k$ lying on the line segment between 0 and $\rb^0$ \ie, the initial residual. Unfortunately, using approximate solvers for the normal equations violates this invariant. Similar to the feasible case, a simple solution to fix this problem by adding a perturbation vector $\vb$ to the current primal-dual solution that guarantees that the invariant is satisfied is proposed in~\citep{Mon03}. In this case, the following modified system is slightly different from eqns.~\eqref{eq:delxhat}-\eqref{eq:delshat}, as it now involves the residuals\footnote{For notational simplicity, we drop the index $k$.} $\rb_p^k$ and $\rb_d^k$: 

\begin{subequations}\label{eq:ipm_inf}
\begin{flalign}
    \Ab\Db^2\Ab^\ts\hat{\Delta\yb} &=\Ab\Sb^{-1}\vb+ \pb\,\label{eq:addl_i}\\
	\hat{\Delta\xb} &=~-\xb+\sigma\mu\Sb^{-1}\one_\dimtwo-\Db^2\hat{\Delta\sbb}-\Sb^{-1}\vb\label{eq:delxhat_i}\\
\hat{\Delta\sbb}&=~-\rb_d-\Ab^\ts\hat{\Delta\yb}\,,\label{eq:delshat_i}	
\end{flalign}
\end{subequations}
where the expression of $\pb$ is now slightly different from eqn.~\eqref{eqn:pdef} due to infeasibility and is given by, $\pb=-\rb_p-\sigma\mu\Ab\Sb^{-1}\one_\dimtwo+\Ab\xb-\Ab\Db^2\rb_d.$
%
%
Again, we use the exact same sketching-based construction of $\vb$ that provably satisfies the invariant. Next, we present our main theorem for long-step infeasible IPM:
\begin{theorem}\label{thm:1}
	Let $0 \leq \epsilon \leq 1$ be an accuracy parameter. Consider the long-step infeasible IPM Algorithm~\ref{algo:ipm_i} that solves eqn.~(\ref{eq:precond_alt}) using the CG or Chebyshev iteration of Algorithm~\ref{algo:PCG} (Section~\ref{sxn:PCG}). Assume that the iterative solver runs with accuracy parameter $\zeta = \nicefrac{1}{2}$ and iteration count
	$t = \Ocal (\log n)$. 
	%
	Then, with probability at least 0.9, the long-step infeasible IPM converges after $\Ocal(n^2 \log \nicefrac{1}{\epsilon})$ iterations.
\end{theorem}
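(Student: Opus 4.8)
The plan is to mirror the feasible-case development of Section~\ref{sxn:conv} ingredient by ingredient, replacing each by its infeasible counterpart, and then to invoke the infeasible long-step convergence analysis of~\citep{Zh94,Mon03} essentially as a black box (recall from the discussion preceding Lemma~\ref{lemmamaxalpha0} that the analysis of~\citep{Mon03} is native to the infeasible setting, so here it applies more directly than in the feasible case). First I would fix the infeasible central-path neighborhood: for constants $\gamma,\beta\in(0,1)$,
\[
\Ncal(\gamma,\beta)=\Big\{(\xb,\yb,\sbb):(\xb,\sbb)>\zero,\ \|\rb\|_2\le\beta\,\tfrac{\mu}{\mu_0}\|\rb^0\|_2,\ x_is_i\ge(1-\gamma)\mu\ \ \forall i\Big\},
\]
where $\rb=(\rb_p,\rb_d)$ is the current residual pair of eqn.~\eqref{eq:ipm_residuals}. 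The point of this definition is that, while the iterates stay in $\Ncal(\gamma,\beta)$, the residual norms are $\Ocal(\mu)$ (since $\|\rb^0\|_2/\mu_0$ is a fixed constant of the starting point), and --- exactly as in the exact analysis --- the perturbed Newton step of eqn.~\eqref{eq:ipm_inf} must keep $\rb^k$ on the segment joining $\zero$ and $\rb^0$.

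Next I would re-establish the three ``solver'' lemmas. The construction of $\vb$ in eqn.~\eqref{eq:compv} is unchanged, and because the modified invariant eqn.~\eqref{eq:addl_i} has exactly the same algebraic form as eqn.~\eqref{eq:addl} (only $\pb$ differs), the proof of Lemma~\ref{lem:fullrankR} applies verbatim: $\rank(\Ab\Db\Wb)=m$ with probability $1-\delta$ and $\vb$ satisfies eqn.~\eqref{eq:addl_i} exactly. A short computation using eqns.~\eqref{eq:delxhat_i}--\eqref{eq:delshat_i} then gives $\Ab\hat{\Delta\xb}=-\rb_p$ and $\Ab^\ts\hat{\Delta\yb}+\hat{\Delta\sbb}=-\rb_d$, so the residuals update as $\rb_p\mapsto(1-\alpha)\rb_p$ and $\rb_d\mapsto(1-\alpha)\rb_d$, i.e. they contract along the required line segment. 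Lemma~\ref{lem:v}, namely $\|\vb\|_2\le\sqrt{3n\mu}\,\|\tilde{\fb}^{(t)}\|_2$, also carries over unchanged, since its proof only used the sketching guarantee applied to $(\Xb\Sb)^{1/2}$ and the SVD of $\Ab\Db\Wb$. The genuinely new work is the infeasible analogue of Lemma~\ref{thm:boundf_f}: here $\pb=-\rb_p-\sigma\mu\Ab\Sb^{-1}\one_n+\Ab\xb-\Ab\Db^2\rb_d$, so $\|\Qb^{-\nicefrac{1}{2}}\pb\|_2$ picks up the two extra terms $\|\Qb^{-\nicefrac{1}{2}}\rb_p\|_2$ and $\|\Qb^{-\nicefrac{1}{2}}\Ab\Db^2\rb_d\|_2$; the latter is controlled like $\Delta_2$ was, writing $\Ab\Db^2\rb_d=(\Qb^{-\nicefrac{1}{2}}\Ab\Db)(\Db\rb_d)$ and using $\|\Qb^{-\nicefrac{1}{2}}\Ab\Db\|_2\le\sqrt2$ from eqn.~\eqref{eq:pdcond1}, while the former is handled via $\|\Qb^{-\nicefrac{1}{2}}\|_2\le(1-\nicefrac{\zeta}{2})^{-1/2}\sigma_{\min}^{-1}(\Ab\Db)$, which follows from eqn.~\eqref{eq:rich_4}, together with the neighborhood bound $\|\rb_p\|_2=\Ocal(\mu)$. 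The target conclusion is a bound of the shape $\|\Qb^{-\nicefrac{1}{2}}\pb\|_2\le\mathrm{poly}(n)\sqrt{\mu}$; given that, the infeasible version of Lemma~\ref{lem:conouter} follows by combining eqn.~\eqref{eq:pdcond2} with Lemma~\ref{lem:v}, so that running Algorithm~\ref{algo:PCG} with $\zeta=\nicefrac12$ for $t=\Ocal(\log n)$ iterations yields $\|\vb\|_2\le\frac{\gamma\sigma\mu}{4}$.

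With this small-perturbation guarantee in place I would close the argument using the infeasible long-step machinery of~\citep{Zh94,Mon03}. Lemma~\ref{lemmamaxalpha0} was already stated for the general perturbed system, so it applies unchanged; together with the infeasible analogues of Lemmas~\ref{lemmamaxalpha1_f}--\ref{lemmaminalpha2_f} (which are essentially the lemmas of~\citep{Mon03}, whose hypotheses are met because $\|\vb\|_2\le\frac{\gamma\sigma\mu}{4}$) this shows that (i) for every $\alpha$ up to $\Omega\!\big(\min\{\gamma\sigma,1-\tfrac54\sigma\}\,\mu/\|\hat{\Delta\xb}\circ\hat{\Delta\sbb}\|_\infty\big)$ the new point stays in $\Ncal(\gamma,\beta)$ and both $\mu$ and $\rb$ decrease by the same factor, and (ii) $\|\hat{\Delta\xb}\circ\hat{\Delta\sbb}\|_\infty=\Ocal(n^2\mu)$ in the infeasible neighborhood --- the extra factor of $n$ relative to Lemma~\ref{lemmaminalpha2_f} being exactly what the residual cross-terms contribute. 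Hence the step size is $\Omega(1/n^2)$ and $\mu_{k+1}\le(1-\Omega(1/n^2))\mu_k$, giving $\Ocal(n^2\log\nicefrac1\epsilon)$ outer iterations; fixing the per-iteration failure probability at $\delta=\Theta(n^{-2})$ and union-bounding over all iterations gives overall success probability at least $0.9$, which is Theorem~\ref{thm:1}.

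The main obstacle I expect is precisely the infeasible version of Lemma~\ref{thm:boundf_f}, along with the companion estimate $\|\hat{\Delta\xb}\circ\hat{\Delta\sbb}\|_\infty=\Ocal(n^2\mu)$: unlike the feasible case, $\Qb^{-\nicefrac{1}{2}}$ no longer annihilates the component of $\pb$ outside $\mathrm{range}(\Ab\Db)$, so the term $\Qb^{-\nicefrac{1}{2}}\rb_p$ introduces a factor $\sigma_{\min}^{-1}(\Ab\Db)$ that is tamed only by leaning on the neighborhood definition and the Monteiro--O'Neal starting-point conditions $(\xb^0,\sbb^0)\ge(\xb^*,\sbb^*)$. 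Once the relevant quantities are shown to be only $\mathrm{poly}(n)\sqrt\mu$ --- so that $t=\Ocal(\log n)$ inner iterations still suffice --- the rest is a fairly mechanical adaptation of Section~\ref{sxn:conv} combined with the already-cited results of~\citep{Zh94,Mon03}.
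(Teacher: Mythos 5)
Your overall skeleton matches the paper's: keep the same $\vb$ construction (Lemma~\ref{lem:fullrankR} and Lemma~\ref{lem:v} carry over verbatim), prove an infeasible analogue of Lemma~\ref{thm:boundf_f} so that $t=\Ocal(\log n)$ inner iterations force $\|\vb\|_2\le\frac{\gamma\sigma\mu}{4}$, then invoke the Monteiro--O'Neal machinery (Lemmas 3.5--3.7 and Theorem 2.6 of \citep{Mon03}) to get step size $\Omega(1/n^2)$ and hence $\Ocal(n^2\log\nicefrac{1}{\epsilon})$ outer iterations, with a union bound at $\delta=\Ocal(n^{-2})$. This is exactly how the appendix proceeds.

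However, there is a genuine gap precisely at the step you flag as the main obstacle, and the mechanism you propose for it does not work. You bound $\|\Qb^{-\nicefrac{1}{2}}\rb_p\|_2$ by $\|\Qb^{-\nicefrac{1}{2}}\|_2\,\|\rb_p\|_2\lesssim\sigma_{\min}^{-1}(\Ab\Db)\cdot\Ocal(\mu)$. But $\sigma_{\min}(\Ab\Db)$ is not controlled by the dimension: it depends on $\sigma_{\min}(\Ab)$ and on $\min_i\sqrt{x_i/s_i}$, and the latter degenerates (roughly like $\sqrt{\mu}$ at best, and only after separately bounding the iterates) as the IPM approaches optimality. Likewise $\|\rb_p\|_2\le\frac{\mu}{\mu_0}\|\rb^0\|_2$ carries the data-dependent factor $\|\rb^0\|_2/\mu_0$. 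So this route yields at best a bound of the form $C(\Ab,\xb^0,\sbb^0)\sqrt{\mu}$ rather than $\mathrm{poly}(n)\sqrt{\mu}$ with constants depending only on $\gamma,\sigma$, and consequently you cannot conclude $t=\Ocal(\log n)$ as the theorem requires. A similar issue is hidden in your treatment of $\Ab\Db^2\rb_d$: after factoring out $\Qb^{-\nicefrac{1}{2}}\Ab\Db$ you still must bound $\|\Db\rb_d\|_2$, which is not ``like $\Delta_2$.'' The paper's resolution is different: since the exact $\vb$-correction keeps $\rb=\eta\rb^0$ on the segment to $\rb^0$ (Lemma 3.3 of \citep{Mon03}), it writes $\rb_p=\eta\,\Ab(\xb^0-\xb^*)$ and $\rb_d=\eta(\Ab^\ts\yb^0+\sbb^0-\cbb)$, so both residual terms factor through $\Qb^{-\nicefrac{1}{2}}\Ab\Db$ (norm at most $\sqrt{2}$ by eqn.~\eqref{eq:pdcond1}), and the remaining factors are controlled by the auxiliary bounds $\eta\|\Sb(\xb^0-\xb^*)\|_2\le3n\mu$ and $\eta\|\Xb(\sbb^0+\Ab^\ts\yb^0-\cbb)\|_2\le6n\mu$ of Lemma~\ref{lem:ineq}, which are derived from the starting-point conditions $(\xb^0,\sbb^0)\ge(\xb^*,\sbb^*)$, $\sbb^0\ge|\Ab^\ts\yb^0-\cbb|$ and $\eta\le\xb^\ts\sbb/\xb^{0\ts}\sbb^0$, together with $x_is_i\ge(1-\gamma)\mu$. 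This gives $\|\Qb^{-\nicefrac{1}{2}}\pb\|_2=\Ocal(n)\sqrt{\mu}$ with dimension-only constants (Lemma~\ref{thm:boundf}), which is what actually supports $t=\Ocal(\log n)$ and, through Lemma 3.7 of \citep{Mon03}, the $\Ocal(n^2\mu)$ bound on $\|\hat{\Delta\xb}\circ\hat{\Delta\sbb}\|_\infty$ and the final $\Ocal(n^2\log\nicefrac{1}{\epsilon})$ iteration count.
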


Before presenting the infeasible IPM algorithm, we will need the following definition for the neighborhood.  The involvement of the residuals $\rb^k$ makes it different from the one in the feasible case:
$$\mathcal{N}(\gamma)=\left\{(\xb,\yb,\sbb): (\xb,\sbb)>\zero, x_i s_i\ge(1-\gamma)\mu\ \text{and}\  \frac{\|\rb\|_2}{\|\rb^0\|_2} \leq \frac{\mu}{\mu_0}\right\}.$$

Notice that Lemma~\ref{lem:v} also holds for our infeasible IPM. The only difference is the expression of the vector $\pb$ which now contains the residuals. Combining a result from~\citep{Mon03} with our preconditioner $\Qb^{\nicefrac{-1}{2}}$, we can prove that
$\|\Qb^{\nicefrac{-1}{2}}\pb\|_2= \Ocal(n)\sqrt{\mu}$. Again, it is to be noted that the above bound is worse than  Lemma~\ref{thm:boundf_f} by a factor of $\sqrt{n}$. This bound allows us to prove that if we run Algorithm~\ref{algo:PCG} for $\Ocal(\log n)$ iterations, then
$\|\vb\|_2\le\frac{\gamma\sigma}{4}\mu$. However, the extra $\sqrt{n}$ factor essentially contributes to the $\Ocal(n^2)$ iteration complexity of Algorithm~\ref{algo:ipm_i}. See Appendix~\ref{app:convergence} for details.

\begin{algorithm}[H]
	\caption{Infeasible IPM}\label{algo:ipm_i}%
			\hspace*{7mm}\textbf{Input:}
		$\Ab\in\RR{\dimone}{\dimtwo}$,  $\bb\in\R{\dimone}$, $\cbb\in\R{\dimtwo}$, $\gamma \in (0,1)$, tolerance $\epsilon> 0$, centering parameter 
		
		\hspace*{6mm}
		$\sig\in (0,\nicefrac{4}{5})$;
		
		\vspace{1mm}
		\hspace*{\algorithmicindent} \textbf{Initialize:} $k\gets 0$; initial point $(\xb^{0},\yb^{0},\sbb^{0})$;
	\begin{algorithmic}[1]

		\vspace{1mm}
		\While{$\mu_k > \epsilon$}
		
		\State Compute sketching matrix $\Wb \in \mathbb{R}^{n \times w}$ (Section~\ref{sxn:background}) with $\zeta=1/2$ and $\delta = O(n^{-2})$;
		\State Compute $\rb_p^k=\Ab\xb^k-\bb$; $\rb_d^k=\Ab^\ts\yb^k+\sbb^k-\cbb$; and $\pb^k$ from eqn.~(\ref{eq:ipm_residuals});
		\State Solve the linear system of eqn.~\eqref{eq:precond_alt} for $\zb$ using Algorithm~\ref{algo:PCG} with $\Wb$ from step (2) 
		
		\hspace*{-3mm}
		and $t=\Ocal(\log n)$. Compute $\hat{\Delta \yb}=\Qb^{\nicefrac{-1}{2}}{\zb}$;
		\State Compute $\vb$ using eqn.~\eqref{eq:compv} with $\Wb$ from step (2); $\hat{\Delta\sbb}$ using eqn.~\eqref{eq:delshat_i}; $\hat{\Delta\xb}$ using 
		
		\hspace*{-3mm}
		eqn.~\eqref{eq:delxhat_i};
		\State Compute $\alphamax = \argmax\{ \alpha \in [0,1] : (\xb^k,\yb^k,\sbb^k) + \alpha (\hat{\Delta \xb}^k,\hat{\Delta \yb}^k,\hat{\Delta \sbb}^k)  \in \neigh\}$.
		\State Compute $\alphaused = \argmin\{\alpha \in [0, \alphamax]: (\xb^k + \alpha \hat{\Delta \xb}^k)^\ts (\sbb^k + \alpha \hat{\Delta\sbb}^k)\}$.
		\State Compute $(\xb^{k+1}, \yb^{k+1}, \sbb^{k+1}) = (\xb^k,\yb^k,\sbb^k) + \alphabar (\hat{\Delta \xb}^k,\hat{\Delta \yb}^k,\hat{\Delta \sbb}^k)$; set $k \gets k + 1$;
		
		\EndWhile
	\end{algorithmic}
\end{algorithm}

Notice that as compared to the feasible IPM \ie~Algorithm~\ref{algo:iipm},  Algorithm~\ref{algo:ipm_i} needs an additional step to compute the primal and dual residuals, namely, $\rb_p$ and $\rb_d$ respectively (see Step~(3)).
However, per iteration cost of Algorithm~\ref{algo:ipm_i} is asymptotically the same as that of Algorithm~\ref{algo:iipm} (see Section~\ref{sxn:IIPM}) since computing $\rb_p$ and $\rb_d$ only involve a matrix-vector product and therefore, are dominated by the SVD of $\Ab\Db\Wb$ and the computation of the perturbation vector $\vb$. See Appendix~\ref{app:convergence} for the convergence analysis of Algorithm~\ref{algo:ipm_i}.

\section{Extensions}\label{sxn:extensions}

We briefly discuss extensions of our work. Note that we focus only on analyzing preconditioned CG and preconditioned Chebyshev iteration due to their practical advantages over other solvers. In addition,  Chebyshev iteration also offers several advantages in a parallel environment as it does not need to evaluate communication-intensive inner products for computing the recurrence parameters. However, from a theoretical perspective,
%
in \citep{CLAD20}, we analyzed two more solvers, namely, preconditioned Richardson Iteration and the preconditioned Steepest Descent  that could replace the proposed CG or Chebyshev iteration without any loss in accuracy or any increase in the number of iterations for the long-step feasible IPM Algorithm~\ref{algo:iipm} of Section~\ref{sxn:IIPM}. 
%

Second, recall that our approach focused on full rank input matrices $\Ab \in \mathbb{R}^{m \times n}$ with $m \ll n$. Our overall approach still works if $\Ab$ is any $m \times n$ matrix that is low-rank, e.g., $\rank(\Ab)=k\ll \min\{\dimone,\dimtwo\}$. In that case, using the thin SVD of $\Ab$, we can rewrite the linear constraints as follows
%
$\Ub_\Ab\Sigmab_\Ab\Vb_\Ab^\ts\xb=\bb$,
%
where $\Ub_\Ab\in\RR{\dimone}{k}$ and $\Vb_\Ab\in\RR{\dimtwo}{k}$ are the matrices of left and right singular vectors of $\Ab$ respectively; $\Sigmab_\Ab\in\RR{k}{k}$ is the diagonal matrix with the $k$ non-zero singular values of $\Ab$ as its diagonal elements. The LP of eqn.~\eqref{eq:primal} can be restated as
\begin{flalign}
\min\,\cbb^\ts\xb\,,\text{ subject to }\Vb_\Ab^\ts\xb=\widetilde{\bb}\,,\xb\ge \zero\,,\label{eq:primal2}
\end{flalign}
where $\widetilde{\bb}=\Sigmab_\Ab^{-1}\Ub_\Ab^\ts\bb$. Note that, $\rank(\Vb_\Ab)=k\ll \dimtwo$ and therefore eqn.~\eqref{eq:primal2} can be solved using our framework. The matrices $\Ub_\Ab$, $\Vb_\Ab$, and $\Sigmab_\Ab$ can be approximately recovered using the fast SVD algorithms of \citep{Halko2011,BouDriMag14,clarkson2017low}. However, the accuracy of the final solution will depend on the accuracy of the approximate SVD and we defer this analysis to future work.

Third, even though we chose to use the Count-Min sketch and its analysis from~\citep{Cohen2016} (Section~\ref{sxn:background}), there are many other alternative sketching matrix constructions that would lead to similar results. A particularly simple one is the Gaussian sketching matrix $\Wb_G \in \mathbb{R}^{n \times w}$, where every entry is a $\mathcal{N}(0,1)$ random variable. Setting $w=\Ocal\left(\nicefrac{m+\log (1/ \delta)}{\zeta^{2}}\right)$
would result in the same accuracy guarantees as the sketching matrix of Section~\ref{sxn:background}. However, the (theoretical) running time needed to compute $\Ab \Db \Wb$ increases to $\Ocal (m \cdot\nnz{\Ab} )$. In practice, at least for relatively small matrices, using Gaussian sketching matrices is a reasonable alternative; see the discussion in~\citep{Meng2014SISC} which argued that the Gaussian matrix sketching-based solvers are considerably better than direct solvers. We also opted to use Gaussian matrices in our empirical evaluation, since we primarily interested in measuring the accuracy of the final solution as a function of the number of iterations of the solver and the IPM algorithm. Other known constructions of sketching matrices that are also applicable in our setting include (any) sub-gaussian sketching matrix; the Subsampled Randomized Hadamard transform (SRHT); and any of the Sparse Subspace Embeddings of~\citep{clarkson2017low,nelson2013osnap,meng2013low,cohen2016nearly}.

\begin{figure}[t]
	\centering
	\subfigure{
		\label{fig:ARCENE_v1} 
		\includegraphics[width=2.8in]{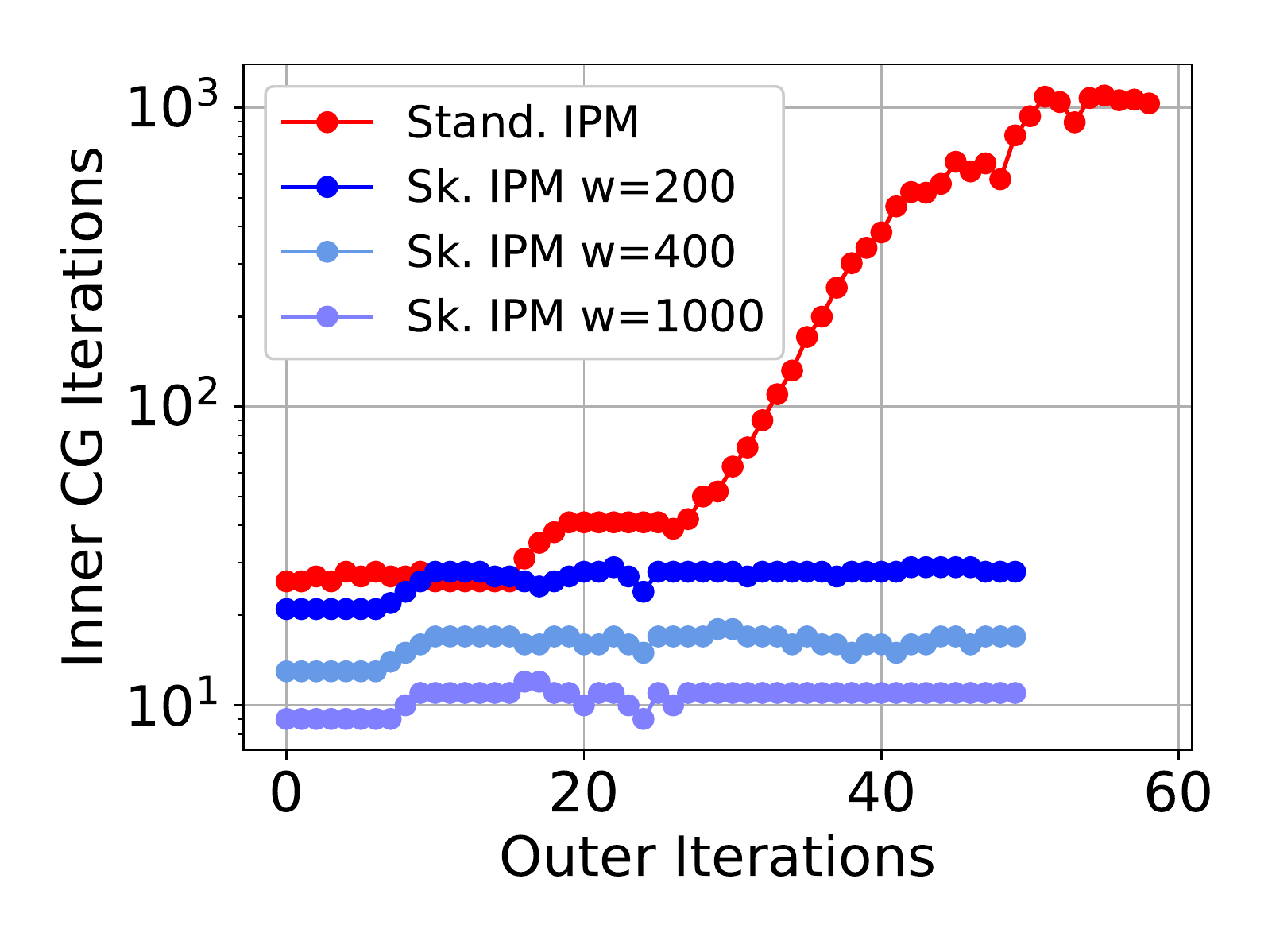}}
	\subfigure{
		\label{fig:ARCENE_v2} 
		\includegraphics[width=2.8in]{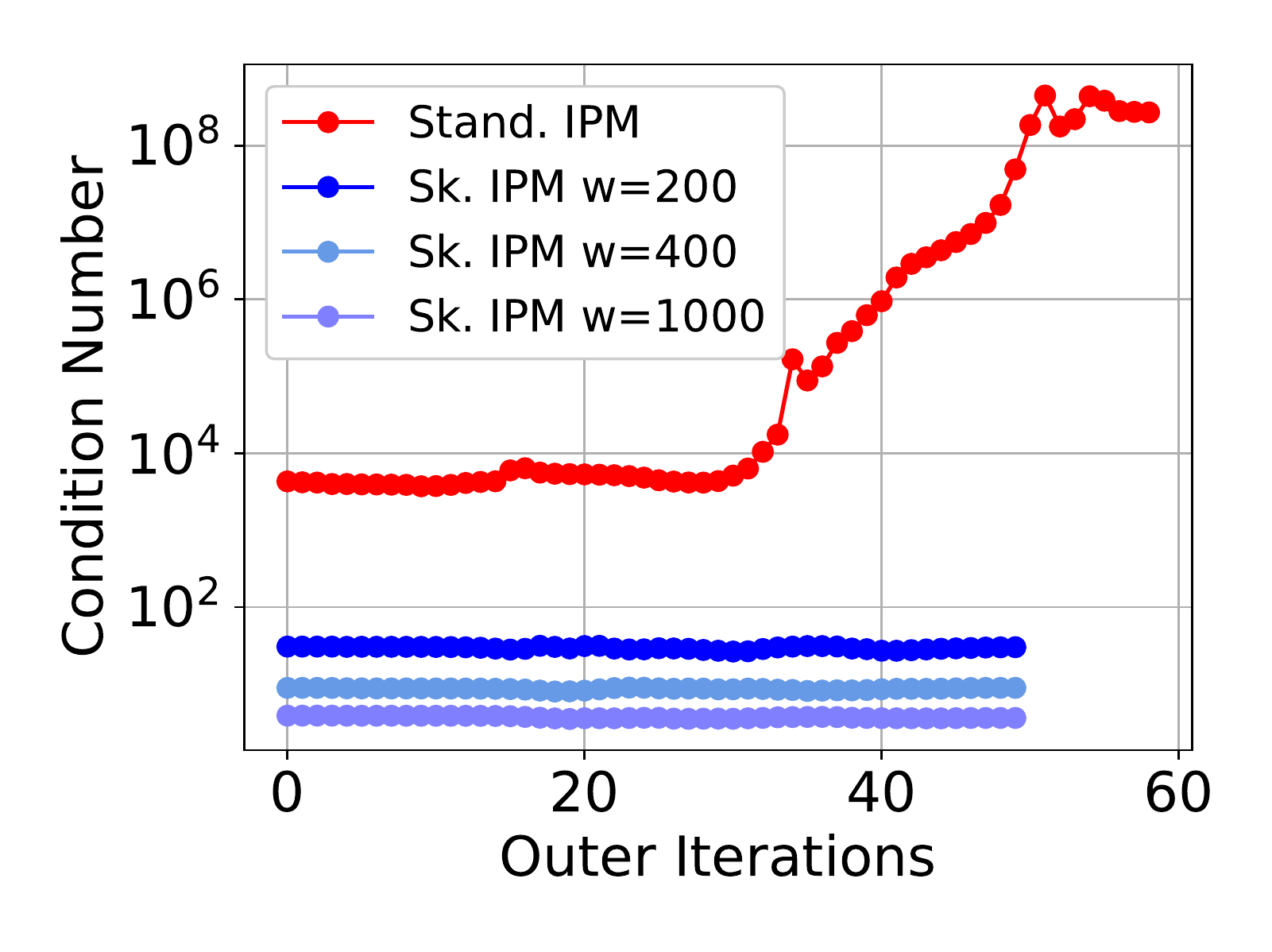}}
	\label{fig:iter_ARCENE_sub}
\vspace{-5pt}
\\
\addtocounter{subfigure}{-2}
\subfigure[ ]{
		\label{fig:ARCENE_v3} 
		\includegraphics[width=2.8in]{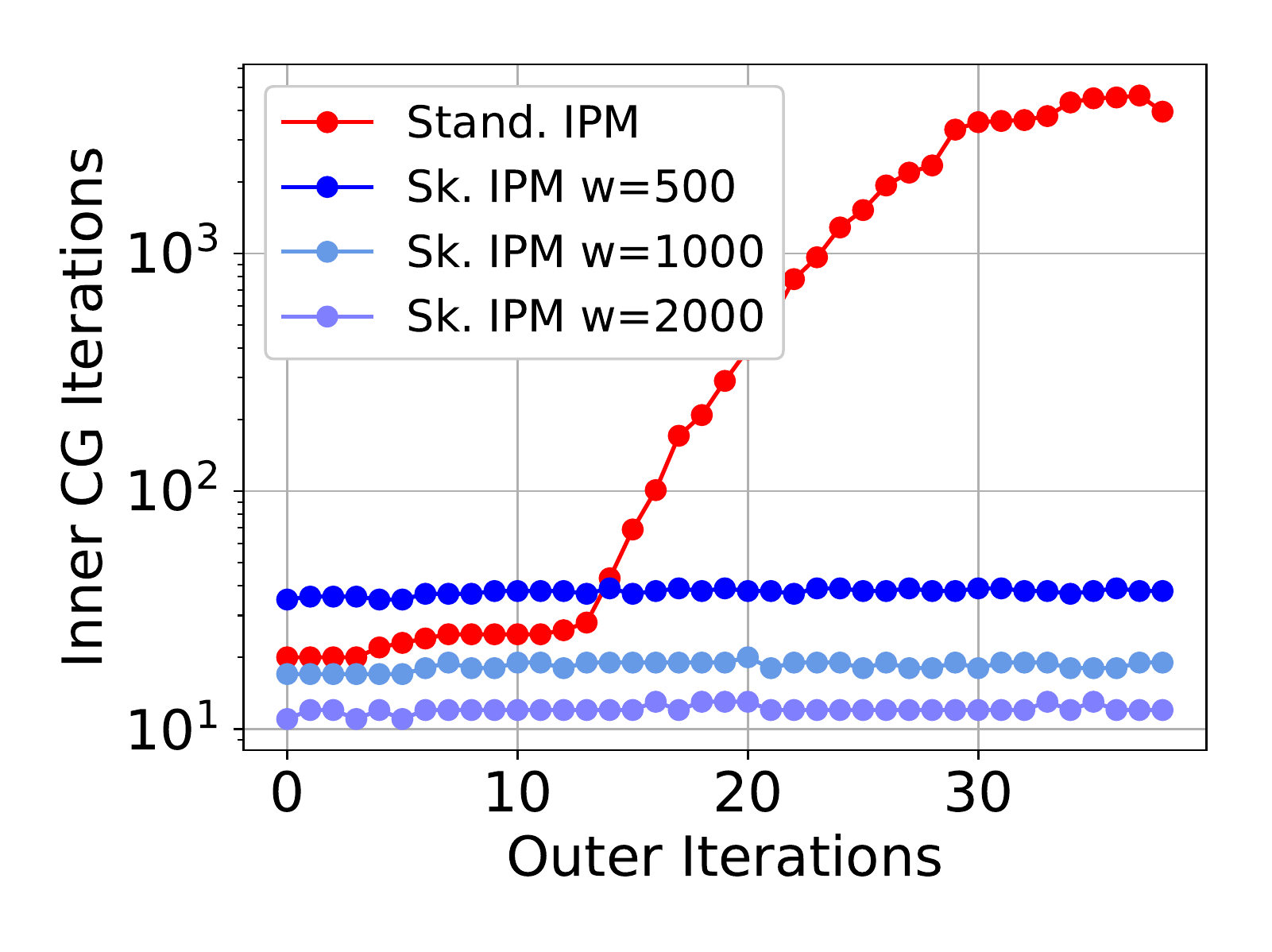}}
	\subfigure[]{
		\label{fig:ARCENE_v4} 
		\includegraphics[width=2.8in]{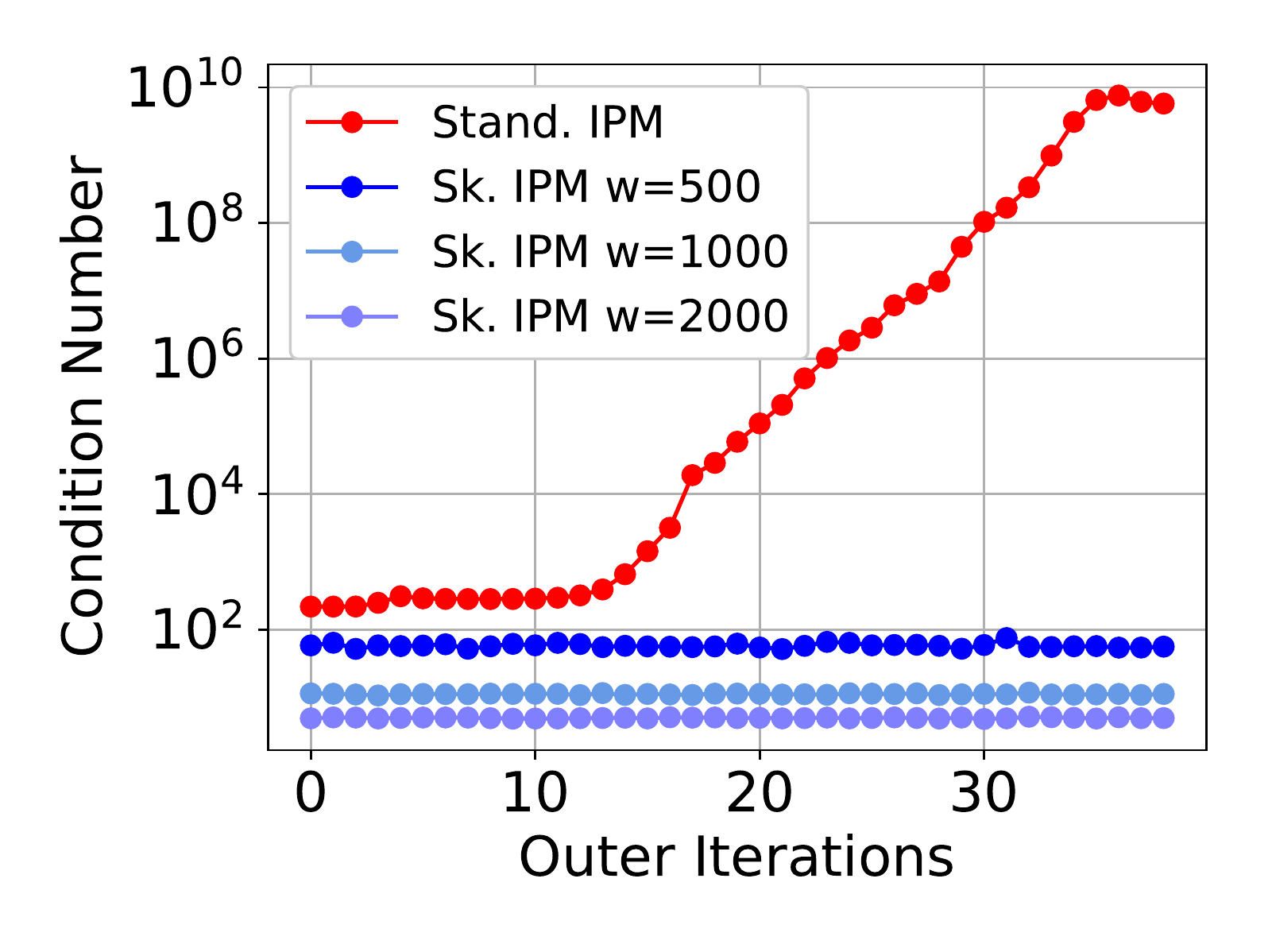}}
	\caption{\emph{ARCENE (top row) and DEXTER (bottom row) data sets}: Our algorithm (Sk. IPM) requires an order of magnitude fewer inner iterations than the Standard IPM with CG at each outer iteration, as demonstrated in (a). This is possibly due to the improved conditioning of \precNormal~compared to $\Ab \Db^2 \Ab^T$, as shown in (b).
		For all experiments \tolCG~$ = 10^{-5}$ and \tolOuterRes~$ = 10^{-9}$.}
	\label{fig:iter_ARCENE}

\end{figure}

\begin{figure}[t]
	\centering
	\subfigure[Max. Inner CG Iterations.]{
		\label{fig:ARCENE_iter} 
		\includegraphics[width=2.57in]{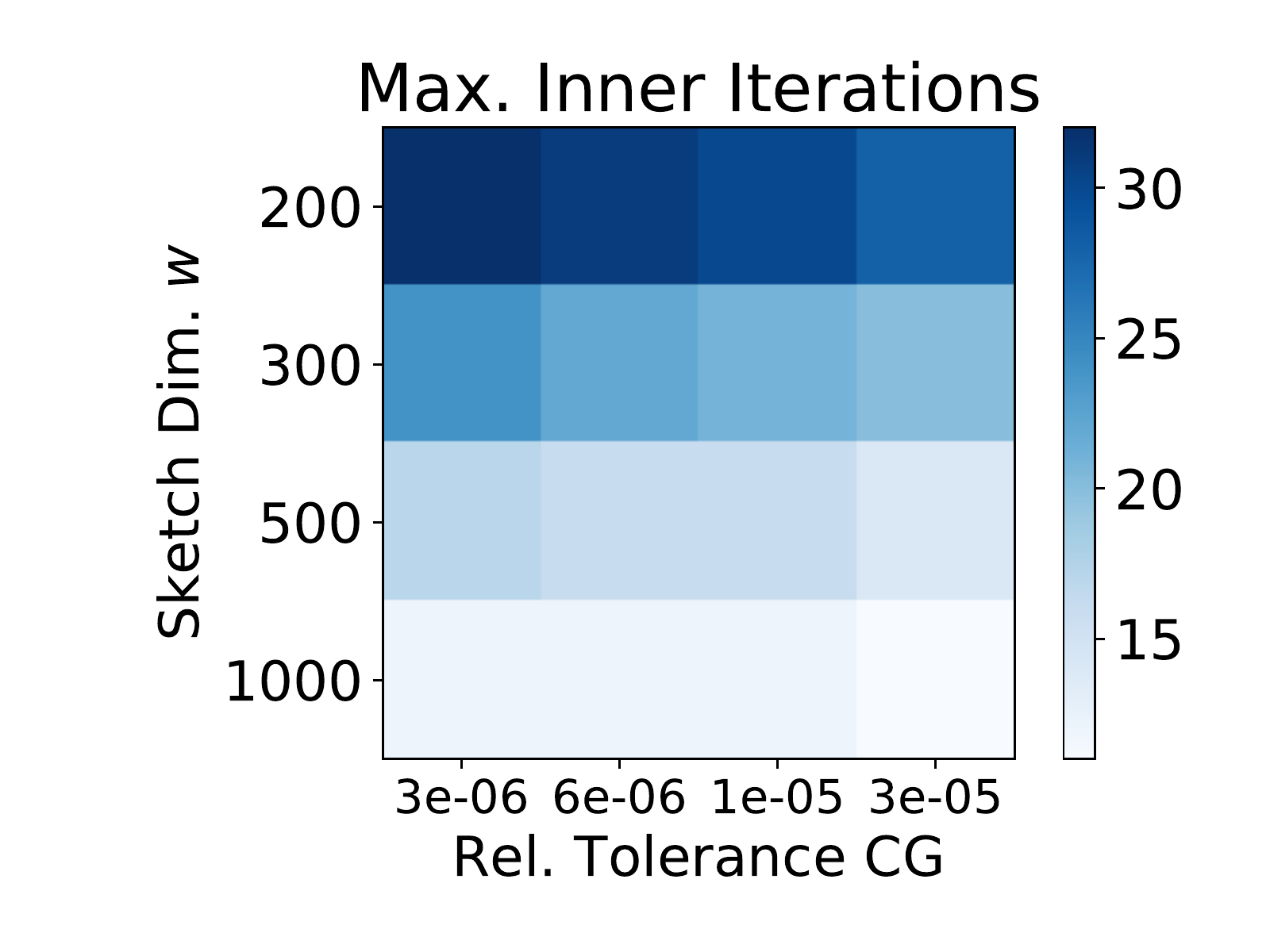}}
	\subfigure[Max. Condition Number.]{
		\label{fig:ARCENE_iter2} 
		\includegraphics[width=2.57in]{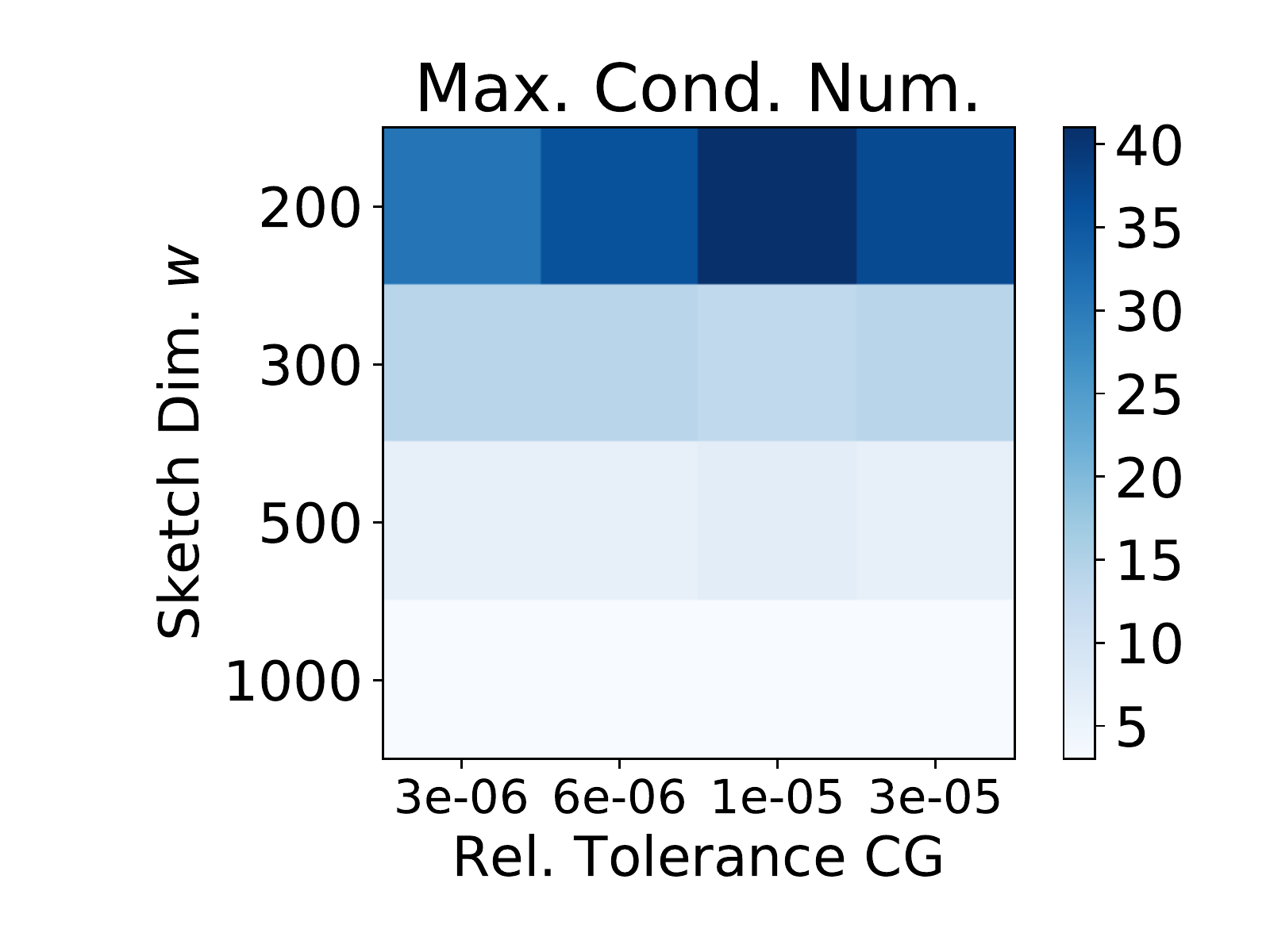}}
	\caption{\emph{ARCENE data set}: for various ($w$, \tolCG) settings,
		(a)~the maximum number of inner iterations used by our algorithm and (b)~the maximum condition number of \precNormal, across outer iterations. The standard IPM, across all settings, needed on the order of 1,000 iterations and  $\kappa(\Ab \Db^2\Ab^T)$ was on the order of $10^{8}$.
		The relative error was fixed to $0.04\%$. }
	\label{fig:ARCENE} 
\end{figure}

\section{Experiments}\label{sec:exp}

Here we demonstrate the empirical performance of our algorithm on a variety of real-world data sets from the UCI ML Repository~\citep{Dua2019}. More specifically, we consider two problems that were part of the NeurIPS 2003 feature selection challenge: ARCENE and DEXTER~\citep{guyon2005result}. For the ARCENE data set, the task is to distinguish between cancer and normal patterns from mass-spectrometric data and DEXTER data set is for a text classification problem. Further, we consider DrivFace~\citep{diaz2016reduced}, a problem concerned with identifying the gaze direction in photos of human subjects taken while driving,
and a gene expression cancer RNA-Sequencing data set, accessible on the UCI ML Repository, which is part of the RNA-Seq (HiSeq) PANCAN data set \citep{Weinstein2013}. It is a random extraction of gene expressions from patients who have different types of tumors: BRCA, KIRC, COAD, LUAD and PRAD. We considered the binary classification task of identifying BRCA versus other types.
We also perform experiments on synthetic data sets (see Appendix~\ref{app:rand} for details).
The experiments were implemented in Python and we observed that the results for both synthetic data (generated as described in Appendix~\ref{app:rand}) and real-world data were qualitatively similar. 
Thus, we highlight results on several representative real-world datasets. 
The experiments were implemented in Python and run on a server with Intel E5-2623V3@3.0GHz 8 cores and 64GB RAM. 

As an application, we consider $\ell_1$-regularized SVMs. All of the data sets are concerned with binary classification with $m \ll n$, where $n$ is the number of features.
The SVM problem is a core model in machine learning that is crucial for applications in both regression and classification.
While there are many variations of SVMs, we use the classical version of SVMs with an $\ell_1$ regularizer to illustrate the application of our algorithm.
In Appendix~\ref{app:svm}, we describe the $\ell_1$-SVM problem and how it can be formulated as an LP. Here, $m$ is the number of training points, $n$ is the feature dimension, and the size of the constraint matrix in the LP becomes $m \times (2n +1)$.
%


\vspace{0.02in}\noindent\textbf{Comparisons and Metrics}. \textcolor{black}{Our empirical evaluations serve as a proof-of-concept verifying our theoretical findings, by evaluating the effectiveness of our randomized preconditioner combined with an approximate solver. State-of-the-art implementations of LP solvers are highly optimized; therefore, it is unlikely to get a fair time-comparison between our algorithm and industrial-grade solvers, since the true algorithmic efficiency of commercial solvers is confounded by the built-in optimization strategies. We do not report running times to avoid such direct comparisons with heavily optimized benchmark LP solvers.}

In most of our evaluations, we use the infeasible case \ie,~Algorithm~\ref{algo:ipm_i} as finding a strictly feasible staring point is a non-trivial task. In addition, we focus on CG iterative solver to compute the approximate search directions. We compare Algorithm~\ref{algo:ipm_i} with a standard IPM (see \citep[Ch.~10]{NumericalRecipes}) using CG, and a standard IPM using a direct solver.
We also use CVXPY as a benchmark to compare the accuracy of the solutions; we define the \emph{relative error} $\nicefrac{\|\hat{\xb} - \xb^\star\|_2}{\|\xb^\star\|_2}$, where $\hat{\xb}$ is our solution and $\xb^\star$ is the solution generated by CVXPY. \textcolor{black}{In addition, in some of the experiments, we also consider the primal-dual error $\cbb^\ts\xb-\bb^\ts\yb$ as a key metric to evaluate the quality of the solution returned by our algorithm.} We also consider the number of \emph{outer iterations}, namely the number of iterations of the IIPM algorithm, as well as the  number of \emph{inner iterations}, namely the number of iterations of the CG solver.
The \emph{inner iterations} is highly dependent on the condition number of  the matrices in the normal equations (eqn.~\eqref{eq:normal} or \eqref{eq:precond}), which we also report.
We denote the relative stopping tolerance for CG by \tolCG~ and we denote the
outer iteration residual error by \tolOuterRes. If not specified: \tolOuterRes~$= 10^{-9}$, \tolCG~$ = 10^{-5}$, and $\sigma = 0.5$.
We evaluated a Gaussian sketching matrix, and the initial triplet $(\xb, \yb, \sbb)$ for all IPM algorithms was set to be all ones.

\begin{table*}[ht]
	\caption{Comparison of (our) sketched IPM with CG, standard IPM with CG, and Standard IPM with a direct solver, for the $\ell_1$-SVM problem on UCI Machine Learning Repository~\citep{Dua2019} data sets. Across all, $\tau = 10^{-9}$ and a relative error of $10^{-3}$ or less was achieved. We define $\kappa_{\text{Sk}} = $ \kprecNormal~and $\kappa_{\text{Stan}} = \kappa(\Ab \Db^2\Ab^T)$.} \label{tableSVM}
	\begin{center}
		\resizebox{\textwidth}{!}{\begin{tabular}{|c|c|c|c|c|c|c|c|c|c|}
			\hline
			\multicolumn{1}{|c|}{\textbf{Problem}} &
			\multicolumn{1}{|c|}{\textbf{Size}} &
			\multicolumn{4}{|c|}{\textbf{Sketch IPM w/ Precond. CG}} &
			\multicolumn{3}{|c|}{\textbf{Stand. IPM w/ Unprec. CG}} &
			\multicolumn{1}{|c|}{\textbf{IPM w/ Dir.}} \\
			& \multicolumn{1}{c}{$(m \times N )$} & \multicolumn{1}{|c}{$w$} & \multicolumn{1}{c}{In. It.}  & \multicolumn{1}{c}{Out. It.} & \multicolumn{1}{c|}{$\kappa_{\text{Sk}}$}
			& \multicolumn{1}{c}{In. It.}  & \multicolumn{1}{c}{Out. It.} & \multicolumn{1}{c|}{$\kappa_{\text{Stan}}$}  & \multicolumn{1}{c|}{Out. It.}\\
			\hline
			ARCENE &$(100 \times 10K)$ & $200$ & $\mathbf{30}$ & $50$ & $38.09$
			& $\mathbf{1.1 K}$ & $59$ & $4.4 \times 10^{8}$ & $50$\\
			DEXTER &$(300 \times 20K)$ & $500$ & $\mathbf{39}$ & $39$ & $75.42$
			& $\mathbf{4.6K}$ & $39$ & $7.6 \times 10^{9}$ & $39$\\
			
			DrivFace &$(606 \times 6400)$ & $1000$ & $\mathbf{50}$ & $42$ & $68.87$
			& $\mathbf{139K}$ & $43$ & $17 \times 10^{12}$ & $42$\\
			Gene RNA &$(801 \times 20531)$ & $2000$ & $\mathbf{27}$ & $44$ & $20.03$
			& $\mathbf{101K}$ & $208$ & $4.7 \times 10^{12}$ & $44$\\
			\hline
		\end{tabular}}
	\end{center}
	\label{tab:multicol}
\end{table*}



\vspace{0.02in}\noindent\textbf{Experimental Results}.
Figure~\ref{fig:iter_ARCENE}(a) shows that our Algorithm~\ref{algo:iipm} uses an order of magnitude fewer \textit{inner} iterations than the un-preconditioned standard solver. This is due to the improved conditioning of the respective matrices in the normal equations, as demonstrated in Figure~\ref{fig:iter_ARCENE}(b).  Across various real-world and synthetic data sets, the results were qualitatively similar to those shown in Figure~\ref{fig:iter_ARCENE}. Results for several real-world data sets are summarized in Table~\ref{tableSVM}.

In general, our preconditioned CG solver used in Algorithm~\ref{algo:iipm} does not increase the total number of \textit{outer} iterations as compared to the standard IPM with CG, and the standard IPM with a direct linear solver (denoted IPM w/Dir), as seen in Table~\ref{tableSVM}. Actually, for unpreconditioned CG there is clearly more outer iterations, especially for Gene RNA, which has x5 outer iterations.
Figure~\ref{fig:iter_ARCENE} also demonstrates the relative insensitivity to the choice of $w$ (the sketching dimension, i.e., the number of columns of the sketching matrix $\Wb$ of Section~\ref{sxn:background}). For smaller values of $w$, our algorithm requires more inner iterations. However, across various choices of $w$, the number of inner iterations is always an order of magnitude smaller than the number required by the standard solver.

Figure~\ref{fig:ARCENE} shows the performance of our algorithm for a range of ($w$, \tolCG) pairs.
Figure~\ref{fig:ARCENE}(a) demonstrates that the number of the inner iterations is robust to the choice of \tolCG~ and $w$. The number of inner iterations varies between $15$ and $35$ for the ARCENE data set, while the standard IPM took on the order of $1,000$ iterations across all parameter settings. Across all settings, the relative error was fixed at $0.04\%$. In general, our sketched IPM is able to produce an extremely high accuracy solution across parameter settings. Thus we do not report additional numerical results for the relative error, which was consistently $10^{-3}$ or less.
Figure~\ref{fig:ARCENE}(b) demonstrates a trade-off of our approach: as both \tolCG~ and $w$ are increased, the condition number \kprecNormal~decreases, corresponding to better conditioned systems. As a result,  fewer inner iterations are required. 
In this context, Figure~\ref{fig:ARCENE_more} shows that how the number of inner CG iterations (Figure~\ref{fig:ARCENE_v_more1}) or the condition number of $\Qb^{-1/2}\Ab\Db^2\Ab^\ts\Qb^{-1/2}$ (Figure~\ref{fig:ARCENE_v_more2}) decreases with the increase in sketching dimension $w$ for various \tolCG\,.

\begin{figure}[t]
	\centering
	\subfigure[ ]{
		\label{fig:ARCENE_v_more1} 
		\includegraphics[width=2.8in]{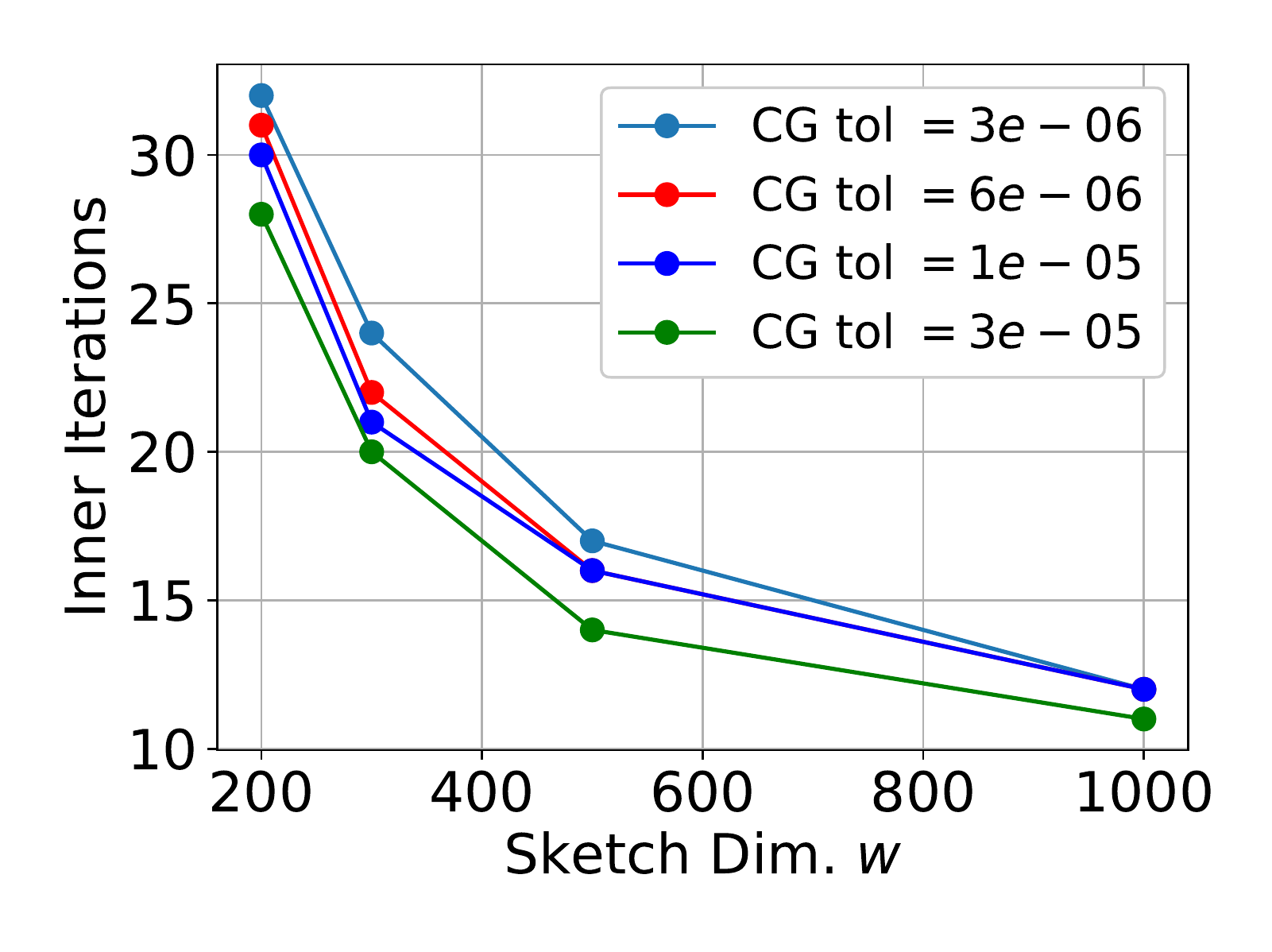}}
	\subfigure[]{
		\label{fig:ARCENE_v_more2} 
		\includegraphics[width=2.8in]{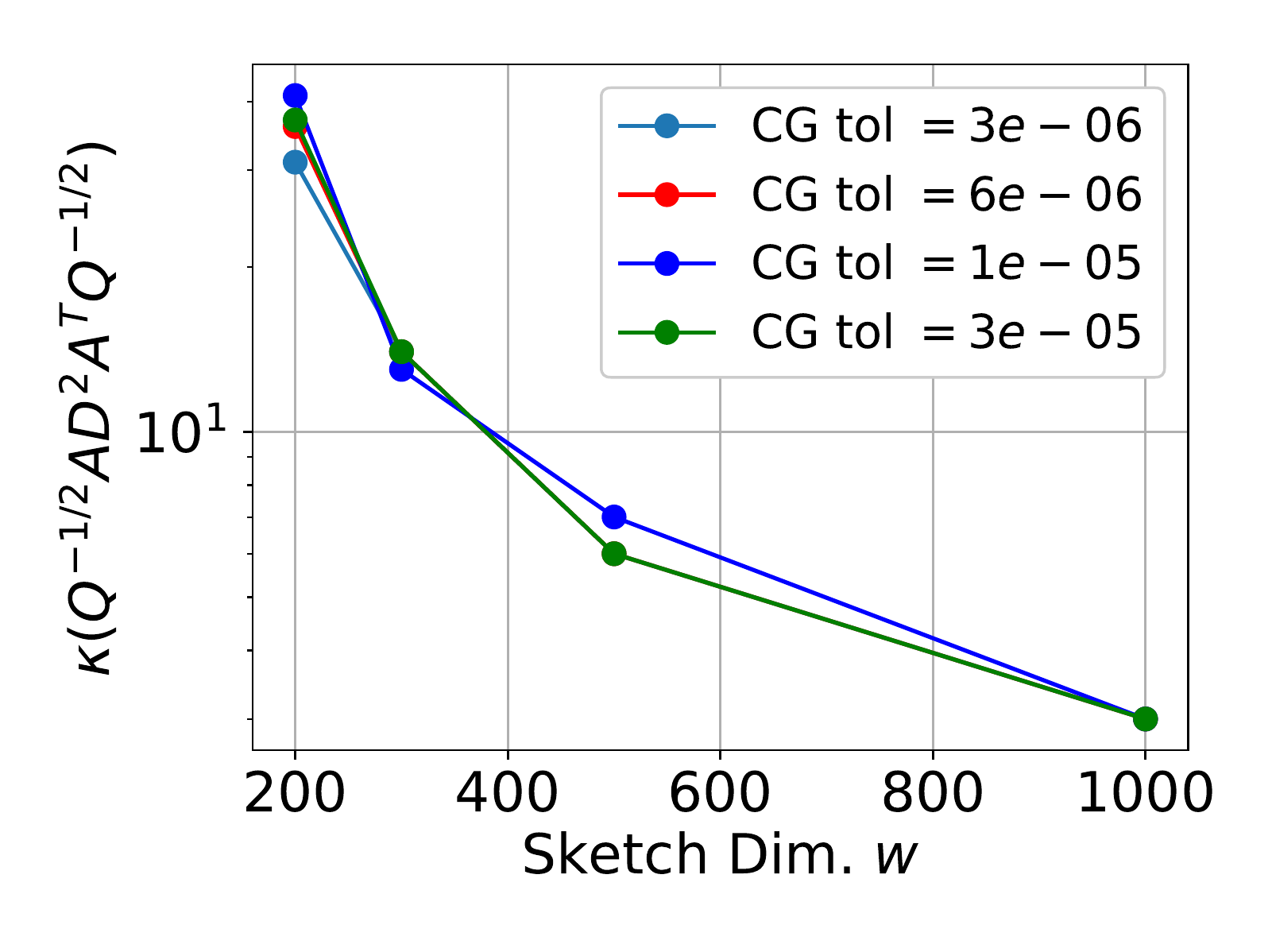}}
	\caption{\emph{ARCENE data set}: As $w$ increases, (a) the number of inner iterations decreases, and is relatively robust to \tolCG and (b) the condition number decreases as well.
	}
	\label{fig:ARCENE_more}
\end{figure}

\begin{figure}[t]
	\centering
    \includegraphics[width=15cm,height=15cm,keepaspectratio]{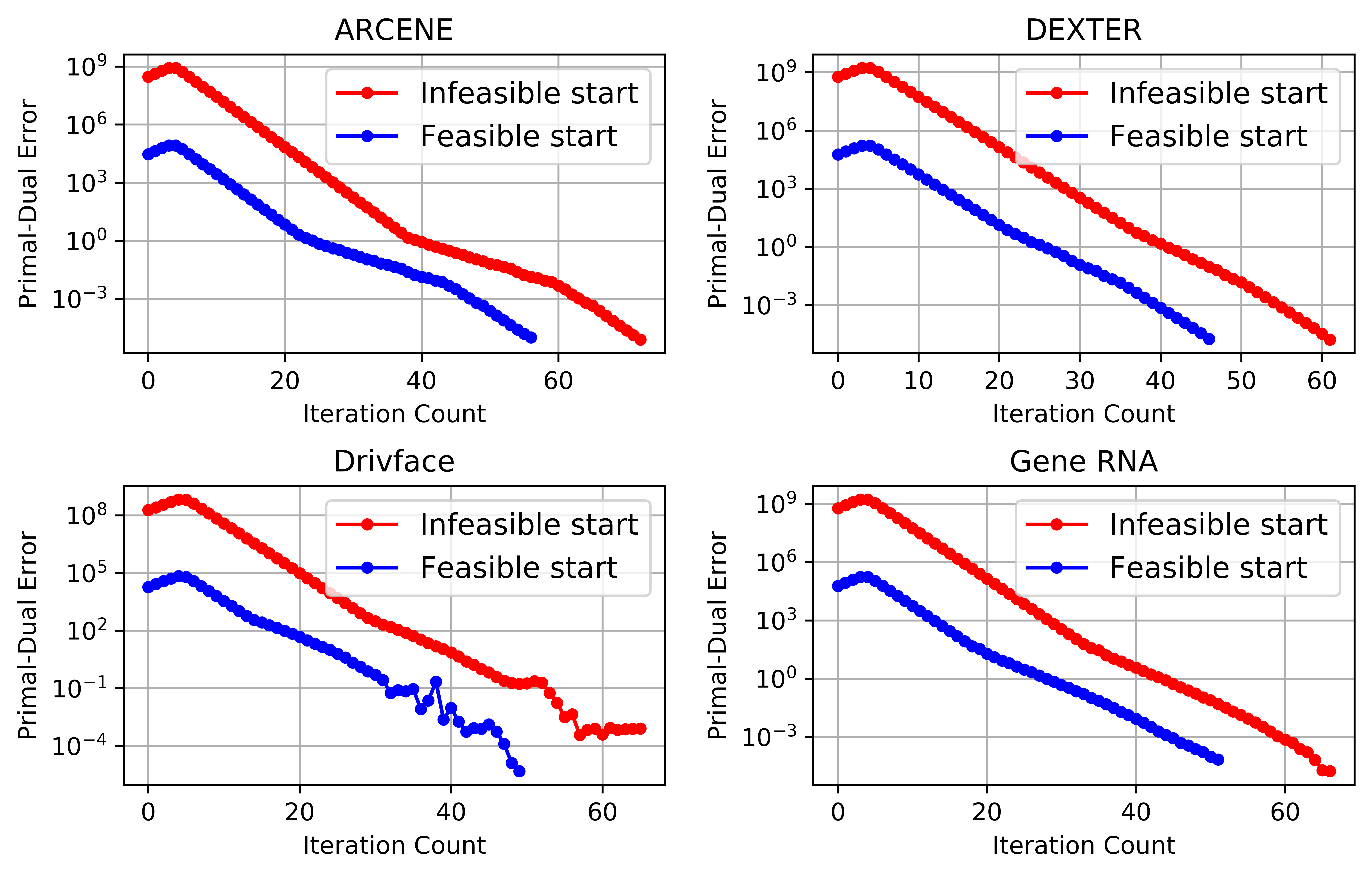}
	\caption{\emph{Feasible vs. infeasible start}: For all four data sets, we see that the algorithm takes much fewer number of outer iteration if one can start from a feasible point. Here, we take $w=1000$. Primal-dual errors are in log scale.
	}
	\label{fig:starting_point}
\end{figure}

\textcolor{black}{
Next, we further evaluate the performance of our algorithm in terms of the total number of \emph{outer iterations} when starting from an infeasible point vs. starting from a strictly feasible point. The objective is to study the effect of feasibility in convergence of our algorithms. For the feasible IPM, we assume that we already have a strictly feasible starting point. See Appendix~\ref{app:feasible_start} on how to find a strictly feasible point of an LP. Figure~\ref{fig:starting_point} shows that if we already know a feasible starting point beforehand, then, across all the data sets, our algorithm indeed takes much fewer number of outer iterations as compared to that of infeasible-start IPM. Additionally, starting from a feasible or an infeasible point does not seem to affect the rate with which the primal-dual error decreases
}

Finally, we run our IPM solver without a \emph{v-correction} \ie, without using the perturbation vector $\vb$ in step~(5) of Algorithm~\ref{algo:ipm_i} and notice that our algorithm still converges without significantly changing the inner or outer iteration counts (see Table \ref{tableNoCorrection}). We leave the corresponding theoretical analysis for future work.  We use the same tolerance parameters and sketching dimension as in Table~\ref{tab:multicol}.

\begin{table*}[ht]
	\caption{Comparison of (our) sketched IPM with and without correction, for the $\ell_1$-SVM problem on UCI Machine Learning Repository~\citep{Dua2019} data sets. Across all, $\tau = 10^{-9}$ and a relative error of $10^{-3}$ or less was achieved.} \label{tableNoCorrection}
	\begin{center}
		\resizebox{\textwidth}{!}{\begin{tabular}{|c|c|c|c|c|c|c|c|c|}
			\hline
			\multicolumn{1}{|c|}{\textbf{Problem}} &
			\multicolumn{1}{|c|}{\textbf{Size}} &
			\multicolumn{1}{|c|}{\textbf{Sketch Size}} &
			\multicolumn{3}{|c|}{\textbf{Precond. Sketch IPM without Correction}} &
			\multicolumn{3}{|c|}{\textbf{Precond. Sketch IPM with Correction}}  \\
			& \multicolumn{1}{c}{$(m \times N )$} & \multicolumn{1}{|c|}{$w$} & \multicolumn{1}{c}{Max In. It.}  & \multicolumn{1}{c}{Sum In. It.} & \multicolumn{1}{c|}{Out. It.}
			& \multicolumn{1}{c}{Max In. It.}  & \multicolumn{1}{c}{Sum In. It.} & \multicolumn{1}{c|}{Out. It.}  \\
			\hline
			ARCENE &$(100 \times 10K)$ & $200$ & $29$ & $1868$ & $73$
			& $29$ & $1873$ & $73$\\
			DEXTER &$(300 \times 20K)$ & $500$ & $40$ & $2307$ & $62$
			& $40$ & $2271$ & $62$\\
			DrivFace &$(606 \times 6400)$ & $1000$ & $52$ & $2820$ & $66$
			& $50$ & $2804$ & $66$\\
			Gene RNA &$(801 \times 20531)$ & $2000$ & $27$ & $1445$ & $67$
			& $27$ & $1434$ & $67$\\
			\hline
		\end{tabular}}
	\end{center}
	\label{tab:multicol2}
\end{table*}

\section{Conclusions and Open Problems}
We proposed and analyzed a long-step IPM algorithm (both feasible and infeasible) using a preconditioned conjugate gradient solver for the normal equations and a novel perturbation vector to correct for the error due to the approximate solver. Thus, we speed up each iteration of the IPM algorithm, without increasing the overall number of iterations.  We demonstrate empirically that our IPM requires an order of magnitude fewer inner iterations within each linear solve than standard IPMs. 

Several important questions remain open. First of all, from a theoretical perspective, using the vector $\vb$ to correct for infeasibility was necessary for our theoretical analysis, but, from an empirical perspective, we observed that the correction was not needed. A theoretical analysis of long-step IPMs without a correction vector would be of interest. Second, it would be interesting to explore whether there are other ways to use the preconditioner to design a feasible step instead of the $\vb$-correction. \textcolor{black}{Third}, a thorough empirical evaluation of the effect of preconditioning and approximate solvers, with or without the $\vb$-correction, would be a significant undertaking in future work. \textcolor{black}{Finally, it would be interesting to investigate what other theoretically impactful ideas could be used to efficiently solve linear program in practice. There exists barriers to using methods such as \textit{inverse maintenance} and \textit{lazy updates} in practice, as discussed in Section \ref{sxn:comparison}. However, it is unknown whether these issues are fundamental or avoidable.}



\section*{Acknowledgements}
AC, PD, and GD were partially supported by NSF 1760353 and 1814041 and DOE SC0022085. HA was partially supported by BSF 2017698. PL was supported by an Amazon Graduate Fellowship in Artificial Intelligence. This work was done when AC was a graduate student in the Department of Statistics, Purdue University.  Oak Ridge National Laboratory is operated by UT-Battelle LLC for the U.S. Department of Energy under contract number DEAC05-00OR22725.


\newpage

\setlength{\bibsep}{6pt}
\bibliographystyle{abbrvnat}
{
\bibliography{bibliography}
}

\newpage
\begin{appendices}
	
	\newpage

\section{Convergence analysis of Algorithm~\ref{algo:ipm_i}}\label{app:convergence}
\textcolor{black}{The proofs of long-step feasible IPM and long-step infeasible IPM are different from each other, since the latter needs additional assumptions on the initial iterate $(\xb^0,\yb^0,\sbb^0)$. For a detailed comparison between proofs of these two variants, we refer the readers to Chapters~5 and 6 of ~\citep{wright1997primal}.
In the context of an approximate solver, most of the proofs related to the convergence of the long-step infeasible IPMs followed from~\citep{Mon03}, except for the fact that we used our sketching-based preconditioner $\Qb^{-1/2}$, as well as our choice of the vector $\vb$ that corrects for the error caused by the inexact solver. Here, we only prove results that are different from~\citep{Mon03}. For our feasible IPM proofs, there is no prior work that analyzed the theoretical aspects of long-step feasible IPMs with an approximate solver. Therefore, compared to the prototypical, long-step feasible IPM of~\citep{wright1997primal}, our proofs needed extra care in bounding the duality gap decrease in each iteration when the linear system is only approximately solved.}



\subsection{Number of iterations for the iterative solver}
In this section, most of the proofs follow \citep{Mon03} except for the fact that we used our sketching based preconditioner $\Qb^{\nicefrac{-1}{2}}$. Recall that $\mathcal{S}$ is the set of optimal and feasible solutions for the proposed LP.

\begin{lemma}\label{lem:ineq}
	Let $(\xb^{0},\yb^{0},\sbb^{0})$ be the initial point with $(\xb^{0},\sbb^{0})>\zero$ and $(\xb^{*},\yb^{*},\sbb^{*})\in\mathcal{S}$ such that $(\xb^{*},\sbb^{*})\le(\xb^{0},\sbb^{0})$ with $\sbb^{0}\ge|\Ab^\ts\yb^{0}-\cbb|$. Then, for any point $(\xb,\yb,\sbb)\in\mathcal{N}(\gamma)$ such that $\rb=\eta\,\rb^{0}$ and $0\le\eta\le\min\left\{1,\frac{\sbb^\ts\xb}{\sbb^{0\ts}\xb^{0}}\right\}$, we get
	\begin{subequations}
		\begin{flalign}
		&(i)~~\eta\,(\xb^\ts\sbb^{0}+\sbb^\ts\xb^{0})\le\,3\dimtwo\mu\label{eq:ineq}\,,\\
		&(ii)~~\eta\,\|\Sb(\xb^{*}-\xb^{0})\|_2\le\eta\,\|\Sb\xb^{0}\|_2\le\eta\sbb^{\ts}\xb^{0}\le\,3\dimtwo\mu\label{eq:ineq2}\,,\\
		&(iii)~~\eta\,\|\Xb(\sbb^{0}+\Ab^\ts\yb^{0}-\cbb)\|_2\le~2\eta\,\|\Xb\sbb^{0}\|_2\le~2\eta\,\xb^\ts\sbb^{0}\le~6\dimtwo\mu\label{eq:ineq3}\,.
		\end{flalign}
	\end{subequations}
\end{lemma}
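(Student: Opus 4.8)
The plan is to establish the three inequalities in sequence, using the standard trick from infeasible-IPM analysis of pairing the current iterate with a fixed optimal solution so that the cross terms telescope. Throughout, I will repeatedly use the elementary bound $\|\zb\|_2 \le \zb^\ts \one$ whenever $\zb$ has nonnegative entries, together with $\xb^\ts\sbb = n\mu$ and the residual relation $\rb = \eta\,\rb^0$.

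For part $(i)$, I would start from the observation that $\rb = \eta\,\rb^0$ means $\Ab\xb - \bb = \eta(\Ab\xb^0 - \bb)$ and $\Ab^\ts\yb + \sbb - \cbb = \eta(\Ab^\ts\yb^0 + \sbb^0 - \cbb)$. Subtracting the feasibility conditions for $(\xb^*,\yb^*,\sbb^*)\in\mathcal{S}$ (namely $\Ab\xb^* = \bb$, $\Ab^\ts\yb^* + \sbb^* = \cbb$), one gets $\Ab(\xb - \eta\xb^0 - (1-\eta)\xb^*) = \zero$ and $\Ab^\ts(\yb - \eta\yb^0 - (1-\eta)\yb^*) + (\sbb - \eta\sbb^0 - (1-\eta)\sbb^*) = \zero$. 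Taking the inner product of the slack relation with the primal relation and using $\Ab(\cdot) = \zero$ kills the $\Ab^\ts(\cdot)$ term, yielding $(\xb - \eta\xb^0 - (1-\eta)\xb^*)^\ts(\sbb - \eta\sbb^0 - (1-\eta)\sbb^*) = 0$. Expanding this bilinear form and discarding the manifestly nonnegative products (those between nonnegative vectors, e.g.\ $\xb^\ts\sbb^*$, $\eta^2\,\xb^{0\ts}\sbb^0$, $(1-\eta)^2\xb^{*\ts}\sbb^*$, $\eta\xb^{0\ts}\sbb^*$, etc.) gives an upper bound on $\eta(\xb^\ts\sbb^0 + \sbb^\ts\xb^0)$ in terms of $\xb^\ts\sbb = n\mu$, $\eta\,\xb^{0\ts}\sbb^0$, and $(1-\eta)$-weighted terms; using $\eta \le \sbb^\ts\xb / \sbb^{0\ts}\xb^0$ to bound $\eta\,\sbb^{0\ts}\xb^0 \le \sbb^\ts\xb = n\mu$ and collecting the at most three surviving $n\mu$-terms yields the factor $3n\mu$. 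This is essentially the computation in \citep{Mon03}; the only care needed is tracking which cross terms vanish and which are nonnegative.

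For parts $(ii)$ and $(iii)$, I would use the componentwise nonnegativity bounds directly. For $(ii)$: since $(\xb^*,\sbb^*) \le (\xb^0,\sbb^0)$ entrywise and $\sbb > \zero$, we have $0 \le \xb^0 - \xb^* \le \xb^0$ entrywise, so $\|\Sb(\xb^* - \xb^0)\|_2 = \|\Sb(\xb^0 - \xb^*)\|_2 \le \|\Sb\xb^0\|_2 \le (\Sb\xb^0)^\ts\one = \sbb^\ts\xb^0$; multiplying by $\eta$ and invoking $(i)$ (which bounds $\eta\,\sbb^\ts\xb^0 \le 3n\mu$) closes it. For $(iii)$: the hypothesis $\sbb^0 \ge |\Ab^\ts\yb^0 - \cbb|$ gives $|\sbb^0 + \Ab^\ts\yb^0 - \cbb| \le 2\sbb^0$ entrywise (since $-\sbb^0 \le \Ab^\ts\yb^0 - \cbb \le \sbb^0$ implies $0 \le \sbb^0 + \Ab^\ts\yb^0 - \cbb \le 2\sbb^0$), hence $\|\Xb(\sbb^0 + \Ab^\ts\yb^0 - \cbb)\|_2 \le 2\|\Xb\sbb^0\|_2 \le 2\,\xb^\ts\sbb^0$; multiplying by $\eta$ and applying $(i)$ once more gives $6n\mu$.

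I expect the main obstacle to be the bookkeeping in part $(i)$: correctly expanding the degree-two form in $\xb,\xb^0,\xb^*,\sbb,\sbb^0,\sbb^*$, verifying that the $\Ab$-dependent terms genuinely cancel (this requires the primal and dual residual relations to be consistently aligned with $\eta$), and being careful that each discarded term is actually nonnegative — in particular that $(1-\eta) \ge 0$, which follows from $\eta \le 1$, and that the mixed optimal/initial products like $\eta(1-\eta)(\xb^{0\ts}\sbb^* + \xb^{*\ts}\sbb^0)$ are nonnegative. Parts $(ii)$ and $(iii)$ are then short consequences. The other minor subtlety is ensuring the constant really is $3$ and not larger; this comes down to noting that at most three of the surviving terms are each bounded by $n\mu$ after using the constraint on $\eta$.
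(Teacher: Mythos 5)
Your overall route is the paper's: part $(i)$ via the orthogonality identity obtained by pairing the residual relations $\rb=\eta\,\rb^0$ with a fixed $(\xb^*,\yb^*,\sbb^*)\in\mathcal{S}$, and parts $(ii)$, $(iii)$ via entrywise domination ($0\le\xb^{0}-\xb^{*}\le\xb^{0}$ and $0\le\sbb^{0}+\Ab^\ts\yb^{0}-\cbb\le2\sbb^{0}$) together with $\|\zb\|_2\le\zb^\ts\one_\dimtwo$ for nonnegative $\zb$ and then part $(i)$. Parts $(ii)$ and $(iii)$ are correct and essentially identical to the paper's (your entrywise phrasing replaces the paper's explicit expansion of the squared norms, to the same effect).

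The gap is in the bookkeeping of part $(i)$, and it is not merely cosmetic. Isolating the quantity of interest, the orthogonality relation gives
\begin{flalign*}
\eta\big(\xb^\ts\sbb^{0}+(\xb^{0})^\ts\sbb\big)=\xb^\ts\sbb+\eta^2\,(\xb^{0})^\ts\sbb^{0}+\eta(1-\eta)\big((\xb^{0})^\ts\sbb^{*}+(\xb^{*})^\ts\sbb^{0}\big)+(1-\eta)^2\,(\xb^{*})^\ts\sbb^{*}-(1-\eta)\big((\xb^{*})^\ts\sbb+\xb^\ts\sbb^{*}\big),
\end{flalign*}
so the only terms you may drop by nonnegativity are the \emph{subtracted} ones, $(1-\eta)\big((\xb^{*})^\ts\sbb+\xb^\ts\sbb^{*}\big)$. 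Several of the terms you list as ``discarded'' --- $\eta^2(\xb^{0})^\ts\sbb^{0}$, $\eta(1-\eta)(\xb^{0})^\ts\sbb^{*}$, $(1-\eta)^2(\xb^{*})^\ts\sbb^{*}$ --- enter with a plus sign, and dropping an added nonnegative term breaks the upper bound. The first two must instead be bounded: $(\xb^{0})^\ts\sbb^{*}+(\xb^{*})^\ts\sbb^{0}\le2\,(\xb^{0})^\ts\sbb^{0}$ from $(\xb^*,\sbb^*)\le(\xb^0,\sbb^0)$, then $\eta^2+2\eta(1-\eta)\le2\eta$ and $\eta\,(\xb^{0})^\ts\sbb^{0}\le\xb^\ts\sbb=\dimtwo\mu$ from the hypothesis on $\eta$; this is where the three $\dimtwo\mu$'s come from. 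The third term is the genuinely missing idea: your proposal never invokes complementary slackness. Since $(\xb^{*},\yb^{*},\sbb^{*})$ is \emph{optimal}, $\xb^{*}\circ\sbb^{*}=\zero$, so $(1-\eta)^2(\xb^{*})^\ts\sbb^{*}=0$; without this, the best you can do is $(1-\eta)^2(\xb^{0})^\ts\sbb^{0}$, which is not controlled by $\eta\le\xb^\ts\sbb/\big((\xb^{0})^\ts\sbb^{0}\big)$ (for small $\eta$ it can be arbitrarily larger than $\dimtwo\mu$), so neither nonnegativity nor the domination bound rescues it. With these two corrections the argument closes exactly as in the paper.
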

\begin{proof}
	We prove eqns.~\eqref{eq:ineq}--\eqref{eq:ineq3} below.
	
\noindent\textbf{Proof of eqn.~\eqref{eq:ineq}.}
	For completeness, we provide a proof of eqn.~\eqref{eq:ineq} following~\citep{Mon03}. Since $(\xb^{*}, \sbb^{*}, \yb^{*}) \in \mathcal{S}$, the following equalities hold:
	\begin{subequations}
		\begin{flalign}
		\Ab\xb^{*} &=\bb\label{eq:cond1}\\
		\Ab^\ts\yb^{*}+\sbb^{*} &=\cbb.\label{eq:cond2}
		\end{flalign}
	\end{subequations}
	
	\noindent Furthermore, $\rb=\eta \rb^{0}$ implies
	\begin{subequations}
		\begin{flalign}
		\Ab\xb-\bb &=\eta(\Ab\xb^{0}-b)\label{eq:cond3}\\
		\Ab^\ts\yb+\sbb-\cbb &=\eta(\Ab^\ts\yb^{0}+\sbb^{0}-\cbb).\label{eq:cond4}
		\end{flalign}
	\end{subequations}
	
	\noindent Combining eqn.~\eqref{eq:cond1} with eqn.~\eqref{eq:cond3} and eqn.~\eqref{eq:cond2} with eqn.~\eqref{eq:cond4}, we get
	\begin{subequations}
		\begin{flalign}
		\Ab\big(\xb-\eta\xb^{0}-(1-\eta)\xb^{*}\big) &=\zero\label{eq:cond5}\\
		\Ab^\ts(\yb-\eta\yb^{0}-(1-\eta)\yb^{*})+(\sbb-\eta\sbb^{0}-(1-\eta)\sbb^{*}) &=\zero.\label{eq:cond6}
		\end{flalign}
	\end{subequations}
	Multiplying eqn.~\eqref{eq:cond6} by $\left(\xb-\eta \xb^{0}-(1-\eta)\xb^{*}\right)^\ts$ on the left and using eqn.~\eqref{eq:cond5},
	we get
	$$
	\left(\xb-\eta \xb^{0}-(1-\eta)\xb^{*}\right)^\ts\left(\sbb-\eta\sbb^{0}-(1-\eta)\sbb^{*}\right)=0.
	$$
	Expanding we get
	\begin{flalign}
	&\eta\left(\xb^{0^{\ts}}\sbb+\xb^{\ts}\sbb^{0}\right)=\eta^{2} \xb^{0^{\ts}}\sbb^{0}+(1-\eta)^{2} (\xb^{*})^{\ts} \sbb^{*}+\xb^{\ts}\sbb\nonumber\\
	&~~~~~~~~~~~~~~~~~~~~~~~~~~~~~+\eta(1-\eta)\left(\xb^{0^{\ts}}\sbb^{*}+(\xb^{*})^{\ts} \sbb^{0}\right)-(1-\eta)\left((\xb^{*})^{\ts}\sbb+\xb^{\ts} \sbb^{*}\right).\label{eq:cond7}
	\end{flalign}
	Next, we use the given conditions and rewrite eqn.~\eqref{eq:cond7} as
	\begin{flalign}
	\eta\left(\xb^{0^{\ts}} \sbb+\sbb^{0^{\ts}} \xb\right) & \leq \eta^{2} \xb^{0^{\ts}} \sbb^{0}+\xb^{\ts} \sbb+\eta(1-\eta)\left(\xb^{0^{\ts}} \sbb^{*}+\sbb^{0^{\ts}} \xb^{*}\right) \nonumber\\
	& \leq \eta^{2} \xb^{0^{\ts}} \sbb^{0}+\xb^{\ts} \sbb+2 \eta(1-\eta) \xb^{0^{\ts}} \sbb^{0} \nonumber\\
	& \leq 2 \eta \xb^{0^{\ts}} \sbb^{0}+\xb^{\ts} \sbb \leq 3 \xb^{\ts} \sbb=3\dimtwo\mu.\label{eq:fin_bound}
	\end{flalign}
	The first inequality in eqn.~\eqref{eq:fin_bound} follows from the following facts. First, $(1-\eta)((\xb^{*})^{\ts}\sbb+\xb^{\ts} \sbb^{*})\ge0$ as $(\xb^{*}, \sbb^{*}) \geq \zero$ and $(\xb^{0}, \sbb^{0}) \geq \zero$. Second, as $(\xb^{*}, \sbb^{*}, \yb^{*}) \in \mathcal{S}$ (which implies $\xb^{*}\circ\sbb^{*}=\zero$), we have $(\xb^{*})^{\ts} \sbb^{*}=0$. The second inequality in eqn.~\eqref{eq:fin_bound} holds as $\xb^{*} \leq \xb^{0}$, $\sbb^{*} \leq \sbb^{0}$,  $(\xb^{*}, \sbb^{*}) \geq \zero$, and $(\xb^{0}, \sbb^{0}) \geq \zero$; combining them we get $(\xb^{0^{\ts}} \sbb^{*}+\sbb^{0^{\ts}} \xb^{*})\le2\,\xb^{0^\ts}\sbb^{0}$. Third inequality in eqn.~\eqref{eq:fin_bound} is true as we have $ \eta^{2} \xb^{0^{\ts}}+2 \eta(1-\eta) \xb^{0^{\ts}} \sbb^{0}=2 \eta\xb^{0^{\ts}} \sbb^{0}-\eta^2\xb^{0^{\ts}} \sbb^{0}\le2 \eta\xb^{0^{\ts}} \sbb^{0}$. The final inequality holds as $\eta \leq \frac{\xb^{\ts} \sbb}{\xb^{0^\ts} \sbb^{0}}$.\qed
	
	\noindent\textbf{Proof of eqn.~\eqref{eq:ineq2}.}
	The last inequality follows from eqn.~\eqref{eq:ineq}. The second to last inequality is also easy to prove as
	\begin{flalign}
	\|\Sb\xb^{0}\|_2=\sqrt{\sum_{i=1}^{s}(s_ix_i^{0})^2}\le\sqrt{\left(\sum_{i=1}^{s}s_ix_i^{0}\right)^2}=\sbb^\ts\xb^{0}\,.
	\end{flalign}
	
	\noindent To prove the first inequality in eqn.~\eqref{eq:ineq2}, we use the fact $\xb^{0}\ge\xb^{*}$ as follows:
	\begin{flalign}
	\|\Sb\xb^{0}\|_2^2-\|\Sb(\xb^{*}-\xb^{0})\|_2^2=&~\sum_{i=1}^\dimtwo(s_ix_i^{0})^2-\sum_{i=1}^\dimtwo s_i^2\left((x_i^{*})^2+(x_i^{0})^2-2x_i^{*}x_i^{0}\right)\nonumber\\
	=&~\sum_{i=1}^\dimtwo s_i^2\left(2x_i^{*}x_i^{0}-(x_i^{*})^2\right)\ge 0\nonumber\,.
	\end{flalign}\qed
	
	\noindent \textbf{Proof of eqn.~\eqref{eq:ineq3}.}
	To prove this we use a similar approach as in eqn.~\eqref{eq:ineq2}. The last inequality directly follows from eqn.~\eqref{eq:ineq}; the second to last inequality is also easy to prove as
	\begin{flalign}
	\|\Xb\sbb^{0}\|_2=\sqrt{\sum_{i=1}^{\dimtwo}(x_is_i^{0})^2}\le\sqrt{\left(\sum_{i=1}^{\dimtwo}x_is_i^{0}\right)^2}=\xb^\ts\sbb^{0}\,.
	\end{flalign}
	For the first inequality, we proceed as follows:
	\begin{flalign}
	\|\Xb(\sbb^{0}+\Ab^\ts\yb^{0}-\cbb)\|_2^2=&~\|\Xb\sbb^{0}\|_2^2+\|\Xb(\Ab^\ts\yb^{0}-\cbb)\|_2^2+2\sbb^{0^\ts}\Xb^\ts\Xb(\Ab^\ts\yb^{0}-\cbb)\nonumber\\
	=&~\|\Xb\sbb^{0}\|_2^2+\sum_{i=1}^\dimtwo x_i^2(\Ab^\ts\yb^{0}-\cbb)_i^2+2\sum_{i=1}^\dimtwo x_i^2s_i^{0}(\Ab^\ts\yb^{0}-\cbb)_i\nonumber\\
	\le&~\|\Xb\sbb^{0}\|_2^2+\sum_{i=1}^\dimtwo (x_is_i^{0})^2+2\sum_{i=1}^\dimtwo(x_is_i^{0})^2\nonumber\\
	=&~\|\Xb\sbb^{0}\|_2^2+\|\Xb\sbb^{0}\|_2^2+2\|\Xb\sbb^{0}\|_2^2=4\|\Xb\sbb^{0}\|_2^2.\label{eq:note4}
	\end{flalign}
	The inequality in eqn.~\eqref{eq:note4} follows from $x_i\ge0$, $s_i^{0}\ge0$ and $\abs{(\Ab^\ts\yb^{0}-\cbb)_i}\le s_i^{0}$ for all $i=1\ldots \dimtwo$.
\end{proof}

\noindent Our next result bounds $\|\Qb^{-\nicefrac{1}{2}}\pb\|_2$ which is instrumental in proving the final bound.
\begin{lemma}\label{thm:boundf}
	Let $(\xb^{0},\yb^{0},\sbb^{0})$ be the initial point with $(\xb^{0},\sbb^{0})>\zero$ such that $\xb^{0}\ge\xb^{*}$ and $\sbb^{0}\ge\max\{\sbb^{*},|\cbb-\Ab^\ts\yb^{0}|\}$ for some $(\xb^{*},\yb^{*},\sbb^{*})\in\mathcal{S}$. Furthermore, let $(\xb,\yb,\sbb)\in\mathcal{N}(\gamma)$ with $\rb=\eta\,\rb^{0}$ for some $0\le\eta\le1$. If the sketching matrix $\Wb\in\RR{\dimtwo}{w}$ satisfies the condition in eqn.~\eqref{eq:pdcond1}, then
	\begin{flalign}
	\|\Qb^{-\nicefrac{1}{2}}\pb\|_2\le~\sqrt{2}\left(\frac{9\dimtwo}{\sqrt{1-\gamma}}+\sigma\sqrt{\frac{\dimtwo}{1-\gamma}}+\sqrt{\dimtwo}\right)\sqrt{\mu}\nonumber\,.
	\end{flalign}
	Recall that $\rb=(\rb_p,\rb_d)=(\Ab\xb-\bb,\Ab^\ts\yb+\sbb-\cbb)$ and $\rb^{0}=(\rb_p^{0},\rb_d^{0})=(\Ab\xb^{0}-\bb,\Ab^\ts\yb^{0}+\sbb^{0}-\cbb)$\,.
\end{lemma}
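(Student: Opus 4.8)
The plan is to mimic the structure of the proof of Lemma~\ref{thm:boundf_f} (the feasible case), but carry along the extra terms coming from the residuals $\rb_p$ and $\rb_d$ that appear in the infeasible expression
$\pb=-\rb_p-\sigma\mu\Ab\Sb^{-1}\one_\dimtwo+\Ab\xb-\Ab\Db^2\rb_d$. First I would substitute this expression into $\Qb^{-\nicefrac{1}{2}}\pb$ and split it, via the triangle inequality, into four pieces: $\Delta_0 = \|\Qb^{-\nicefrac{1}{2}}\rb_p\|_2$, $\Delta_1 = \sigma\mu\|\Qb^{-\nicefrac{1}{2}}\Ab\Sb^{-1}\one_\dimtwo\|_2$, $\Delta_2 = \|\Qb^{-\nicefrac{1}{2}}\Ab\xb\|_2$, and $\Delta_3 = \|\Qb^{-\nicefrac{1}{2}}\Ab\Db^2\rb_d\|_2$. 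The terms $\Delta_1$ and $\Delta_2$ are handled \emph{verbatim} as in Lemma~\ref{thm:boundf_f}: using eqn.~\eqref{eq:pdcond1} we have $\|\Qb^{-\nicefrac{1}{2}}\Ab\Db\|_2\le\sqrt 2$, and then $\Delta_1\le\sqrt 2\,\sigma\sqrt{\nicefrac{n\mu}{(1-\gamma)}}$ (using $x_is_i\ge(1-\gamma)\mu$) and $\Delta_2\le\sqrt{2n\mu}$ (writing $\xb = \Db(\Sb\Xb)^{1/2}\one_\dimtwo$ so $\Delta_2 = \|\Qb^{-1/2}\Ab\Db(\Sb\Xb)^{1/2}\one_\dimtwo\|_2$ and $\sum_i x_is_i = n\mu$).

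The new work is in bounding $\Delta_0$ and $\Delta_3$, and this is where I would invoke Lemma~\ref{lem:ineq}. The key structural fact is that, since $(\xb,\yb,\sbb)\in\mathcal{N}(\gamma)$, the residual satisfies $\rb = \eta\,\rb^0$ with $0\le\eta\le\mu/\mu_0 \le \min\{1, \xb^\ts\sbb/(\xb^{0\ts}\sbb^0)\}$ (the last inequality because $\mu_0 = \xb^{0\ts}\sbb^0/n$ and $\mu = \xb^\ts\sbb/n$), so Lemma~\ref{lem:ineq} applies. For $\Delta_0 = \|\Qb^{-\nicefrac{1}{2}}\rb_p\|_2 = \eta\|\Qb^{-\nicefrac{1}{2}}\rb_p^0\|_2$, I would write $\rb_p^0 = \Ab\xb^0 - \bb = \Ab(\xb^0 - \xb^*)$ using $\Ab\xb^* = \bb$, then factor $\Ab(\xb^0-\xb^*) = \Ab\Db\cdot\Db^{-1}(\xb^0-\xb^*) = \Ab\Db\,\Sb^{1/2}\Xb^{-1/2}(\xb^0-\xb^*)$; submultiplicativity and $\|\Qb^{-1/2}\Ab\Db\|_2\le\sqrt 2$ give $\Delta_0 \le \sqrt 2\,\eta\,\|\Sb^{1/2}\Xb^{-1/2}(\xb^0-\xb^*)\|_2$. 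Bounding $\|\Xb^{-1/2}\|_2 = 1/\sqrt{\min_i x_i}$ and $\eta\|\Sb^{1/2}(\ldots)\|$ — actually cleaner to pull out $\|(\Xb\Sb)^{-1/2}\|_2\le 1/\sqrt{(1-\gamma)\mu}$ after writing $\Sb^{1/2}\Xb^{-1/2}=(\Xb\Sb)^{-1/2}\Sb$ — reduces $\Delta_0$ to $\sqrt 2\,\eta\,\|\Sb(\xb^*-\xb^0)\|_2/\sqrt{(1-\gamma)\mu}$, and eqn.~\eqref{eq:ineq2} bounds $\eta\|\Sb(\xb^*-\xb^0)\|_2 \le 3n\mu$, yielding $\Delta_0 \le \sqrt 2\cdot 3n\sqrt{\mu}/\sqrt{1-\gamma}$. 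For $\Delta_3 = \|\Qb^{-\nicefrac{1}{2}}\Ab\Db^2\rb_d\|_2 = \eta\|\Qb^{-1/2}\Ab\Db\cdot\Db\rb_d^0\|_2 \le \sqrt 2\,\eta\|\Db\rb_d^0\|_2$, and $\Db\rb_d^0 = \Xb^{1/2}\Sb^{-1/2}\rb_d^0 = (\Xb\Sb)^{-1/2}\Xb\rb_d^0$, so $\Delta_3\le\sqrt 2\,\eta\|\Xb(\sbb^0+\Ab^\ts\yb^0-\cbb)\|_2/\sqrt{(1-\gamma)\mu}$ (using $\rb_d^0 = \Ab^\ts\yb^0 + \sbb^0 - \cbb$), and eqn.~\eqref{eq:ineq3} gives $\eta\|\Xb(\ldots)\|_2\le 6n\mu$, hence $\Delta_3\le\sqrt 2\cdot 6n\sqrt{\mu}/\sqrt{1-\gamma}$. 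Adding $\Delta_0 + \Delta_3 \le \sqrt 2\cdot 9n\sqrt\mu/\sqrt{1-\gamma}$, which is exactly the first term in the claimed bound.

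Finally I would combine all four pieces: $\|\Qb^{-\nicefrac{1}{2}}\pb\|_2 \le \Delta_0 + \Delta_1 + \Delta_2 + \Delta_3 \le \sqrt 2\left(\frac{9n}{\sqrt{1-\gamma}} + \sigma\sqrt{\frac{n}{1-\gamma}} + \sqrt n\right)\sqrt\mu$, completing the proof. The main obstacle is purely bookkeeping: getting the diagonal-matrix factorizations $\Ab\Db^2\rb_d = \Ab\Db(\Xb\Sb)^{-1/2}\Xb\rb_d$ and $\rb_p^0 = \Ab\Db(\Xb\Sb)^{-1/2}\Sb(\xb^0-\xb^*)$ arranged so that Lemma~\ref{lem:ineq}'s bounds $\eta\|\Sb(\xb^*-\xb^0)\|_2\le 3n\mu$ and $\eta\|\Xb(\sbb^0+\Ab^\ts\yb^0-\cbb)\|_2\le 6n\mu$ plug in directly, together with the uniform bound $\|(\Xb\Sb)^{-1/2}\|_2\le 1/\sqrt{(1-\gamma)\mu}$ coming from the neighborhood condition $x_is_i\ge(1-\gamma)\mu$; there is no conceptual difficulty beyond keeping the factors $\eta$, $\mu$, and $1-\gamma$ straight, and verifying that $\eta \le \mu/\mu_0$ together with the definition of $\mathcal{N}(\gamma)$ really does put us in the hypotheses of Lemma~\ref{lem:ineq}.
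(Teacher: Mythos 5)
Your proposal is correct and follows essentially the same route as the paper's proof: the same four-term triangle-inequality decomposition of $\Qb^{-\nicefrac{1}{2}}\pb$, the same use of $\|\Qb^{-\nicefrac{1}{2}}\Ab\Db\|_2\le\sqrt{2}$ from eqn.~\eqref{eq:pdcond1}, the same diagonal factorizations combined with $\|(\Xb\Sb)^{-\nicefrac{1}{2}}\|_2\le 1/\sqrt{(1-\gamma)\mu}$, and the same invocation of eqns.~\eqref{eq:ineq2} and~\eqref{eq:ineq3} of Lemma~\ref{lem:ineq} for the residual terms. Your explicit verification that the neighborhood condition $\|\rb\|_2/\|\rb^0\|_2\le\mu/\mu_0$ places $\eta$ within the hypotheses of Lemma~\ref{lem:ineq} is a point the paper handles by citing~\citep{Mon03}, but it is the same argument in substance.
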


\begin{proof}
Note that after correcting the approximation error of the iterative solver using $\vb$, the primal and dual residuals $\rb=(\rb_p,\rb_d)$ corresponding to an iterate $(\xb,\yb,\sbb)\in\mathcal{N}(\gamma)$ always lie on the line segment between zero and $\rb^{(0)}$. In other words, $\rb=\eta\rb^{(0)}$ always holds for some $\eta\in[0,1]$. This was formally proven in \citep[Lemma 3.3]{Mon03}.
	In order to bound $\|\Qb^{-\nicefrac{1}{2}}\pb\|_2$,  first we express $\pb$ as in eqn.~\eqref{eq:normal} and rewrite
	\begin{flalign}
	\Qb^{-\nicefrac{1}{2}}\pb
	=&~\Qb^{-\nicefrac{1}{2}}\left(-\rb_p-\sigma\mu\Ab\Sb^{-1}\one_\dimtwo+\Ab\xb-\Ab\Db^2\rb_d\right).\label{eq:recur3}
	\end{flalign}
	Then, applying the triangle inequality to $\|\Qb^{-\nicefrac{1}{2}}\pb\|_2$ in eqn.~\eqref{eq:recur3}, we get
	\begin{flalign}
	\|\Qb^{-\nicefrac{1}{2}}\pb\|_2\le \Delta_1+\Delta_2+\Delta_3+\Delta_4\label{eq:recur4}\,,
	\end{flalign}
	where
	\begin{flalign}
	\Delta_1=&~\|\Qb^{-\nicefrac{1}{2}}\rb_p\|_2\nonumber\,,\\
	\Delta_2=&~\sigma\mu\|\Qb^{-\nicefrac{1}{2}}\Ab\Db(\Xb\Sb)^{-\nicefrac{1}{2}}\one_\dimtwo\|_2\nonumber\,,\\
	\Delta_3=&~\|\Qb^{-\nicefrac{1}{2}}\Ab\Db\Db^{-1}\xb\|_2\,,\nonumber\\
	\Delta_4=&~\|\Qb^{-\nicefrac{1}{2}}\Ab\Db^2\rb_d\|_2\,.\nonumber
	\end{flalign}
	To bound $\Delta_1$, $\Delta_2$, $\Delta_3$ and $\Delta_4$ we heavily use the condition of eqn.~\eqref{eq:pdcond1}.
	
	\paragraph{Bounding $\Delta_1$.} Using $\rb_p=\eta\,\rb_p^{0}$, $\rb_p^{0}=\Ab\xb^{0}-\bb$ and $\bb=\Ab\xb^{*}$, we rewrite $\Delta_1$ as
	\begin{flalign}
	\Delta_1=&~\eta\,\|\Qb^{-\nicefrac{1}{2}}\Ab(\xb^{0}-\xb^{*})\|_2\nonumber\\
	=&~\eta\,\|\Qb^{-\nicefrac{1}{2}}\Ab\Db\Db^{-1}(\xb^{0}-\xb^{*})\|_2\nonumber\\
	\le&~\eta\,\|\Qb^{-\nicefrac{1}{2}}\Ab\Db\|_2\|\Db^{-1}(\xb^{0}-\xb^{*})\|_2\nonumber\\
	\le&~\sqrt{2}\eta\,\|\Db^{-1}(\xb^{0}-\xb^{*})\|_2\nonumber\\
	=&~\sqrt{2}\eta\,\|(\Xb\Sb)^{-\nicefrac{1}{2}}\Sb(\xb^{0}-\xb^{*})\|_2\nonumber\\
	\le&~\sqrt{2}\eta\,\|(\Xb\Sb)^{-\nicefrac{1}{2}}\|_2\,\|\Sb(\xb^{0}-\xb^{*})\|_2\,,\label{eq:del11}
	\end{flalign}
	where the above steps follow from submultiplicativity and  eqn.~\eqref{eq:pdcond1}. From eqn.~\eqref{eq:pdcond1}, note that we have $\|\Qb^{-\nicefrac{1}{2}}\Ab\Db\|_2\le\sqrt{2}$ as $\zeta\le 1$\,. Now, applying eqn.~\eqref{eq:ineq2} and $\|(\Xb\Sb)^{-\nicefrac{1}{2}}\|_2=\max_{1\le i \le \dimtwo}\frac{1}{\sqrt{x_is_i}}$, we further have
	\begin{flalign}
	\Delta_1\le&~\sqrt{2}\,\max_{1\le i \le \dimtwo}\frac{1}{\sqrt{x_is_i}}\cdot 3\dimtwo\mu\nonumber\\
	\le&~ 3\sqrt{2}\,\dimtwo\sqrt{\frac{\mu}{1-\gamma}}\label{eq:del12}\,,
	\end{flalign}
	where the last inequality follows from $(\xb,\yb,\sbb)\in\mathcal{N}(\gamma)$.
	
	\paragraph{Bounding $\Delta_2$.} Applying submultiplicativity, we get
	\begin{flalign}
	\Delta_2=&~\sigma\mu\,\|\Qb^{-\nicefrac{1}{2}}\,\Ab\Db\,(\Xb\Sb)^{-\nicefrac{1}{2}}\one_\dimtwo\|_2\nonumber\\
	\le&~\sigma\mu\,\|\Qb^{-\nicefrac{1}{2}}\,\Ab\Db\|_2\|(\Xb\Sb)^{-\nicefrac{1}{2}}\one_\dimtwo\|_2\nonumber\\
	\le&~\sqrt{2}\,\sigma\mu\,\|(\Xb\Sb)^{-\nicefrac{1}{2}}\one_\dimtwo\|_2\nonumber\\
	=&~\sqrt{2}\,\sigma\mu\,\sqrt{\sum_{i=1}^{\dimtwo}\frac{1}{x_i s_i}}
	\le~\sqrt{2}\,\sigma\mu\,\sqrt{\sum_{i=1}^{\dimtwo}\frac{1}{(1-\gamma)\mu}}\nonumber\\
	=&~\sqrt{2}\,\sigma\,\sqrt{\frac{\dimtwo\,\mu}{(1-\gamma)}}\label{eq:del2}\,,
	\end{flalign}
	where the second to last inequality follows from eqn.~\eqref{eq:pdcond1} and the last inequality holds as $(\xb,\yb,\sbb)\in\mathcal{N}(\gamma)$.
	
	\paragraph{Bounding $\Delta_3$.} Using $\Db=\Sb^{-\nicefrac{1}{2}}\Xb^{\nicefrac{1}{2}}$ and $\xb=\Xb\,\one_\dimtwo$ we get
	\begin{flalign}
	\Delta_3=&~\|\Qb^{-\nicefrac{1}{2}}\,\Ab\Db\,(\Sb^{\nicefrac{1}{2}}\Xb^{-\nicefrac{1}{2}})\,\Xb\,\one_\dimtwo\|_2\nonumber\\
	=&~\|\Qb^{-\nicefrac{1}{2}}\,\Ab\Db\,(\Sb\Xb)^{\nicefrac{1}{2}}\,\one_\dimtwo\|_2\nonumber\\
	\le&~\|\Qb^{-\nicefrac{1}{2}}\,\Ab\Db\|_2\|(\Sb\Xb)^{\nicefrac{1}{2}}\,\one_\dimtwo\|_2\nonumber\\
	\le&~\sqrt{2}\,\sqrt{\sum_{i=1}^{\dimtwo}x_i s_i}=~\sqrt{2\dimtwo\,\mu}\label{eq:del3}\,,
	\end{flalign}
	where the inequalities follow from submultiplicativity and eqn.~\eqref{eq:pdcond1}.
	
	\paragraph{Bounding $\Delta_4$.} Using $\rb_d=\eta\,\rb_d^{0}$, we have
	\begin{flalign}
	\Delta_4=&~\eta\|\Qb^{-\nicefrac{1}{2}}\,\Ab\,\Db^2\rb_d^{0}\|_2\nonumber\\
	\le&~ \eta\|\Qb^{-\nicefrac{1}{2}}\,\Ab\Db\|_2\|(\Xb\Sb)^{-\nicefrac{1}{2}}\Xb\rb_d^{0}\|_2\nonumber\\
	\le&~ \sqrt{2}\eta\,\|(\Xb\Sb)^{-\nicefrac{1}{2}}\Xb(\Ab^\ts\yb^{0}+\sbb^{0}-\cbb)\|_2\nonumber\\
	\le&~\sqrt{2}\eta\,\|(\Xb\Sb)^{-\nicefrac{1}{2}}\|_2\,\|\Xb(\Ab^\ts\yb^{0}+\sbb^{0}-\cbb)\|_2\,,\nonumber
	\end{flalign}
	where the above inequalities follow from submultiplicativity and eqn.~\eqref{eq:pdcond1}. Now, applying eqn.~\eqref{eq:ineq3} and $\|(\Xb\Sb)^{-\nicefrac{1}{2}}\|_2\le\frac{1}{\sqrt{(1-\gamma)\mu}}$, we further have
	\begin{flalign}
	\Delta_4\le~6\sqrt{2}\dimtwo\sqrt{\frac{\mu}{1-\gamma}}.\label{eq:del4}
	\end{flalign}
	
	\paragraph{Final bound.} Combining eqns.~\eqref{eq:recur4}, \eqref{eq:del12}, ,\eqref{eq:del2}, \eqref{eq:del3} and \eqref{eq:del4}, we get
	\begin{flalign}
	\|\Qb^{-\nicefrac{1}{2}}\pb\|_2\le~\sqrt{2}\left(\frac{9\dimtwo}{\sqrt{1-\gamma}}+\sigma\sqrt{\frac{\dimtwo}{1-\gamma}}+\sqrt{\dimtwo}\right)\sqrt{\mu}\,.
	\end{flalign}
	This concludes the proof of Lemma~\ref{thm:boundf}.
\end{proof}

\subsection{Determining step-size, bounding the number of iterations, and proof of Theorem~\ref{thm:1}}

Assume that the triplet $(\hat{\Delx},\hat{\Dely},\hat{\Dels})$ satisfies eqns.~\eqref{eq:addl_i}, \eqref{eq:delxhat_i}, and \eqref{eq:delshat_i}. We rewrite this system in the following alternative form:
\begin{subequations}\label{eq:iip_3}
	\begin{flalign}
	\Ab\hat{\Delta\xb}=&-\rb_p, \label{eq:iip_3_1}\\
	\Ab^\ts\hat{\Delta\yb}+\hat{\Delta\sbb}=&-\rb_d,  \label{eq:iip_3_2}\\
	\Xb\hat{\Delta\sbb}+\Sb\hat{\Delta\xb}=&-\Xb\Sb\,\one_\dimtwo+\sigma\mu\,\one_\dimtwo - \vb.  \label{eq:iip_3_3}
	\end{flalign}
\end{subequations}
Indeed, we now show how to derive eqns.~\eqref{eq:delxhat_i}, \eqref{eq:addl_i} and \eqref{eq:delshat_i} from eqn.~\eqref{eq:iip_3}. Pre-multiplying both sides of eqn.~\eqref{eq:iip_3_3} by $\Ab\Sb^{-1}$ and noting that $\Db^2=\Xb\Sb^{-1}$, we get
\begin{flalign}
&~\Ab\Db^2\hat{\Delta\sbb}+\Ab\hat{\Delx}=-\Ab\Xb\one_n+\sigma\mu\Ab\Sb^{-1}\one_n-\Ab\Sb^{-1}\vb\nonumber\\
\Rightarrow&~\Ab\Db^2\hat{\Delta\sbb}=-\Ab\xb+\rb_p+\sigma\mu\Ab\Sb^{-1}\one_n-\Ab\Sb^{-1}\vb.\label{eq:iip_31}
\end{flalign}
Eqn.~\eqref{eq:iip_31} holds as $\Ab\Xb\one_n=\Ab\xb$ and, from eqn.~\eqref{eq:iip_3_1}, $\Ab\hat{\Delx}=-\rb_p$. Next, pre-multiplying eqn.~\eqref{eq:iip_3_2} by $\Ab\Db^2$, we get
\begin{flalign}
&~\Ab\Db^2\Ab^\ts\hat{\Dely}+\Ab\Db^2\hat{\Dels}=-\Ab\Db^2\rb_d\nonumber\\
\Rightarrow &~\Ab\Db^2\Ab^\ts\hat{\Dely}=-\rb_p-\sigma\mu\Ab\Sb^{-1}\one_n+\Ab\xb-\Ab\Db^2\rb_d+\Ab\Sb^{-1}\vb=\pb+\Ab\Sb^{-1}\vb.\label{eq:iip_311}
\end{flalign}
The first equality in eqn.~\eqref{eq:iip_311} follows from eqn.~\eqref{eq:iip_31} and the definition of $\pb$. This establishes
eqn.~\eqref{eq:addl_i}. Eqn.~\eqref{eq:delshat_i} directly follows from eqn.~\eqref{eq:iip_3_2}. Finally, we get eqn.~\eqref{eq:delxhat_i} by pre-multiplying eqn.~\eqref{eq:iip_3_3} by $\Sb^{-1}$.

Next, we define each new point traversed by the algorithm as $(\xnew,\ynew, \snew)$, where
\begin{flalign}
(\xnew,\ynew,\snew) &= (\xb, \yb, \sbb) + \alpha(\hat{\Delta\xb},\hat{\Delta \yb},\hat{\Delta\sbb}), \\
\munew &= \xnew^\ts \snew/ \dimtwo, \\
\rb(\alpha) &= \rb\left(\xnew,\snew,\ynew\right).
\end{flalign}
The goal in this section is to bound the number of iterations required by Algorithm~\ref{algo:ipm_i}.
Towards that end, we bound the magnitude of the step size $\alpha$. First, we provide an upper bound on $\alpha$, which allows us to show that each new point $(\xnew,\snew,\ynew)$ traversed by the algorithm stays within the neighborhood $\neigh$.
Second, we provide a lower bound on $\alpha$, which allows us to bound the number of iterations required. We use multiple lemmas from~\citep{Mon03}, which we reproduce here, without their proofs.

First, we provide an upper bound on $\alpha$, ensuring that each new point $(\xnew,\ynew,\snew)$ traversed by the algorithm stays within the neighborhood $\neigh$.
\begin{lemma}[Lemma 3.5 of \citep{Mon03}]\label{lemmamaxalpha1}
Assume $(\hat{\Delx},\hat{\Dely},\hat{\Dels})$  satisfies eqns.~(\ref{eq:iip_3}) for some $\sig > 0$, $(\xb, \yb,\sbb) \in \neigh$ (for $\gamma \in (0,1)$), and $\|\vb\|_2 \leq \frac{\gamma \sig \mu}{4}$. Then, $(\xnew,\ynew,\snew) \in \neigh$ for every scalar $\alpha$ such that
\begin{flalign} \label{alphalemma1}
0 \leq \alpha \leq \min \left\{ 1, \frac{\gamma \sig \mu}{4 \delxdelsnorm}\right\} .
\end{flalign}
\end{lemma}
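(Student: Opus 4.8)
\textbf{Proof plan for Lemma~\ref{lemmamaxalpha1}.} The statement is the infeasible-IPM analogue of Lemma~\ref{lemmamaxalpha1_f}, which was already proved in the excerpt, and is quoted directly from Lemma~3.5 of~\citep{Mon03}. The plan is therefore to mirror the argument of Lemma~\ref{lemmamaxalpha1_f}, but now keeping track of the residuals, which requires verifying both the proximity/positivity conditions \emph{and} the residual-shrinkage condition $\|\rb(\alpha)\|_2 / \|\rb^0\|_2 \le \munew/\mu_0$ that enters the definition of $\neigh$ in the infeasible case. First I would record the analogue of Lemma~\ref{lemmamaxalpha0}: since $(\hat{\Delx},\hat{\Dels},\hat{\Dely})$ satisfies eqn.~\eqref{eq:iip_3}, the identity $\xnew\circ\snew = (1-\alpha)\xcircs + \alpha\sig\mu\one_\dimtwo - \alpha\vb + \alpha^2\hat{\Delta\xb}\circ\hat{\Delta\sbb}$ still holds (the proof only used eqn.~\eqref{eq:iip_3_3}, which is identical to eqn.~\eqref{eq:iip_3_f_3}), but note that $\hat{\Delx}^\ts\hat{\Dels}$ is no longer zero in the infeasible case, so the duality-measure update picks up that extra term; I would keep it symbolic for now.

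Next I would establish the proximity bound $\xnew\circ\snew \ge (1-\gamma)\munew\one_\dimtwo$ exactly as in the proof of Lemma~\ref{lemmamaxalpha1_f}: write $\xnew\circ\snew - (1-\gamma)\munew\one_\dimtwo$ using the identity above, bound the $\vb$-terms using $\|\vb - (1-\gamma)\tfrac{\vb^\ts\one_\dimtwo}{\dimtwo}\one_\dimtwo\|_\infty \le 2\|\vb\|_\infty \le 2\|\vb\|_2 \le \tfrac{\gamma\sig\mu}{2}$, and bound the quadratic term by $\alpha\|\hat{\Delta\xb}\circ\hat{\Delta\sbb}\|_\infty$; for $\alpha \le \tfrac{\gamma\sig\mu}{4\delxdelsnorm}$ the remaining quantity is at least $\alpha\tfrac{\gamma\sig\mu}{4}\one_\dimtwo \ge \zero$. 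Then, for the residual condition, I would use linearity of the residual map: from eqns.~\eqref{eq:iip_3_1} and~\eqref{eq:iip_3_2}, $\Ab\hat{\Delx} = -\rb_p$ and $\Ab^\ts\hat{\Delta\yb} + \hat{\Delta\sbb} = -\rb_d$, so $\rb(\alpha) = (1-\alpha)\rb$, hence $\|\rb(\alpha)\|_2 = (1-\alpha)\|\rb\|_2$. Combined with the inductive hypothesis $\|\rb\|_2/\|\rb^0\|_2 \le \mu/\mu_0$, this reduces the required inequality to $(1-\alpha)\mu \le \munew$, i.e.\ $\munew \ge (1-\alpha)\mu$; this I would verify from the duality-measure identity $\munew = [1-\alpha(1-\sig)]\mu - \alpha\tfrac{\vb^\ts\one_\dimtwo}{\dimtwo} + \tfrac{\alpha^2}{\dimtwo}\hat{\Delx}^\ts\hat{\Dels}$, bounding $|\vb^\ts\one_\dimtwo/\dimtwo| \le \tfrac{\gamma\sig\mu}{4} < \tfrac{\sig\mu}{4}$ and noting the quadratic term is nonnegative whenever $\hat{\Delx}^\ts\hat{\Dels} \ge 0$ (which follows since $\hat{\Delx}^\ts\hat{\Dels} = -(\rb_p^\ts\hat{\Dely} + \rb_d^\ts\hat{\Delx})$ has a sign controlled by the same argument as in~\citep{Mon03}), so $\munew \ge (1-\alpha)\mu + \tfrac{3\alpha\sig\mu}{4} > (1-\alpha)\mu$.

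Finally I would close off strict feasibility: $(\xnew,\snew) > \zero$ follows the same contradiction argument as in Lemma~\ref{lemmamaxalpha1_f} — $\xnew\circ\snew > \zero$ (from the proximity bound plus $\munew > 0$), so each coordinate pair has a common sign; assuming both are negative, pre-multiplying the coordinate equations and adding gives $v_i > \tfrac{2-\alpha}{\alpha}x_i s_i + \sig\mu > \sig\mu$, contradicting $v_i \le \|\vb\|_2 \le \tfrac{\gamma\sig\mu}{4} < \sig\mu$. The main obstacle is bookkeeping rather than a genuinely new difficulty: one must be careful that $\hat{\Delx}^\ts\hat{\Dels}$, which vanished in the feasible analysis, now contributes to $\munew$ and must be shown to have the right sign (or at least not to spoil $\munew \ge (1-\alpha)\mu$), and one must correctly thread the residual-shrinkage invariant through the neighborhood definition. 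Since this is quoted verbatim as Lemma~3.5 of~\citep{Mon03}, I would, in the final write-up, simply cite that reference for the proof and note that our only modification — the specific sketching-based $\vb$ — still satisfies the hypothesis $\|\vb\|_2 \le \tfrac{\gamma\sig\mu}{4}$ by Lemma~\ref{lem:conouter} (whose infeasible-case analogue is established via Lemma~\ref{thm:boundf}).
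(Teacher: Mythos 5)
Your closing plan—state the lemma and cite \citep{Mon03} for its proof, checking only that the sketching-based $\vb$ still satisfies $\|\vb\|_2\le\tfrac{\gamma\sigma\mu}{4}$—is exactly what the paper does: this lemma is reproduced from \citep{Mon03} without proof. Your residual bookkeeping is also correct: eqns.~\eqref{eq:iip_3_1}--\eqref{eq:iip_3_2} give $\rb(\alpha)=(1-\alpha)\rb$, so the residual condition in $\neigh$ reduces to $\munew\ge(1-\alpha)\mu$, and the positivity-by-contradiction step carries over verbatim since eqn.~\eqref{eq:iip_3_3} is elementwise identical to the feasible system.

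The sketch does, however, contain one genuine flaw, and it sits exactly where the infeasible case differs from Lemma~\ref{lemmamaxalpha1_f}. You assert $\hat{\Delx}^\ts\hat{\Dels}\ge 0$, justified by the identity $\hat{\Delx}^\ts\hat{\Dels}=-(\rb_p^\ts\hat{\Dely}+\rb_d^\ts\hat{\Delx})$ and ``the same argument as in \citep{Mon03}.'' The identity has a sign error (eqns.~\eqref{eq:iip_3_1}--\eqref{eq:iip_3_2} give $\hat{\Delx}^\ts\hat{\Dels}=\rb_p^\ts\hat{\Dely}-\rb_d^\ts\hat{\Delx}$), and, more importantly, no argument controls its sign: in the infeasible setting this inner product is generically sign-indefinite (it vanishes only when $\rb=\zero$), which is precisely the classical complication of infeasible IPM analysis. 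The lemma survives without any sign information: since $\abs{\hat{\Delx}^\ts\hat{\Dels}}/\dimtwo\le\delxdelsnorm$, the step-size cap $\alpha\le\tfrac{\gamma\sigma\mu}{4\delxdelsnorm}$ gives $\abs{\tfrac{\alpha^2}{\dimtwo}\hat{\Delx}^\ts\hat{\Dels}}\le\tfrac{\alpha\gamma\sigma\mu}{4}$, hence $\munew\ge(1-\alpha)\mu+\alpha\sigma\mu\left(1-\tfrac{\gamma}{2}\right)>(1-\alpha)\mu>0$. The same care is needed in the proximity step, which you describe as ``exactly as in the feasible proof'': expanding $(1-\gamma)\munew\one_\dimtwo$ now produces an extra term $-(1-\gamma)\tfrac{\alpha^2}{\dimtwo}\hat{\Delx}^\ts\hat{\Dels}\,\one_\dimtwo$, so the quadratic contribution to control is $\alpha^2\nbr{\hat{\Delx}\circ\hat{\Dels}-(1-\gamma)\tfrac{\hat{\Delx}^\ts\hat{\Dels}}{\dimtwo}\one_\dimtwo}_\infty\le(2-\gamma)\alpha^2\delxdelsnorm$, which under the same cap still leaves the nonnegative margin $\alpha\tfrac{\gamma^2\sigma\mu}{4}$. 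With these two repairs your outline coincides with the actual argument behind Lemma~3.5 of \citep{Mon03}.
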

\noindent We now provide a lower bound on the values of $\alphaused$ and the corresponding $\mu(\alphaused)$; see Algorithm~\ref{algo:ipm_i}.
\begin{lemma}[Lemma 3.6 of~\citep{Mon03}]\label{lemmaminalpha1}
In each iteration of Algorithm~\ref{algo:ipm_i}, if $\|\vb\|_2\le\frac{\gamma\sigma\mu}{4}$, then the step size $\alphaused$ satisfies
\begin{flalign} \label{alphalemma2}
\alphaused \geq \min \left\{ 1,  \frac{ \min   \{ \gamma \sigma, (1 - \frac{5}{4} \sigma) \} \mu } {4  \|\hat{\Delta x} \circ \hat{\Delta s}\|_\infty}  \right\}
	\end{flalign}
	and
	\begin{flalign} \label{mulemma2}
	\mu(\alphaused)= \Big[ 1 - \frac{\alphaused}{2} (1-\frac{5}{4}\sig) \Big] \mu.
\end{flalign}
\end{lemma}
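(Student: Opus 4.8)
The plan is to study the one-variable quadratic $g(\alpha):=(\xb+\alpha\hat{\Delx})^{\ts}(\sbb+\alpha\hat{\Dels})=\dimtwo\,\mu(\alpha)$ along the computed search direction, using that $\alphaused$ is by construction the exact minimizer of $g$ over $[0,\alphamax]$. First I would make $g$ explicit: left-multiplying the last block of eqn.~\eqref{eq:iip_3} by $\one_\dimtwo^{\ts}$ gives $\xb^{\ts}\hat{\Dels}+\sbb^{\ts}\hat{\Delx}=-\dimtwo\mu+\sigma\dimtwo\mu-\vb^{\ts}\one_\dimtwo$, hence
\[
g(\alpha)=\dimtwo\mu-\alpha\,b+\alpha^{2}c,\qquad b:=(1-\sigma)\dimtwo\mu+\vb^{\ts}\one_\dimtwo,\qquad c:=\hat{\Delx}^{\ts}\hat{\Dels}.
\]
In contrast to the feasible case, eqns.~\eqref{eq:iip_3_1}--\eqref{eq:iip_3_2} now read $\Ab\hat{\Delx}=-\rb_p$ and $\Ab^{\ts}\hat{\Dely}+\hat{\Dels}=-\rb_d$, so $c=\rb_p^{\ts}\hat{\Dely}-\hat{\Delx}^{\ts}\rb_d$ need not vanish; this quadratic (rather than linear) behaviour of $g$ is exactly what produces the factor $\tfrac12$ in eqn.~\eqref{mulemma2}. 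Using the hypothesis $\|\vb\|_2\le\frac{\gamma\sigma\mu}{4}$ together with $\abs{\vb^{\ts}\one_\dimtwo}\le\dimtwo\|\vb\|_2$, $\gamma<1$, and $\sigma\in(0,\tfrac45)$, I would show $b\ge\dimtwo\mu\,(1-\tfrac{5}{4}\sigma)>0$; therefore $g'(0)=-b<0$, so $\alphaused>0$ and $g$ does not attain its minimum at the left endpoint. I would also invoke Lemma~\ref{lemmamaxalpha1} (Lemma~3.5 of~\citep{Mon03}) to record $\alphamax\ge\min\{1,\tfrac{\gamma\sigma\mu}{4\delxdelsnorm}\}$.

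Next comes a short case split on the sign of $c$, which both locates $\alphaused$ and bounds $g(\alphaused)$. If $c\le0$, then $g$ is concave and $g'$ is nonincreasing with $g'(0)<0$, so $g$ strictly decreases on all of $[0,\alphamax]$ and $\alphaused=\alphamax$; then $g(\alphaused)=\dimtwo\mu-b\alphaused+c\alphaused^{2}\le\dimtwo\mu-b\alphaused\le\dimtwo\mu-\tfrac{b\alphaused}{2}$. If $c>0$, then $g$ is a convex parabola with unconstrained minimizer $\alpha^{\sharp}:=b/(2c)>0$, so $\alphaused=\min\{\alphamax,\alpha^{\sharp}\}$: when $\alphaused=\alpha^{\sharp}$ one computes $g(\alpha^{\sharp})=\dimtwo\mu-\tfrac{b\alphaused}{2}$ exactly, and when $\alphaused=\alphamax<\alpha^{\sharp}$ the inequality $c\alphamax<\tfrac{b}{2}$ yields $g(\alphamax)<\dimtwo\mu-\tfrac{b\alphamax}{2}$. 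In every case $\mu(\alphaused)=g(\alphaused)/\dimtwo\le\mu-\tfrac{b\alphaused}{2\dimtwo}\le\bigl[1-\tfrac{\alphaused}{2}(1-\tfrac{5}{4}\sigma)\bigr]\mu$ by the lower bound on $b$, which gives eqn.~\eqref{mulemma2} (equality holds with Monteiro's precise step rule; the displayed inequality is what the convergence proof actually uses).

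Finally, for the step-size lower bound I would combine three facts: $\alphaused\ge\min\{\alphamax,\alpha^{\sharp}\}$ (with the convention $\alpha^{\sharp}=+\infty$ when $c\le0$), the bound $\alphamax\ge\min\{1,\tfrac{\gamma\sigma\mu}{4\delxdelsnorm}\}$ from Lemma~\ref{lemmamaxalpha1}, and $\alpha^{\sharp}=\tfrac{b}{2c}\ge\tfrac{(1-\frac{5}{4}\sigma)\dimtwo\mu}{2\dimtwo\,\delxdelsnorm}\ge\tfrac{(1-\frac{5}{4}\sigma)\mu}{4\delxdelsnorm}$, where I used $0<c=\one_\dimtwo^{\ts}(\hat{\Delx}\circ\hat{\Dels})\le\dimtwo\,\delxdelsnorm$. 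Taking the minimum of the three yields $\alphaused\ge\min\{1,\tfrac{\min\{\gamma\sigma,\,1-\frac{5}{4}\sigma\}\mu}{4\delxdelsnorm}\}$, i.e.\ eqn.~\eqref{alphalemma2}. The only step requiring real care beyond the textbook feasible analysis is tracking the sign and magnitude of $c=\hat{\Delx}^{\ts}\hat{\Dels}$, which no longer vanishes in the infeasible setting; once that is pinned down, the rest is elementary calculus on the parabola $g$ plus the already-proved neighborhood bound of Lemma~\ref{lemmamaxalpha1}.
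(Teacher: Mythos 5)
Your proof is correct, but note that the paper itself offers no proof of this statement: it is quoted verbatim as Lemma~3.6 of \citep{Mon03} and used as a black box (the surrounding text explicitly reproduces Monteiro--O'Neal's lemmas ``without their proofs''). Your reconstruction --- writing $\dimtwo\,\mu(\alpha)=g(\alpha)=\dimtwo\mu-\alpha b+\alpha^{2}c$ with $b=(1-\sigma)\dimtwo\mu+\vb^\ts\one_\dimtwo$ and $c=\hat{\Delx}^\ts\hat{\Dels}$, deducing $b\ge(1-\tfrac{5}{4}\sigma)\dimtwo\mu>0$ from $\|\vb\|_2\le\tfrac{\gamma\sigma\mu}{4}$, splitting on the sign of $c$ so that $\alphaused=\alphamax$ when $c\le 0$ and $\alphaused=\min\{\alphamax,b/(2c)\}$ when $c>0$, and then combining the bound on $\alphamax$ from Lemma~\ref{lemmamaxalpha1} with $0<c\le \dimtwo\,\delxdelsnorm$ --- is the standard long-step argument and every step checks out; invoking Lemma~\ref{lemmamaxalpha1} is legitimate and non-circular, since it too is an independently quoted result. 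The one discrepancy is eqn.~\eqref{mulemma2}: you establish $\mu(\alphaused)\le\bigl[1-\tfrac{\alphaused}{2}(1-\tfrac{5}{4}\sigma)\bigr]\mu$, whereas the statement displays an equality. The equality cannot hold in general for the exact minimizer of $g$ over $[0,\alphamax]$: in your own case $c>0$, $\alphaused=b/(2c)$, one gets $\mu(\alphaused)=\mu-\tfrac{\alphaused}{2}\bigl[(1-\sigma)\mu+\vb^\ts\one_\dimtwo/\dimtwo\bigr]$, which matches the displayed right-hand side only for special $\vb$. So the ``$=$'' is evidently a typo for ``$\le$'' (this inequality is exactly what the convergence argument of Lemma~\ref{theoremOuter} uses), and your version is the correct one; your parenthetical suggestion that equality would follow from ``Monteiro's precise step rule'' is not accurate, but it is harmless since you rely only on the inequality.
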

\noindent At this point, we have provided a lower bound (eqn.~(\ref{alphalemma2})) for the allowed values of the step size $\alphaused$. Next, we show that this lower bound is bounded away from zero. From eqn.~(\ref{alphalemma2}) this is equivalent to showing that $\delxdelsnorm$ is bounded.
\begin{lemma}[Lemma 3.7 of~\citep{Mon03} (slightly modified)]\label{lemmaminalpha2}
Let $(\xb^{0},\yb^{0},\sbb^{0})$ be the initial point with $(\xb^{0},\sbb^{0})>0$ and $(\xb^0,\sbb^0) \geq (\xb^{*}, \sbb^{*}) $ for some $(\xb^{*},\yb^{*},\sbb^{*}) \in \mathcal{S}$. Let $(\xb,\yb,\sbb) \in \neigh$ be such that $\rb = \eta \rb^{0}$ for some $\eta \in [0,1]$ and $\|\vb\|_2\le\frac{\gamma\sigma\mu}{4}$. Then, the search direction  $(\hat{\Delta \xb},\hat{\Delta \yb},\hat{\Delta \sbb})$ produced by Algorithm \ref{algo:ipm_i} at each iteration satisfies
\begin{flalign} \label{normsmax}
\max \{\| \Db^{-1} \hat{\Delta \xb}\|_2, \| \Db \hat{\Delta \sbb}\|_2 \} \le~\left(1+\frac{\sigma^2}{1-\gamma}-2\sigma\right)^{\nicefrac{1}{2}}\sqrt{\dimtwo\mu}
+\frac{6\dimtwo}{\sqrt{(1-\gamma)}}\sqrt{\mu}+\frac{\gamma\sigma}{4\,\sqrt{1-\gamma}}\sqrt{\mu}.
\end{flalign}
\end{lemma}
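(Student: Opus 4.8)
\textbf{Proof plan for Lemma~\ref{lemmaminalpha2}.}
The plan is to mimic the structure of the proof of Lemma~\ref{lemmaminalpha2_f} from the feasible case, but now carrying along the residual terms $\rb_p$ and $\rb_d$ that are no longer zero. First I would left-multiply eqn.~\eqref{eq:iip_3_3} by $(\Xb\Sb)^{-1/2}$ to obtain
\begin{flalign*}
\Db^{-1}\hat{\Delta\xb}+\Db\hat{\Delta\sbb}=-(\Xb\Sb)^{1/2}\one_\dimtwo+\sigma\mu(\Xb\Sb)^{-1/2}\one_\dimtwo-(\Xb\Sb)^{-1/2}\vb\,,
\end{flalign*}
exactly as in eqn.~\eqref{eq:Dint}. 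The crucial difference from the feasible case is that now $(\Db^{-1}\hat{\Delta\xb})^\ts(\Db\hat{\Delta\sbb})=\hat{\Delta\xb}^\ts\hat{\Delta\sbb}$ is \emph{not} zero: from eqns.~\eqref{eq:iip_3_1} and \eqref{eq:iip_3_2} we instead get $\hat{\Delta\xb}^\ts\hat{\Delta\sbb}=-\hat{\Delta\xb}^\ts(\rb_d+\Ab^\ts\hat{\Delta\yb})=-\hat{\Delta\xb}^\ts\rb_d+\rb_p^\ts\hat{\Delta\yb}$, using $\Ab\hat{\Delta\xb}=-\rb_p$. So the orthogonality that made Lemma~\ref{prop:wri} immediately applicable is lost, and one must instead bound $\|\Db^{-1}\hat{\Delta\xb}\|_2$ and $\|\Db\hat{\Delta\sbb}\|_2$ separately rather than via the sum.

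The key step is to decompose $\hat{\Delta\xb}$ and $\hat{\Delta\sbb}$ into a ``feasible part'' and a ``residual-correction part.'' Following~\citet{Mon03}, I would write the solution to the system \eqref{eq:iip_3} as a sum of a particular solution handling the residuals and a homogeneous solution satisfying the zero-residual version; then the homogeneous part is bounded as in the feasible case (contributing the $(1+\sigma^2/(1-\gamma)-2\sigma)^{1/2}\sqrt{n\mu}$ term — note the extra $-2\sigma$ coming from the cross term $2\one_\dimtwo^\ts(\vb-\sigma\mu\one_\dimtwo)$ expansion, just as in eqn.~\eqref{eq:dineq}), while the particular part is controlled using the bounds on residuals from Lemma~\ref{lem:ineq}. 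Concretely, the residual-correction contribution involves terms like $\eta\|(\Xb\Sb)^{-1/2}\Sb(\xb^0-\xb^*)\|_2$ and $\eta\|(\Xb\Sb)^{-1/2}\Xb\rb_d^0\|_2$, which by eqns.~\eqref{eq:ineq2} and \eqref{eq:ineq3} together with $x_is_i\ge(1-\gamma)\mu$ are each $\Ocal(n\sqrt{\mu/(1-\gamma)})$, yielding the $\tfrac{6n}{\sqrt{1-\gamma}}\sqrt{\mu}$ term. The $\vb$-dependent term contributes $\tfrac{\gamma\sigma}{4\sqrt{1-\gamma}}\sqrt{\mu}$ via $\|(\Xb\Sb)^{-1/2}\vb\|_2\le\|\vb\|_2/\sqrt{(1-\gamma)\mu}\le\tfrac{\gamma\sigma\mu}{4}/\sqrt{(1-\gamma)\mu}$, using the hypothesis $\|\vb\|_2\le\tfrac{\gamma\sigma\mu}{4}$. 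Summing the three contributions by the triangle inequality gives the stated bound.

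I expect the main obstacle to be the bookkeeping in the decomposition step: unlike the feasible case where $\hat{\Delta\xb}^\ts\hat{\Delta\sbb}=0$ lets one invoke Lemma~\ref{prop:wri} directly on $\Db^{-1}\hat{\Delta\xb}+\Db\hat{\Delta\sbb}$, here one needs to carefully split the Newton direction so that the quadratic-form argument still applies to the homogeneous component, and then separately estimate the particular component using the structural facts from Lemma~\ref{lem:ineq}. Getting the constants to line up — in particular verifying that the residual terms really do collapse to the clean $6n/\sqrt{1-\gamma}$ coefficient after applying eqns.~\eqref{eq:ineq}--\eqref{eq:ineq3} — is the delicate part; everything else is triangle inequality, submultiplicativity, and the neighborhood bound $x_is_i\ge(1-\gamma)\mu$. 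Since this lemma is cited as ``Lemma 3.7 of~\citep{Mon03} (slightly modified),'' I would additionally check which inequalities change under our preconditioner $\Qb^{-1/2}$ versus the maximum-weight-basis preconditioner of~\citep{Mon03}; but in fact $\Qb^{-1/2}$ enters only through the already-established bound $\|\Qb^{-1/2}\Ab\Db\|_2\le\sqrt2$ (eqn.~\eqref{eq:pdcond1}) used inside Lemma~\ref{thm:boundf}, and the present lemma is about the norms of $\Db^{-1}\hat{\Delta\xb}$ and $\Db\hat{\Delta\sbb}$, which depend on $\Qb$ only implicitly through $\hat{\Delta\yb}$, so the modification is minor.
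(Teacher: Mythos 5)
Your plan follows the same route as the paper: the paper proves this lemma by deferring the residual-decomposition argument to Lemma 3.7 of \citep{Mon03} and only verifying the single step that changes under the new construction of $\vb$, namely $\|(\Xb\Sb)^{-\nicefrac{1}{2}}\vb\|_2\le\|\vb\|_2/\sqrt{(1-\gamma)\mu}\le\frac{\gamma\sigma}{4\sqrt{1-\gamma}}\sqrt{\mu}$, which is exactly the bound you identify. Your reconstruction of the deferred part (restoring the orthogonality $\hat{\Delta\xb}^\ts\hat{\Delta\sbb}=0$ via a feasible/residual split and controlling the correction terms with Lemma~\ref{lem:ineq}) matches the cited argument, so the proposal is correct and essentially identical in approach.
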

\noindent We should note here that the above lemma is slightly different from~\citep[Lemma 3.7]{Mon03}. Indeed, \citep[Lemma 3.7]{Mon03} actually proves the following bound:
\begin{flalign} \label{normsmax2}
\max \{\| \Db^{-1} \hat{\Delta \xb}\|_2, \| \Db \hat{\Delta \sbb}\|_2 \}
\le~\left(1+\frac{\sigma^2}{1-\gamma}-2\sigma\right)^{\nicefrac{1}{2}}\sqrt{\dimtwo\mu}+\frac{6\dimtwo}{\sqrt{(1-\gamma)}}\sqrt{\mu}+\frac{\gamma\sigma}{4\sqrt{n}}\sqrt{\mu}\,.
\end{flalign}
Notice that there is slight difference in the last term in the right-hand side, which does not asymptotically change the bound. The underlying reason for this difference is the fact that~\citep{Mon03} constructed the vector $\vb$ differently. In our case, we need to bound $\|(\Xb\Sb)^{-1/2}\vb\|_2$, which we do as follows:
\begin{flalign}
\| (\Xb \Sb)^{-1/2}\vb\|_2\le\| (\Xb \Sb)^{-1/2}\|_2\,\|\vb\|_2\le \frac{1}{\min_i \sqrt{x_is_i}}\,\frac{\gamma\sigma\mu}{4}\label{eq:db1}\,,
\end{flalign}
where in the above expression we use the fact that $\| (\Xb \Sb)^{-1/2}\|_2=\frac{1}{\min_i\sqrt{x_is_i}}$. Now as $(\xb,\yb,\sbb)\in\mathcal{N}(\gamma)$, we further have $x_is_i\ge (1-\gamma)\mu$ for all $i=1\ldots n$. Combining this with eqn.~\eqref{eq:db1}, we get
\begin{flalign}
\| (\Xb \Sb)^{-1/2}\vb\|_2\le\frac{\gamma\sigma\mu}{4\sqrt{(1-\gamma)\mu}}=\frac{\gamma\sigma}{4\,\sqrt{1-\gamma}}\sqrt{\mu}.\label{eq:bd2}
\end{flalign}
On the other hand,~\citep{Mon03} had a different construction of $\vb$ for which $\|(\Xb\Sb)^{-1/2}\vb\|_2=\|\tilde{\fb}^{(t)}\|_2$ holds. Therefore they had the following bound:
$$\|(\Xb\Sb)^{-1/2}\vb\|_2=\|\tilde{\fb}^{(t)}\|_2\le\frac{\gamma\sigma}{4\sqrt{n}}\sqrt{\mu}.$$
The next lemma bounds the number of iterations that Algorithm~\ref{algo:ipm_i} needs when started with an infeasible point that is sufficiently positive.
\begin{lemma}[Theorem 2.6  of \citep{Mon03}] \label{theoremOuter}
Assume that the constants $\gamma$ and $\sigma$ are such that $\max\{\gamma^{-1},(1-\gamma)^{-1},\sigma^{-1},(1-\frac{5}{4}\sigma)^{-1}\}=\Ocal(1)$. Let the initial point $(\xb^{0},\sbb^{0},\yb^{0})$ satisfy $(\xb^{0}, \sbb^{0}) \geq (\xb^{*}, \sbb^{*} )$ for some $(\xb^{*}, \sbb^{*},\yb^{*}) \in \mathcal{S}$ and $\|\vb\|_2\le\frac{\gamma\sigma\mu}{4}$. Algorithm \ref{algo:ipm_i} generates an iterate $(\xb^{k}, \sbb^{k}, \yb^{k})$ satisfying $\mu_k \leq \epsilon \mu_0$ and $\| \rb^{k}\|_2 \leq \epsilon \| \rb^{0}\|_2$ after
$\mathcal{O}(\dimtwo^2 \log{\nicefrac{1}{\epsilon}})$ iterations.
\end{lemma}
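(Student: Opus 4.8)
The plan is to mirror the proof of Lemma~\ref{theoremOuter_f} (the feasible analogue), replacing the feasible direction bound by the infeasible one of Lemma~\ref{lemmaminalpha2}; this is essentially the route of Theorem~2.6 of~\citep{Mon03}, which I would adapt to our sketching-based construction of $\vb$. First I would show, by induction on $k$, that every iterate $(\xb^k,\yb^k,\sbb^k)$ generated by Algorithm~\ref{algo:ipm_i} lies in $\neigh$ and satisfies $\rb^k=\eta_k\,\rb^{0}$ for some $\eta_k\in[0,1]$: the base case is the hypothesis on the initial point, and the inductive step is exactly Lemma~\ref{lemmamaxalpha1}, whose hypothesis $\|\vb\|_2\le\gamma\sigma\mu/4$ is assumed throughout the statement. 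Using Lemma~\ref{lemmamaxalpha1} here is legitimate because, just as in Lemma~\ref{lem:fullrankR}, our perturbation vector of eqn.~\eqref{eq:compv} satisfies the invariant~\eqref{eq:addl_i} exactly, so the residual-update identity of~\citep[Lemma~3.3]{Mon03} applies unchanged and keeps $\rb^k$ on the segment between $\zero$ and $\rb^{0}$ (as required by the definition of $\neigh$).

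Next I would bound the step size from below. From eqns.~\eqref{eq:iip_3_1}--\eqref{eq:iip_3_2} we get $\hat{\Delx}^\ts\hat{\Dels}=0$, hence $\hat{\Delta\xb}\circ\hat{\Delta\sbb}=(\Db^{-1}\hat{\Delta\xb})\circ(\Db\hat{\Delta\sbb})$ and $\delxdelsnorm\le\|\Db^{-1}\hat{\Delta\xb}\|_2\,\|\Db\hat{\Delta\sbb}\|_2$. Plugging in Lemma~\ref{lemmaminalpha2} and using the hypothesis $\max\{\gamma^{-1},(1-\gamma)^{-1},\sigma^{-1},(1-\tfrac54\sigma)^{-1}\}=\Ocal(1)$, the three summands on the right of~\eqref{normsmax} are $\Ocal(\sqrt{\dimtwo\mu})$, $\Ocal(\dimtwo\sqrt{\mu})$, and $\Ocal(\sqrt{\mu})$ respectively, so the whole bound is $\Ocal(\dimtwo\sqrt{\mu})$; squaring gives $\delxdelsnorm=\Ocal(\dimtwo^2\mu)$. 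Feeding this into the lower bound~\eqref{alphalemma2} of Lemma~\ref{lemmaminalpha1}, and using that $\min\{\gamma\sigma,1-\tfrac54\sigma\}$ is bounded below by a positive constant, yields $\alphaused\ge\min\{1,\,\Theta(\dimtwo^{-2})\}$; since the implied quantity is $<1$ once $\dimtwo$ is large (and the claimed iteration count is trivial otherwise), $\alphaused\ge\beta/\dimtwo^2$ for a constant $\beta$ depending only on $\gamma$ and $\sigma$, computed exactly as in the derivation of~\eqref{eqn:pdbeta1}.

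Combining this with the duality-measure update~\eqref{mulemma2}, $\mu_{k+1}=\bigl[1-\tfrac{\alphaused}{2}(1-\tfrac54\sigma)\bigr]\mu_k\le\bigl(1-\beta/\dimtwo^2\bigr)\mu_k$ (after absorbing the $(1-\tfrac54\sigma)$ factor into the constant). Iterating, $\mu_k\le(1-\beta/\dimtwo^2)^k\mu_0$, so $\mu_k\le\epsilon\mu_0$ as soon as $k\ge(\dimtwo^2/\beta)\log(1/\epsilon)=\Ocal(\dimtwo^2\log\tfrac1\epsilon)$. The residual bound is then automatic: since the iterate lies in $\neigh$, $\|\rb^k\|_2/\|\rb^0\|_2\le\mu_k/\mu_0\le\epsilon$, so $\|\rb^k\|_2\le\epsilon\|\rb^0\|_2$ at the same iteration.

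The main difficulty is not the arithmetic but the bookkeeping required to reuse~\citep{Mon03}'s infeasible-IPM machinery with our modified ingredients. One must check that (i)~our sketching-based $\vb$ preserves both the residual-segment invariant and the neighborhood, so that Lemmas~\ref{lemmamaxalpha1}, \ref{lemmaminalpha1}, and~\ref{lemmaminalpha2} apply as stated; and (ii)~the slightly different last term in our Lemma~\ref{lemmaminalpha2}, namely $\tfrac{\gamma\sigma}{4\sqrt{1-\gamma}}\sqrt{\mu}$ in place of~\citep{Mon03}'s $\tfrac{\gamma\sigma}{4\sqrt{\dimtwo}}\sqrt{\mu}$, does not change the asymptotics, which holds because that term is dominated by the $\Ocal(\dimtwo\sqrt{\mu})$ term in either case. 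The reason the outer-iteration count degrades from $\Ocal(\dimtwo)$ in the feasible case to $\Ocal(\dimtwo^2)$ here is exactly the $6\dimtwo(1-\gamma)^{-1/2}\sqrt{\mu}$ infeasibility term in~\eqref{normsmax}, which vanishes when $\rb\equiv\zero$.
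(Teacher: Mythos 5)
The paper itself never proves this lemma: it is imported verbatim as Theorem~2.6 of \citep{Mon03} (the surrounding text states that these lemmas are reproduced ``without their proofs''), and is then combined with Lemma~\ref{lem:conouter} to give Theorem~\ref{thm:1}. Your reconstruction follows exactly the route one would expect and that the paper itself uses for the feasible analogue, Lemma~\ref{theoremOuter_f}: induction via Lemma~\ref{lemmamaxalpha1} (plus Lemma~3.3 of \citep{Mon03}) to keep iterates in $\neigh$ with $\rb^k=\eta_k\rb^0$, the bound $\delxdelsnorm=\Ocal(n^2\mu)$ from Lemma~\ref{lemmaminalpha2}, the step-size lower bound of Lemma~\ref{lemmaminalpha1}, the resulting contraction $\mu_{k+1}\le(1-\beta/n^2)\mu_k$, and the residual bound read off from the infeasible definition of $\neigh$. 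The arithmetic is right, and your closing remark correctly identifies the $6n(1-\gamma)^{-1/2}\sqrt{\mu}$ infeasibility term in eqn.~\eqref{normsmax} as the source of the degradation from $\Ocal(n)$ to $\Ocal(n^2)$ outer iterations, as well as the harmlessness of the modified $\tfrac{\gamma\sigma}{4\sqrt{1-\gamma}}\sqrt{\mu}$ term.

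One assertion in your argument is false, though fortunately unused in any essential way: in the infeasible setting $\hat{\Delx}^\ts\hat{\Dels}\neq 0$ in general. From eqns.~\eqref{eq:iip_3_1}--\eqref{eq:iip_3_2} one gets $\Ab\hat{\Delx}=-\rb_p$ and $\Ab^\ts\hat{\Dely}+\hat{\Dels}=-\rb_d$, hence $\hat{\Delx}^\ts\hat{\Dels}=\rb_p^\ts\hat{\Dely}-\rb_d^\ts\hat{\Delx}$, which vanishes only when the residuals do; the orthogonality you cite is a feature of the feasible system \eqref{eq:iip_3_f} only. Neither consequence you draw from it actually needs it: $\hat{\Delx}\circ\hat{\Dels}=(\Db^{-1}\hat{\Delx})\circ(\Db\hat{\Dels})$ holds trivially because $\Db$ is diagonal, and $\delxdelsnorm\le\|\Db^{-1}\hat{\Delx}\|_2\,\|\Db\hat{\Dels}\|_2$ follows from an elementwise Cauchy--Schwarz bound. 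The nonzero cross term does matter inside the duality-measure update --- it is part of why eqn.~\eqref{mulemma2} in Lemma~\ref{lemmaminalpha1} takes the form it does --- but since you invoke that lemma as a black box, your chain of inequalities stands once the spurious orthogonality claim is deleted.
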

\noindent Finally, Theorem~\ref{thm:1} follows from Lemmas~\ref{lem:conouter} and~\ref{theoremOuter}.

\section{Additional notes on experiments}\label{app:experiments}

\subsection{Support Vector Machines (SVMs)}
\label{app:svm}

The classical $\ell_1$-SVM problem is as follows. We consider the task of fitting an SVM to data pairs $S = \{ (x_i, y_i)\}_{i=1}^m$, where $x_i \in \mathbb{R}^n$ and $y_i \in \{ + 1, - 1\}$. Here, $m$ is the number of training points, and $n$ is the feature dimension. The SVM problem with an $\ell_1$ regularizer has the following form:
\begin{align} \label{svm2}
\underset{{ w  }}{\operatorname{minimize}}  \quad &  \|w \|_1  \\
\text{subject to} \quad& y_i (w^T x_i + b') \geq 1, \quad i=1\ldots m.  \nonumber 
\end{align}
This problem can be written as an LP by introducing the variables $w^+$ and $w^-$, where $w = w^+ - w^-$. The objective becomes  $\sum_{j=1}^n w^+_j + w^-_j$, and we constrain $w^+_i \geq 0$ and $w^-_i \geq 0$. Note that the size of the constraint matrix in the LP becomes $m \times (2n +1)$.

\subsection{Random data}
\label{app:rand}
We generate random synthetic instances of linear programs as follows. To generate $\Ab \in \mathbb{R}^{m \times n}$, we set ${a}_{ij} \sim_{i.i.d.}U(0,1)$ with probability $p$ and ${a}_{ij} = 0$ otherwise. We then add min$\{m,n\}$  i.i.d. draws from $U(0,1)$ to the main diagonal, to ensure each row of $\Ab$ has at least one nonzero entry. We set $\bb = \Ab \xb + 0.1\zb$, where $\xb$ and $\zb$ are random vectors drawn from $N(0,1)$. Finally, we set $c \sim N(0,1)$.

\subsection{Real-world data}
\label{app:real}

We used a gene expression cancer RNA-Sequencing dataset, taken from the UCI Machine Learning repository. It is part of the RNA-Seq (HiSeq) PANCAN data set~\citep{Weinstein2013} and is a random extraction of gene expressions from patients who have different types of tumors: BRCA, KIRC, COAD, LUAD, and PRAD. We considered the binary classification task of identifying BRCA versus other types.

We also used the DrivFace dataset taken from the UCI Machine Learning repository. In the DrivFace dataset, each sample corresponds to an image of a human subject, taken while driving in real scenarios. Each image is labeled as corresponding to one of three possible gaze directions: left, straight, or right. We considered the binary classification task of identifying two different gaze directions: straight, or to either side (left or right).

\subsection{Feasible starting point}
\label{app:feasible_start}

We construct a linear program to find a primal feasible starting point $(\xb_0, \sbb_0, \yb_0)$ such that $\Ab\xb_0 = \bb$.  Without loss of generality, assume that all entries of $\bb$ are positive and let $\zb \in \R{\dimone}$.  Then, $(\xb, \zb)$ is an optimal solution to the following linear program when $\zb = \zero$ and $\Ab\xb = \bb$.
\begin{flalign*}
	\min\,\zb\,,\text{ subject to }\Ab\xb + \Ib \zb =\bb\,,\xb, \zb \ge \zero\,,
\end{flalign*}

\end{appendices}

\end{document}